\definecolor{Greenish}{RGB}{34, 139 , 34}
\definecolor{Blueish}{RGB}{39,64, 139}
\numberwithin{equation}{section}
\newtheorem{lem}{Lemma}[section]
\newtheorem{cor}{Corollary}[section]
\newtheorem{prop}{Proposition}[section]
\newtheorem{ass}{Assumption}[section]
\newtheorem{theorem}{Theorem}
\newcommand{\bR}{\mathbb{R}}
\newcommand{\cG}{\mathcal{G}}
\newcommand{\cE}{\mathcal{E}}
\newcommand{\cF}{\mathcal{F}}
\newcommand{\cI}{\mathcal{I}}
\newcommand{\h}{{\rm h}}
\newcommand{\ho}{{\rm h_{osc}}}
\newcommand{\psio}{\psi_{\rm osc}}
\newcommand{\Ho}{{\rm H_{osc}}}
\newcommand{\rp}{{\rm p}}
\newcommand{\rv}{{\rm v}}
\newcommand{\cg}{g}
\newcommand{\ch}{h}
\newcommand{\cv}{\rv_{\rm crit}}
\newcommand{\rH}{{{\rm H}}_\alpha}
\newcommand{\rh}{{{\rm h}}}
\DeclareMathOperator{\dist}{dist}
\begin{document}

\title[]{Traveling waves and effective mass for the regularized Landau-Pekar equations}

\author{Simone Rademacher}
\address{Department of Mathematics, LMU Munich, Theresienstrasse 39, 80333 Munich}

\date{\today}

\begin{abstract}
We consider the regularized Landau-Pekar equations with positive speed of sound and prove the existence of subsonic traveling waves. We provide a definition of the effective mass for the regularized Landau-Pekar equations based on the energy-velocity expansion of subsonic traveling waves. Moreover we show that this definition of the effective mass agrees with the definition based on an energy-momentum expansion of low energy states. 


\end{abstract}
\keywords{polaron, traveling waves, effective mass, Landau-Pekar equations}
\subjclass{35Q40,35Q55,35C07,35A15}
\maketitle

\section{Introduction and main results}

The polaron is quasi-particle that models an electron moving through an ionic crystal while interacting with its self-induced polarization field. The polarization field can be either described as a quantum field by the Fr\"ohlich model \cite{Froe} (called quantum polaron) or as a classical field by the Landau-Pekar equations \cite{Landau,Pekar,LP} (called classical polaron). The Landau-Pekar equations describe the polaron as a pair $(\psi, \varphi ) \in H^1 (\bR^3) \times L^2_{\sqrt{\epsilon}} ( \mathbb{R}^3)$ where $\psi$ denotes the $L^2$-normalized wave function of the electron and $\varphi$ the classical field which is for a positive function $\varepsilon >0$ an element of 
\begin{align}
L_{\sqrt{\varepsilon}}^2 ( \mathbb{R}^3 ) := \left\lbrace \varphi \;   \vert \; \| \sqrt{\varepsilon} \varphi \|_2 < \infty  \right\rbrace \; . 
\end{align}
The Landau-Pekar equations are given by the coupled system of differential equations 
\begin{align}
\label{def:dyn}
i \partial_t \psi_t = \rh_{\sqrt{\alpha} \varphi_t}\psi_t, \quad i  \varepsilon^{-1} \partial_t \varphi_t =  \varphi_t + \sqrt{\alpha}  \sigma_{\psi_t} 
\end{align}
where $\alpha>0$ denotes the coupling constant, $m>0$ the electron's mass, 
\begin{align}
\label{def:pot}
\sigma_\psi := (2\pi)^{3/2} \frac{v}{ \varepsilon} \widehat{\varrho}_\psi ,   \quad  \h_{ \varphi} =  - \frac{\Delta}{2m} + V_{\varphi}, \quad \text{with} \quad  V_{\varphi} (x) =  2 \Re \int e^{ik \cdot x} v(k) \varphi (k) dk 
\end{align}
and where $\varrho_\psi : =\vert \psi \vert^2$, 
\begin{align}
\label{eq:varepsilon,v}
\varepsilon = 1, \quad \text{and} \quad v(k) = \frac{1}{\vert k \vert} \; . 
\end{align}
The strong coupling limit is linked with a classical field approximation: For $\alpha \rightarrow \infty$, the classical Landau-Pekar equations can be derived from the quantum dynamics generated by the Fr\"ohlich model \cite{FG,FS,G,LMRSS,LRSS,M}.

\subsection*{Effective mass problem for the Landau-Pekar equations} The dynamics of the polaron is closely related to the outstanding problem of its effective mass: Due to the interaction with the self-induced polarization field, the electron slows down. In physics this phenomenon is described by the emergence of a quasi-particle, the polaron, with an increased effective mass $m_{\rm eff}>0$. 

Based on the classical polaron, Landau and Pekar \cite{Landau,Pekar,LP} formulated a famous quantitative prediction for the effective mass in the strong coupling limit. Their heuristic ideas (described in more detail in \cite{FRS_mass}) rely on the existence of traveling waves of the Landau-Pekar equations, i.e. solutions of \eqref{def:dyn} with initial data $(\psi_\rv,  \varphi_\rv ) \in H^1( \bR^3) \times L^2_{\sqrt{\varepsilon}} ( \bR^3)$,  $\| \psi_\rv \|_2 =1$ satisfying 
\begin{align}
\left( \psi_t (x), \; \varphi_t (k)\right)  = \left( e^{i e_\rv t } \psi_\rv  (x -\rv t ) , \;  e^{i \rv \cdot k t }\varphi_\rv (k )  \right) \;  \label{def:tw}
\end{align}
with phase $e_\rv \in \bR$ and velocity $\rv \in \bR^3$. Traveling waves were, however, conjectured to not exists for $\rv\not= 0$ for the Landau-Pekar equations \cite{FRS_mass} due to a vanishing speed of sound. 

Related to that, the corresponding energy functional to \eqref{def:dyn} does not dominate the total momentum. Thus a computation of the energy as function of conserved total momentum yields a constant function and therefore an infinite mass. Contrarily for the quantum Fr\"ohlich model such an energy-momentum expansion allows to approach the quantum polaron's effective mass (see \cite{BS,LS,MMS} resp. \cite{Varadhan,Beetz,Spohn} for recent progress based on different techniques). 

However for the classical polaron, i.e. the Landau-Pekar equations, neither traveling wave solutions nor an energy-momentum expansion serve for a mathematical rigorous definition of the effective mass. To overcome these problems \cite{FRS_mass} provides a definition of the effective mass that is based on a novel energy-velocity expansion and verifies the quantitative prediction by Landau and Pekar for the classical polaron. 

The goal of this paper is to verify Landau and Pekar's heuristic approach for the effective mass, originally formulated for the non-regularized classical polaron, mathematical rigorously for a regularized classical polaron model, namely the regularized Landau-Pekar equations. 

More precisely, we choose instead of \eqref{eq:varepsilon,v} the functions  $\varepsilon,v$  to be sufficiently regular (see Assumptions \ref{ass:reg},\ref{ass:super} below) and show that subsonic traveling wave solutions with non-vanishing velocity $\rv\not= 0$ do exist (Theorem \ref{thm:tw1}) and serve for a definition of the effective mass (Theorem \ref{thm:mass-tw}). Moreover, the resulting formula agrees with a definition of the effective mass through an energy-momentum expansion (Theorem \ref{thm:mass}) and, furthermore, with results obtained for the quantum regularized Fr\"ohlich model \cite{MS}. 


\subsection*{Regularized Landau-Pekar equations} The regularized Landau-Pekar equations describe more generally a particle moving through an excitable medium. We impose the following assumptions on the functions $\varepsilon,v$ and 
\begin{align}
\cg  = ( 2 \pi)^{3/2} \widehat{\left( v / \varepsilon^{-1/2} \right)} \;  .\label{def:cg}
\end{align}

\begin{ass}[Regularity]
\label{ass:reg}
Let $\varepsilon, v$ be radial with $\varepsilon >0$ and such that  $\cg \in H^2 (\bR^3)$ and $\vert \widehat{g} (k) \vert \geq (1+ \vert k \vert)^{-9/2}$ for all $k \in \mathbb{R}^3$. 
\end{ass}

Furthermore we consider underlying media of positive critical velocity $\cv>0$ formulated in the assumption below. The critical velocity is often referred to as speed of sound of the medium.  
\begin{ass} \label{ass:super}
Let $\varepsilon>0$ satisfy $\inf_{k \in \bR^3} \frac{\varepsilon (k)}{\vert k \vert} := \cv$ for a constant $\cv >0$. 
\end{ass}
We remark that Ass. \ref{ass:reg} and Ass. \ref{ass:super} exclude the (non-regularized) Landau-Pekar equations with $\varepsilon,v$ given by \eqref{eq:varepsilon,v} that, in particular, have vanishing speed of sound with the above definition. 

The dynamical equations \eqref{def:dyn} for $\varepsilon,v$ satisfying Ass. \ref{ass:reg},\ref{ass:super} are well defined (see Lemma \ref{lemma:dyn} below). Moreover the energy functional 
\begin{equation}
\label{def:G}
\cG_\alpha ( \psi, \varphi ) = \expval{\h_{\sqrt{\alpha} \varphi}}{ \psi}  + \| \varepsilon^{1/2}\varphi \|_2^2 \quad \text{with} \quad \|\psi\|_2=1 
\end{equation}
where $h_{\varphi}$ is given by \eqref{def:pot} is preserved along the dynamics. For the regularized Landau-Pekar equations we show that there exists subsonic traveling wave solutions with $0 < \vert \rv \vert < \cv$. 

 \begin{theorem}\label{thm:tw1}
Let $\varepsilon, v$ satisfy Ass. \ref{ass:reg} and Ass. \ref{ass:super} and $\vert \rv \vert  < \cv $. Then there exists a traveling wave solution of the form \eqref{def:tw} with $\rv \not= 0$. 
\end{theorem}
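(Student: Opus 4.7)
The plan is to reduce the coupled traveling-wave equations to a nonlinear nonlocal eigenvalue problem for $\psi_\rv$ alone by solving the field equation in \eqref{def:dyn} explicitly for $\varphi_\rv$, and then to produce $\psi_\rv$ by a constrained variational argument. Substituting the ansatz \eqref{def:tw} into the $\varphi$-equation, the oscillating phases cancel and one obtains the pointwise linear identity
\begin{equation*}
\bigl(\varepsilon(k)+\rv\cdot k\bigr)\,\varphi_\rv(k) = -\sqrt{\alpha}\,(2\pi)^{3/2}\,v(k)\,\widehat{\varrho}_{\psi_\rv}(k).
\end{equation*}
Ass.~\ref{ass:super} together with the subsonic hypothesis $|\rv|<\cv$ yields $\varepsilon(k)+\rv\cdot k\geq (\cv-|\rv|)|k|>0$ for every $k\neq 0$, so this relation uniquely determines a map $\psi\mapsto \varphi_\rv[\psi]$; the regularity $\cg\in H^2(\bR^3)$ from Ass.~\ref{ass:reg} then guarantees $\varphi_\rv[\psi]\in L_{\sqrt{\varepsilon}}^2(\bR^3)$ whenever $\psi\in H^1(\bR^3)$. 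Inserting the expression for $\varphi_\rv[\psi_\rv]$ back into the $\psi$-equation reduces \eqref{def:dyn} to the nonlinear eigenvalue problem
$\bigl(\rh_{\sqrt{\alpha}\varphi_\rv[\psi_\rv]}-i\rv\cdot\nabla\bigr)\psi_\rv = -e_\rv\,\psi_\rv$ with $\|\psi_\rv\|_2=1$.

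To construct $\psi_\rv$ I would minimize the reduced energy functional
\begin{equation*}
\cE_{\alpha,\rv}(\psi) := \inf_{\varphi\in L_{\sqrt{\varepsilon}}^2(\bR^3)}\bigl\{\cG_\alpha(\psi,\varphi)-\rv\cdot P(\psi,\varphi)\bigr\}
\end{equation*}
over the unit $L^2$-sphere $\{\psi\in H^1(\bR^3):\|\psi\|_2=1\}$, where $P(\psi,\varphi)$ denotes the total momentum conserved by \eqref{def:dyn}. Because $|\rv|<\cv$, the $\varphi$-dependent part of the bracket is a strictly positive-definite quadratic form with weight $\gtrsim |k|$, so the inner infimum is attained uniquely at $\varphi=\varphi_\rv[\psi]$. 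An envelope-theorem computation then identifies the Euler--Lagrange equation of $\cE_{\alpha,\rv}$ on the sphere with the reduced eigenvalue problem above, the Lagrange multiplier being $-e_\rv$. Completing the square $-\frac{\Delta}{2m}-\rv\cdot(-i\nabla) = \frac{1}{2m}(-i\nabla-m\rv)^2-\frac{m|\rv|^2}{2}$ and combining with Ass.~\ref{ass:reg} show that $\cE_{\alpha,\rv}$ is bounded below and coercive on $H^1(\bR^3)$.

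The main obstacle is to extract a strongly convergent subsequence from a minimizing sequence, since $\cE_{\alpha,\rv}$ is translation invariant and direct compactness arguments fail. I would apply Lions' concentration--compactness principle: after suitable translations of a minimizing sequence, vanishing is excluded by showing that the infimum is strictly negative via an explicit trial state, exploiting the lower bound $|\widehat{g}(k)|\geq(1+|k|)^{-9/2}$ from Ass.~\ref{ass:reg} to keep the interaction energy bounded away from zero. Dichotomy is ruled out by the strict binding inequality $E(1)<E(\theta)+E(1-\theta)$ for every $\theta\in(0,1)$, where $E(\lambda):=\inf\{\cE_{\alpha,\rv}(\psi):\|\psi\|_2^2=\lambda\}$, which I would establish by a scaling/trial-function argument adapted to the velocity-dependent kernel $1/(\varepsilon(k)+\rv\cdot k)$. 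This binding step is the principal technical hurdle; once it is settled, the minimizer $\psi_\rv$ together with $\varphi_\rv[\psi_\rv]$ produces by construction a traveling-wave solution of \eqref{def:dyn} of the form \eqref{def:tw} with nonzero velocity $\rv$.
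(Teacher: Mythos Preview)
Your approach is correct in spirit and would succeed, but it is more elaborate than what the paper does. The paper also proceeds variationally, but it works directly with the joint action functional
\[
\mathcal{J}_\rv(\psi,\varphi)=\langle\psi,\rh_\varphi\psi\rangle+\|\varepsilon^{1/2}\varphi\|_2^2+e_\rv\|\psi\|_2^2-\rv\cdot\bigl(\langle\psi,i\nabla\psi\rangle+\langle\varphi,p\,\varphi\rangle\bigr)
\]
rather than first eliminating $\varphi$. Boundedness below and $H^1\times L^2_{\sqrt{|\cdot|}}$-coercivity follow from the same subsonic estimate $\varepsilon(k)-\rv\cdot k\gtrsim\varepsilon(k)$ you use. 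For compactness the paper does \emph{not} run the full Lions machinery: instead it invokes Lieb's translation lemma (as in the ground-state existence proof, Lemma~\ref{lemma:existence}) to obtain, after translating the minimizing sequence, a nonzero weak limit, and then concludes by lower semicontinuity and continuity of the interaction term. In particular the strict binding inequality $E(1)<E(\theta)+E(1-\theta)$ that you single out as the principal hurdle is never needed. What your reduction buys is a cleaner scalar problem and an explicit formula for $\varphi_\rv[\psi]$ that makes the Euler--Lagrange identification transparent; what the paper's route buys is brevity, since the dichotomy analysis and the scaling argument for binding are avoided entirely.
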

 
Theorem \ref{thm:tw1} follows from Proposition \ref{prop:tw} and is proven in Section \ref{sec:proofs-tw}. 
 
We can not treat the case of supersonic traveling waves  $\vert \rv \vert > \cv$. However we conjecture that supersonic traveling waves do not exist. This conjecture is based on the observation that for $\vert \rv \vert > \cv$, the energy functional does not dominate the total momentum, similarly as for the non-regularized model discussed before. 

\subsubsection*{\textbf{Effective mass problem for the regularized Landau-Pekar equations}} We provide two definitions for the effective mass. The first definition (Theorem \ref{thm:mass-tw}) is based on an energy-velocity expansion of traveling wave solutions and inspired by ideas of Landau and Pekar. The second (Theorem \ref{thm:mass}) is based on an energy-momentum expansion for low energy states. Both definitions lead to the same formula for the effective mass and, in particular, verify the physicists' predictions. 
\subsubsection*{Traveling waves approach} We derive an energy-velocity expansion of low-energy traveling wave solutions \eqref{def:tw} with small velocities. To be more precise we consider states $( \psi, \varphi) \in H^1( \mathbb{R}^3) \times L_{\sqrt{\varepsilon}}^2 ( \mathbb{R}^3)$ 
\begin{enumerate}
\item[(i)] with small energy, i.e. satisfying 
\begin{align}
\mathcal{G}_\alpha( \psi , \varphi )  \leq e_\alpha + \kappa \; \text{for sufficiently small} \quad \kappa >0 \quad \text{(independent of }\; \alpha ) 
\end{align}

\item[(ii)$_\rv$] and which are traveling wave solutions of velocity $\rv$, i.e. let $\rv < \cv $ (uniformly in $\alpha$), then $( \psi_\rv, \varphi_\rv)$ solves \eqref{def:tw} with velocity $\rv$ and with phase $e_\rv \geq - e_\alpha + \rv^2/4$. 
\end{enumerate}
The definition of the effective mass through traveling waves is based on their energy-velocity expansion, i.e. for states of the set 
\begin{align}
\label{def:Iv}
\cI_\rv := \lbrace ( \psi, \varphi) \in H^1( \mathbb{R}^3) \times L_{\sqrt{\varepsilon}}^2 ( \mathbb{R}^3) \vert \;  \text{(i), (ii)$_\rv$ are satisfied} \rbrace \; 
\end{align}
we study the energy expansion 
\begin{align}
E^{\rm TW}_\rv := \inf\lbrace \cG_\alpha ( \psi, \varphi ) \vert \; ( \psi, \varphi ) \in \cI_\rv \rbrace \; 
\end{align}
around the ground state energy  $e_\alpha$.

\begin{theorem}
\label{thm:mass-tw} Let $\varepsilon, v$ satisfy Ass. \ref{ass:reg} and Ass. \ref{ass:super}. Assume that for the pair of ground states $(\psi_\alpha, \varphi_\alpha)$ of $\mathcal{G}_\alpha$ given by \eqref{def:G} where $\varphi_\alpha = - \sqrt{\alpha} \sigma_{\psi_\alpha}$, the minimizer $\psi_\alpha$ is unique up to translations and changes of phase. There exists $\alpha_0 >0$ such that for all $\alpha \geq \alpha_0$ and $\alpha \rv \leq 1$, we have 
\begin{align}
\label{eq:energyexp-1}
E_\rv^{\rm TW}  = e_\alpha +  \left(  m +\frac{ 2 (2\pi)^3 \alpha}{3} \| k v \varepsilon^{-3/2} \; \widehat{\varrho}_{\psi_\alpha}  \|_2^2 \right) \frac{\rv^2}{2} +  O( \alpha \rv^3) \; .
\end{align}
\end{theorem}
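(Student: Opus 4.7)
\emph{Plan.} I will reduce the traveling wave ansatz \eqref{def:tw} to a stationary problem via \eqref{def:dyn}, split off the center-of-mass kinetic energy by a gauge transformation, and expand in $\rv$ using the symmetry of the rest minimizer. Inserting \eqref{def:tw} into the $\varphi$-equation of \eqref{def:dyn} and matching the phase $e^{i\rv\cdot kt}$ against the one produced by translating $|\psi_t|^2$ in Fourier space will yield the algebraic constraint
\begin{equation*}
\varphi_\rv(k) = -\frac{\sqrt{\alpha}\,\sigma_{\psi_\rv}(k)}{1 + \rv\cdot k/\varepsilon(k)},
\end{equation*}
which is well defined precisely in the subsonic range $|\rv|<\cv$ by Assumption~\ref{ass:super}, while the $\psi$-equation reduces to $(\h_{\sqrt{\alpha}\varphi_\rv} - \rv\cdot(-i\nabla))\psi_\rv = -e_\rv\psi_\rv$.

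\emph{Gauge transformation.} Next, I will write $\psi_\rv(x) = e^{im\rv\cdot x}\tilde\psi_\rv(x)$; this leaves $\varrho_{\tilde\psi_\rv}=\varrho_{\psi_\rv}$ so that $\varphi_\rv$ is unchanged, and using $(-i\nabla - m\rv)^2/(2m) = -\Delta/(2m) - \rv\cdot(-i\nabla) + m|\rv|^2/2$ yields the decomposition
\begin{equation*}
\mathcal{G}_\alpha(\psi_\rv,\varphi_\rv) = \frac{m|\rv|^2}{2} + \rv\cdot\langle\tilde\psi_\rv,-i\nabla\tilde\psi_\rv\rangle + \mathcal{G}_\alpha(\tilde\psi_\rv,\varphi_\rv).
\end{equation*}
I will then invoke the low-energy assumption (i) together with uniqueness of $\psi_\alpha$ modulo translations and phase to force $\tilde\psi_\rv$ into an $H^1$-neighborhood of size $O(\rv)$ around a suitable translate and phase rotation of $\psi_\alpha$; choosing the representative so that $\tilde\psi_\rv$ is real (up to global phase) makes the momentum cross term vanish.

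\emph{Field energy expansion and assembly.} Substituting the explicit $\varphi_\rv$ and using $\sigma_\psi = (2\pi)^{3/2}(v/\varepsilon)\widehat{\varrho_\psi}$, the field part of the energy becomes
\begin{equation*}
\langle\tilde\psi_\rv, V_{\sqrt{\alpha}\varphi_\rv}\tilde\psi_\rv\rangle + \|\varepsilon^{1/2}\varphi_\rv\|_2^2 = \alpha(2\pi)^3\!\int \frac{v^2}{\varepsilon}|\widehat{\varrho}_{\tilde\psi_\rv}|^2 \left[\frac{1}{(1+\rv\cdot k/\varepsilon)^2} - \frac{2}{1+\rv\cdot k/\varepsilon}\right] dk.
\end{equation*}
Expanding the bracket as $-1 + (\rv\cdot k/\varepsilon)^2 + O((\rv\cdot k/\varepsilon)^3)$ kills the linear-in-$\rv$ contribution; replacing $\widehat{\varrho}_{\tilde\psi_\rv}$ by the radial $\widehat{\varrho}_{\psi_\alpha}$ costs $O(\alpha\rv^3)$ by the previous $H^1$-closeness and Assumption~\ref{ass:reg}, while the elementary identity $\int k_ik_j\, f(|k|)dk = \tfrac{\delta_{ij}}{3}\int |k|^2 f(|k|)dk$ for radial $f$ converts the $\rv^2$-term into $\tfrac{\alpha(2\pi)^3}{3}\|kv\varepsilon^{-3/2}\widehat{\varrho}_{\psi_\alpha}\|_2^2\, |\rv|^2$. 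Combining with $m|\rv|^2/2$ from the gauge, with $\langle\tilde\psi_\rv,-\tfrac{\Delta}{2m}\tilde\psi_\rv\rangle = \langle\psi_\alpha,-\tfrac{\Delta}{2m}\psi_\alpha\rangle + O(\alpha\rv^3)$, and using $e_\alpha = \langle\psi_\alpha,-\tfrac{\Delta}{2m}\psi_\alpha\rangle - \alpha(2\pi)^3\|v\varepsilon^{-1/2}\widehat{\varrho}_{\psi_\alpha}\|_2^2$ will produce exactly the right-hand side of \eqref{eq:energyexp-1}. Because the identity holds for every $(\psi,\varphi)\in\mathcal{I}_\rv$, both the upper bound (instantiated by the traveling wave of Theorem~\ref{thm:tw1}) and the matching lower bound follow.

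\emph{Main obstacle.} The hardest step will be the quantitative stability used above: one must show that every low-energy traveling wave in $\mathcal{I}_\rv$ lies $O(\rv)$-close in $H^1\times L^2_{\sqrt{\varepsilon}}$ to some translated and phase-rotated copy of $(\psi_\alpha,\varphi_\alpha)$, with enough uniformity that the $O(\rv^3)$ remainders remain $O(\alpha\rv^3)$ throughout the range $\alpha\rv\le 1$. This rests on coercivity of the Hessian of $\mathcal{G}_\alpha$ at the minimizer on the orthogonal complement of the translation and phase directions, which is precisely where the uniqueness hypothesis enters.
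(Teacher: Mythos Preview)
Your approach is genuinely different from the paper's and, at least for the lower bound, cleaner. The paper never solves the $\varphi$-equation explicitly nor performs your Galilean gauge; instead it writes $(\psi_\rv,\varphi_\rv)=(\psi_\alpha+\rv\xi_\rv,\varphi_\alpha+\rv\eta_\rv)$, reads off $\Im\xi_\rv=\rH^{-1}\nabla\psi_\alpha$, $\Im\eta_\rv=\varepsilon^{-1}k\varphi_\alpha$, $(\rH-4X_{\psi_\alpha})\Re\xi_\rv=0$ from the linearized system, and plugs these into the quadratic expansion \eqref{eq:exp-G-tw}. Your route---eliminate $\varphi_\rv$ algebraically, gauge away the boost, expand the geometric series---makes the field contribution $\tfrac{\alpha(2\pi)^3}{3}\|kv\varepsilon^{-3/2}\widehat\varrho_{\psi_\alpha}\|_2^2$ appear in one line and explains transparently why odd powers of $\rv$ drop out.

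There is, however, a real gap in your upper bound. After your decomposition you have
\[
\mathcal{G}_\alpha(\psi_\rv,\varphi_\rv)=\frac{m\rv^2}{2}+\mathcal{E}_\alpha(\tilde\psi_\rv)+\frac{\alpha(2\pi)^3}{3}\|kv\varepsilon^{-3/2}\widehat\varrho_{\psi_\alpha}\|_2^2\,\rv^2+O(\alpha\rv^3),
\]
and for the lower bound $\mathcal{E}_\alpha(\tilde\psi_\rv)\ge e_\alpha$ suffices. But for the upper bound you assert $\langle\tilde\psi_\rv,-\tfrac{\Delta}{2m}\tilde\psi_\rv\rangle=\langle\psi_\alpha,-\tfrac{\Delta}{2m}\psi_\alpha\rangle+O(\alpha\rv^3)$, and this does \emph{not} follow from $H^1$-closeness of order $O(\rv)$: with $\|\tilde\psi_\rv-\psi_\alpha\|_2\sim\rv$ the Hessian gives $\mathcal{E}_\alpha(\tilde\psi_\rv)-e_\alpha\sim\sqrt{\alpha}\,\rv^2$, which in the regime $\alpha\rv\le 1$ exceeds $\alpha\rv^3$ by a factor $\ge\sqrt{\alpha}$. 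What you are missing is that $\tilde\psi_\rv-\psi_\alpha=O(\rv^2)$, not $O(\rv)$. This is true, and the reason fits your framework: the first-order (in $\rv$) correction to $\varphi_\rv$ is $\sqrt{\alpha}\,\sigma_{\psi_\alpha}\,\rv\!\cdot\!k/\varepsilon$, which is \emph{odd} in $k$ and therefore contributes zero to the real potential $V_{\sqrt{\alpha}\varphi_\rv}$; hence the self-consistent equation for $\tilde\psi_\rv$ agrees with the Euler--Lagrange equation of $\psi_\alpha$ up to $O(\rv^2)$. Once you insert this parity observation, your argument closes. In the paper this step is hidden in the linearized equation $(\rH-4X_{\psi_\alpha})\Re\xi_\rv=0$, which together with the positivity of the Hessian forces $\Re\xi_\rv=0$.
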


The energy expansion of Theorem \ref{thm:mass-tw} (i.e. \eqref{eq:energyexp-1}) is proven in Section \ref{sec:mass}. 

The coefficients of the energy expansion are well defined as 
\begin{align}
 \| k v \varepsilon^{-3/2} \; \widehat{\varrho}_{\psi_\alpha}  \|_2^2 = \| k v \varepsilon^{-3/2} \; \widehat{\varrho}_{\widetilde{\psi}_\alpha}  \|_2^2 
\end{align}
for any 
\begin{align}
\psi_\alpha, \widetilde{\psi}_\alpha \in \Theta ( \psi_\alpha ) := \lbrace e^{i \omega} \psi_\alpha^y =e^{i \omega} \psi_\alpha ( \cdot - y ) \vert \; y \in \mathbb{R}^3, \, \omega \in [0, 2 \pi ) \rbrace \; . 
\end{align} 
We define the effective mass as the second order coefficient of the expansion of $E_\rv^{\rm TW}$ around the ground state energy. It follows from the ground state's approximation (see Proposition \ref{prop:gs} below) that $\widehat{\varrho}_{\psi_\alpha} (k) \rightarrow 1$ point-wise in the limit $\alpha \rightarrow \infty$. Therefore in the strong coupling limit the leading order in $\alpha$ of the effective mass is given by 
\begin{align}
\lim_{\alpha \rightarrow \infty} \alpha^{-1} m_{\rm eff}^{\rm TW} := \lim_{\alpha \rightarrow \infty}  \alpha^{-1} \left( \lim_{\rv \rightarrow 0}  \frac{E_\rv^{\rm TW} - e_\alpha}{\rv^2/2} \right)   =\frac{ 2 (2\pi)^3 }{3} \| k v \varepsilon^{-3/2} \;  \|_2^2 \label{def:mass-TW}
\end{align}
and agrees with the findings of for the quantum Fr\"ohlich model \cite{MS}. 

\subsubsection*{Approach through energy-momentum-expansion} For the second approach we are interested in the infimum of $\mathcal{G}_\alpha$ w.r.t. to the set of states $( \psi, \varphi) \in H^1 ( \mathbb{R}^3) \times L^2_{\sqrt{\varepsilon}} ( \mathbb{R}^3 )$ with small energy (i.e. satisfy (i)) and  
\begin{enumerate}
\item[(ii)$_\rp$] with mean momentum $\rp \in \mathbb{R}^3$, i.e. 
\begin{align}
\label{eq:mean}
 \langle \widehat{\psi} \vert p \vert\widehat{\psi} \rangle + \expval{p}{\varphi} = \rp \; . 
\end{align}
\end{enumerate}
 Thus we consider states of the set 
\begin{align}
\label{def:Ip}
\cI_{\rp}:= \lbrace (\psi, \varphi) \in H^1( \bR^3) \times L^2_{\sqrt{\varepsilon}} ( \mathbb{R}^3) \; \vert  \; \text{such that (i),(ii)$_\rp$ hold} \rbrace \; . 
\end{align}
The definition of the effective mass then relies on an expansion of 
\begin{align*}
E_{\rp}:=\inf \lbrace \cG_\alpha( \psi, \varphi) \vert  \;  \left( \psi, \varphi \right) \in \cI_\rp \rbrace 
\end{align*}
stated in the following theorem. 

\begin{theorem}
\label{thm:mass} 
Let $\varepsilon, v$ satisfy Ass. \ref{ass:reg} and Ass. \ref{ass:super}.Assume that for the pair of ground states $(\psi_\alpha, \varphi_\alpha)$ of $\mathcal{G}_\alpha$ given by \eqref{def:G} where $\varphi_\alpha = - \sqrt{\alpha} \sigma_{\psi_\alpha}$, the minimizer $\psi_\alpha$ is unique up to translations and changes of phase. 
\begin{enumerate}
\item [(a)] There exists $\alpha_0 >0 $ such that for all $\alpha \geq \alpha_0$ and $ \alpha^{-1/4} \rp \leq 1 $ 
\begin{align}
E_\rp  =&  e_\alpha  +  \left(  m +\frac{ 2 (2\pi)^3 \alpha}{3} \| k v \varepsilon^{-3/2} \; \widehat{\varrho}_{\psi_\alpha}  \|_2^2\right)^{-1} \frac{\rp^2}{2} + O( \alpha^{-5/4} \rp^3 )  \; .
\end{align}
\item[(b)] There exists $\alpha_0 >0$ such that for all $\alpha \geq \alpha_0$ and $\alpha^{-1/2}\rp \leq 1$, a pair of minimizers $(\psi_\rp,\varphi_\rp) $ of $E_\rp$ is a traveling wave solution $( \psi_{\rv'},  \varphi_{\rv'} )$ to \eqref{eq:tw} with velocity $\rv' =  m_{\rm eff}^{-1} \rp + O( \alpha^{-3/2} \rp^2)$ and 
\begin{align}
E_\rp = \cG_\alpha ( \psi_{\rv'}, \varphi_{\rv'} ) + O ( \alpha^{-2} \rp^3 ) \; . 
\end{align}

\end{enumerate}
 
\end{theorem}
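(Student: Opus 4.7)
The plan is to deduce the energy-momentum expansion in (a) from the energy-velocity expansion of Theorem~\ref{thm:mass-tw} via a Legendre-type duality, and to obtain (b) by identifying the Lagrange multiplier of the momentum constraint with the velocity of a traveling wave. Set
\begin{align*}
P(\psi, \varphi) := \langle \widehat\psi \vert p \vert \widehat\psi\rangle + \expval{p}{\varphi}, \qquad \cF_\rv(\psi, \varphi) := \cG_\alpha(\psi, \varphi) - \rv \cdot P(\psi, \varphi).
\end{align*}
A direct computation shows that the Euler-Lagrange equations for $\cF_\rv$ under the constraint $\|\psi\|_2 = 1$ coincide with the time-independent traveling-wave equations of velocity $\rv$, so (a quantitative refinement of) Proposition~\ref{prop:tw} underlying Theorem~\ref{thm:tw1} characterizes $(\psi_\rv, \varphi_\rv)$ as the minimizer of $\cF_\rv$ in the subsonic regime $|\rv| < \cv$.

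For the upper bound in (a), I use $(\psi_\rv, \varphi_\rv)$ as a trial state after tuning its momentum to $\rp$. Expanding $P(\psi_\rv, \varphi_\rv)$ in $\rv$ via the explicit form of $\varphi_\rv$ read off from the second traveling-wave equation, together with the radiality of $\psi_\alpha$ (which gives $P(\psi_\alpha, \varphi_\alpha) = 0$), yields
\begin{align*}
P(\psi_\rv, \varphi_\rv) = m_{\rm eff}\, \rv + O(\alpha \rv^2), \qquad m_{\rm eff} = m + \frac{2(2\pi)^3 \alpha}{3} \| k v \varepsilon^{-3/2} \widehat\varrho_{\psi_\alpha}\|_2^2.
\end{align*}
Inverting gives $\rv(\rp) = m_{\rm eff}^{-1}\rp + O(\alpha^{-2}\rp^2)$, and substituting into \eqref{eq:energyexp-1} yields $E_\rp \leq E^{\rm TW}_{\rv(\rp)} = e_\alpha + \rp^2/(2 m_{\rm eff}) + O(\alpha^{-2}\rp^3)$, comfortably inside the stated error. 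For the matching lower bound, for any $(\psi, \varphi) \in \cI_\rp$ and any $\rv$ with $|\rv| < \cv$,
\begin{align*}
\cG_\alpha(\psi, \varphi) = \cF_\rv(\psi, \varphi) + \rv \cdot \rp \geq \cF_\rv(\psi_\rv, \varphi_\rv) + \rv \cdot \rp = E^{\rm TW}_\rv - \rv \cdot P(\psi_\rv, \varphi_\rv) + \rv \cdot \rp,
\end{align*}
and choosing $\rv = \rv(\rp)$ so that $P(\psi_\rv, \varphi_\rv) = \rp$ completes the proof of (a).

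For part (b), any minimizer $(\psi_\rp, \varphi_\rp)$ of the constrained problem defining $E_\rp$ satisfies Euler-Lagrange equations with a scalar multiplier $\mu$ for normalization and a vector multiplier $\rv' \in \bR^3$ for the momentum constraint; these are exactly the time-independent traveling-wave equations of velocity $\rv'$, so $(\psi_\rp, \varphi_\rp) = (\psi_{\rv'}, \varphi_{\rv'})$. The condition $P(\psi_{\rv'}, \varphi_{\rv'}) = \rp$ combined with the expansion above forces $\rv' = m_{\rm eff}^{-1}\rp + O(\alpha^{-2}\rp^2)$ on the finer range $\alpha^{-1/2}\rp \leq 1$, and then $E_\rp = \cG_\alpha(\psi_{\rv'}, \varphi_{\rv'})$ holds, with the stated $O(\alpha^{-2}\rp^3)$ error absorbing the discrepancy between $\rv'$ and its leading approximation $m_{\rm eff}^{-1}\rp$.

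The main technical obstacle is the quantitative, uniform-in-$\alpha$ variational characterization of the subsonic traveling wave as the unique (modulo $\Theta(\psi_\alpha)$) minimizer of $\cF_\rv$, which drives the lower bound in (a). This will require coercivity of the Hessian of $\cG_\alpha$ at the ground-state manifold in directions transversal to the translation and phase symmetries (entering through the uniqueness assumption on $\psi_\alpha$) and uniform invertibility of $\varepsilon - \rv \cdot k$ (guaranteed by Ass.~\ref{ass:super} for $|\rv| < \cv$); both ingredients should already be available from the analysis underlying Theorem~\ref{thm:tw1}/Proposition~\ref{prop:tw}.
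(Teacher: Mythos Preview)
Your approach is conceptually sound but genuinely different from the paper's. For part (a), the paper proceeds directly: the upper bound uses an explicit trial pair $(\psi_0,\varphi_0)=\big((1-\mu_\rp)\lambda_\rp^{-1}\psi_\alpha+i\lambda_\rp\cdot{\rm H}_\alpha^{-1}\nabla\psi_\alpha,\ \varphi_\alpha+i(1-\mu_\rp)\lambda_\rp\cdot k\varepsilon^{-1}\varphi_\alpha\big)$ plugged into the quadratic expansion \eqref{eq:exp-G}; the lower bound first uses the upper bound together with the global coercivity estimates (Corollary~\ref{cor:coerc}) to localize any competitor in $\cI_\rp$ near $(\psi_\alpha,\varphi_\alpha)$, then completes the square in \eqref{eq:exp-G} against the same first-order correctors and invokes the momentum constraint. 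In particular the paper does \emph{not} route through Theorem~\ref{thm:mass-tw}; it derives the $E_\rp$ expansion independently. Your Legendre-duality argument is more structural and reuses Theorem~\ref{thm:mass-tw} as a black box, which is attractive, but it leans on a global statement --- that the traveling wave $(\psi_\rv,\varphi_\rv)$ is the \emph{global} minimizer of $\cF_\rv$ over all $L^2$-normalized pairs --- whereas $E_\rv^{\rm TW}$ is defined as an infimum over the low-energy set $\cI_\rv$ (conditions (i) and (ii)$_\rv$, including the phase restriction $e_\rv\ge -e_\alpha+\rv^2/4$). You must check that the $\cJ_\rv$-minimizer built in Proposition~\ref{prop:tw}(a) actually lies in $\cI_\rv$ and realizes $E_\rv^{\rm TW}$; this is plausible for small $\rv$ but is an extra step the paper avoids entirely. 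A second loose end is the upper bound: inverting $P(\psi_\rv,\varphi_\rv)=m_{\rm eff}\rv+O(\alpha\rv^2)$ only gives a trial state with \emph{approximately} the right momentum, so either you need an implicit-function argument to hit $\rp$ exactly, or you must argue that the momentum defect contributes within the error term.

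For part (b) your argument coincides with the paper's: both identify the Lagrange multiplier of the momentum constraint with a traveling-wave velocity and then read off $\rv'$ from $P(\psi_{\rv'},\varphi_{\rv'})=\rp$ using the decomposition and bounds coming from Proposition~\ref{prop:tw} and Corollary~\ref{cor:coerc}.
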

 
Theorem \ref{thm:mass} \textit{(a),(b)} are proven in Section \ref{sec:mass}. 

We define the effective mass as the coefficient of the second order contribution of the energy-momentum expansion and thus in leading order in $\alpha$ given in the strong coupling limit by 
\begin{align}
\lim_{\alpha \rightarrow \infty} \alpha^{-1} m_{\rm eff}   :=&  \lim_{\alpha \rightarrow \infty} \alpha^{-1}  \lim_{\rp \rightarrow 0 } \left( \frac{E_\rp - e_\alpha }{\rp^2/2} \right)^{-1 }  = \frac{ 2 (2\pi)^3 \alpha}{3} \| k v \varepsilon^{-3/2}  \|_2^2 
\end{align}
which agrees with the effective mass $m_{\rm eff}^{\rm TW}$ defined in \eqref{def:mass-TW} and findings from the quantum Fr\"ohlich model \cite{MS}.

In particular Theorem \ref{thm:mass} {\it (b)} shows that any minimizer of $E_\rp$ is given by a traveling wave solution of velocity $\rv' = m_{\rm eff}^{-1} \rp$, thus, by approximate elements of the set $\cI_\rv$ considered in Theorem \ref{thm:mass-tw}.

We remark that traveling wave solutions for non-linear Schr\"odinger equations with non-vanishing speed of sound were studied in various other settings (see for example \cite{Gravejat} for the Gross-Pitaevksi and \cite{Froehlich} for pseudo-relativistic Hartree equation). We note that \cite{Gravejat} considers a variational approach to traveling waves in the spirit of Theorem \ref{thm:mass} { \em (b)}. 

\subsection*{Structure of the paper} In Section \ref{sec:results-new} we collect properties and approximations of the ground state, ground state energy (Section \ref{sec:results}, Proposition \ref{prop:existence-and-unique} resp. Proposition \ref{prop:gs}) and traveling wave solutions (Section \ref{sec:tw}, Proposition \ref{prop:tw}) that will be important to prove our main theorems. In Section \ref{sec:gs} we prove Propositions \ref{prop:existence-and-unique},\ref{prop:gs} on the ground state's properties. For this we first show the existence of ground states for all $\alpha >0$ in Section \ref{subsec:exist}, then the approximation of the ground (state) by the harmonic oscillator in Section \ref{subsec:approx} and finally the positivity of the Hessian for large $\alpha > \alpha_0$ yielding coercivity estimates. We combine those results in Section \ref{sec:proos-props} to finally prove Propositions \ref{prop:existence-and-unique},\ref{prop:gs}. 
In Section \ref{sec:proofs-tw} we prove afterwards Proposition \ref{prop:tw} (yielding in Theorem \ref{thm:tw1}) on the properties of traveling waves. In Section \ref{sec:mass} we finally prove Theorem \ref{thm:mass-tw} and Theorem \ref{thm:mass} on the two definitions of the effective mass bases on the results before.

\section{Properties of the ground state and traveling waves}
\label{sec:results-new}

\subsection{Properties of the ground state} 
\label{sec:results}

For the regularized polaron's ground state 
\begin{align}
\label{def:gs}
e_\alpha := \inf_{\psi,\varphi} \cG_\alpha ( \psi, \varphi ) \;  
\end{align}
the infimum can be taken first w.r.t. to the phonon field yielding by a completion of the square to the choice 
\begin{align}
\label{def:phi-alpha}
\varphi_\alpha :=  - \sqrt{\alpha} \sigma_{\psi}, \quad \text{where} \quad \sigma_\psi := (2\pi)^{3/2} \frac{v}{ \varepsilon} \widehat{\varrho}_\psi
\end{align}
with $\varrho_\psi = \vert \psi \vert^2$. The resulting energy functional for  $\psi \in H^1 ( \mathbb{R}^3)$ is 
\begin{align}
\label{def:E}
\cE_\alpha ( \psi ) := \inf_{\varphi} \mathcal{G}_\alpha ( \psi, \varphi) =  \expval{- \tfrac{\Delta}{2m} - \alpha \left( \ch * \vert \psi \vert^2\right)}{\psi} \quad \mathrm{with} \quad \ch = ( 2 \pi)^{3/2} \widehat{\left( v^2 / \varepsilon^{-1} \right)} \; . 
\end{align}
Thus if $\psi_\alpha$ is an element of the manifold of minimizers $\mathcal{M}_{\mathcal{E}_\alpha}$ of the energy functional $\mathcal{E}_\alpha$ defined by 
\begin{align}
\mathcal{M}_{\mathcal{E}_\alpha} : = \lbrace \psi \in H^1 ( \mathbb{R}^3) \; \vert \| \psi \|_2=1,  \; \mathcal{E}_\alpha ( \psi_\alpha ) = e_\alpha \rbrace \; , 
\end{align}
then the pair $( \psi_\alpha, \varphi_\alpha)$ with $\varphi_\alpha$ given by \eqref{def:phi-alpha} is an element of the manifold of minimizers $ \mathcal{M}_{\mathcal{G}_\alpha}$ of the energy functional $\mathcal{G}_\alpha$ 
\begin{align}
\mathcal{M}_{\mathcal{G}_\alpha} := \lbrace ( \psi, \varphi ) \in H^1 ( \mathbb{R}^3) \times L^2_{\sqrt{\varepsilon}} ( \mathbb{R}^3) \; \vert \; \mathcal{G}_\alpha ( \psi, \varphi) = e_\alpha\rbrace. 
\end{align}
The energy functional $\mathcal{E}_\alpha$ is symmetric with respect to translations and changes of the phase of the wave function. Thus for any minimizer $\psi_\alpha$ of $\cE_\alpha$ (i.e. $\psi_\alpha \in \mathcal{M}_{\mathcal{E}_\alpha}$) it follows $\Theta ( \psi_\alpha) \subseteq \mathcal{M}_{\mathcal{E}_\alpha}$ where 
\begin{align}
\label{def:Theta}
\Theta ( \psi_\alpha ) := \lbrace e^{i \omega} \psi_\alpha^y =e^{i \omega} \psi_\alpha ( \cdot - y ) \vert \; y \in \mathbb{R}^3, \, \omega \in [0, 2 \pi ) \rbrace \; . 
\end{align}

For the (non-regularized) Pekar functional corresponding to \eqref{eq:varepsilon,v}, the existence of a unique pair of ground states $( \psi_{\rm Pekar}, \varphi_{\rm Pekar})$ up to phases and translations was proven \cite{L} for all $\alpha >0$. For the regularized model we prove the existence of a ground state for all $\alpha >0$. 

\begin{prop}[Existence] \label{prop:existence-and-unique}
Let $\varepsilon, v$ satisfy Assumption \ref{ass:reg}. 
 For all $\alpha >0$ there exists a pair of minimizers $ ( \psi_\alpha, \varphi_\alpha) \in \mathcal{M}_{\mathcal{G}_\alpha}$ with $\varphi_\alpha = -\sqrt{\alpha} \sigma_{\psi_\alpha}$ and $0< \psi_\alpha \in C^\infty ( \mathbb{R}^3)$ satisfying the Euler-Lagrange equation 
\begin{align}
\left( \h_{ \sqrt{\alpha}\varphi_\alpha} - \mu_{\psi_\alpha} \right) \psi_\alpha = 0, \quad  \text{with} \quad \mu_{\psi_\alpha} := \expval{{\rm h}_{\sqrt{\alpha}\varphi_\alpha}}{\psi_\alpha} \; . 
\end{align}
\end{prop}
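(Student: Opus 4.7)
The plan is to reduce to minimizing the Hartree-type functional $\cE_\alpha$ on the $L^2$-unit sphere and apply the direct method together with Lions' concentration-compactness. Completing the square in the $\varphi$-variable (which is quadratic and strictly positive) shows that for each fixed $\psi\in H^1(\bR^3)$ with $\|\psi\|_2=1$ the infimum of $\cG_\alpha(\psi,\cdot)$ over $L^2_{\sqrt\varepsilon}(\bR^3)$ is attained uniquely at $\varphi=-\sqrt\alpha\,\sigma_\psi$ with value $\cE_\alpha(\psi)$, so it is enough to produce a minimizer $\psi_\alpha$ of $\cE_\alpha$ under $\|\psi\|_2=1$; the relation $\varphi_\alpha=-\sqrt\alpha\,\sigma_{\psi_\alpha}$ is then automatic.

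Under Ass.~\ref{ass:reg} the kernel $\ch$ is bounded continuous with nonnegative Fourier transform, so the interaction is controlled by $\alpha\|\ch\|_\infty$ on the unit sphere, giving $-\infty<e_\alpha\le 0$ and $H^1$-boundedness of every minimizing sequence $(\psi_n)$. A scaled Gaussian-type trial state combined with the strict positivity $\widehat\ch(0)>0$ further yields $e_\alpha<0$ for every $\alpha>0$. I then apply Lions' concentration-compactness to the densities $|\psi_n|^2$: vanishing is ruled out because $e_\alpha<0$ forces the attractive term to carry strictly positive weight in the limit, and dichotomy is ruled out by the strict binding inequality $e_\alpha<e_{\alpha,\lambda}+e_{\alpha,1-\lambda}$ for $\lambda\in(0,1)$, where $e_{\alpha,\lambda}$ is the analogous infimum under $\|\psi\|_2^2=\lambda$, which follows from the $\|\psi\|_2^2$-nonlinear structure of the self-interaction combined with the decay of $\ch$. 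After suitable translations, a subsequence then converges weakly in $H^1$ and strongly in $L^2_{\rm loc}$ to a unit-norm limit $\psi_\alpha$; continuity of the interaction under this mode of convergence (since $\ch\in C_b$ and $|\psi_n|^2$ is tight) together with weak lower semi-continuity of the kinetic energy yields $\cE_\alpha(\psi_\alpha)=e_\alpha$.

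Since $|\psi_\alpha|$ has no larger kinetic energy and the same interaction as $\psi_\alpha$ (diamagnetic/Kato inequality), one may assume $\psi_\alpha\ge 0$. The Euler-Lagrange equation $(\h_{\sqrt\alpha\varphi_\alpha}-\mu_{\psi_\alpha})\psi_\alpha=0$ with $\mu_{\psi_\alpha}=\langle\psi_\alpha\vert\h_{\sqrt\alpha\varphi_\alpha}\vert\psi_\alpha\rangle$ follows from the standard Lagrange-multiplier principle; since $V_{\sqrt\alpha\varphi_\alpha}=-\alpha\,\ch*\varrho_{\psi_\alpha}$ inherits all of the regularity of $\ch$ by convolution with $\varrho_{\psi_\alpha}\in L^1$, iterating elliptic regularity on $(-\Delta+2m(V_{\sqrt\alpha\varphi_\alpha}-\mu_{\psi_\alpha}))\psi_\alpha=0$ yields $\psi_\alpha\in C^\infty(\bR^3)$, and the strong maximum principle applied to the nonnegative $H^2_{\rm loc}$-solution of this elliptic equation upgrades $\psi_\alpha\ge 0$ to $\psi_\alpha>0$. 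I expect the main obstacle to be the strict binding inequality holding \emph{uniformly} in $\alpha>0$: unlike the Coulomb-kernel case of \cite{L}, where the exact scaling $e_\alpha\sim -\alpha^2$ makes binding essentially automatic, the regularization of $\ch$ destroys that scaling and the inequality has to be extracted by a concrete trial-state comparison that quantitatively exploits the decay of $\ch$ between two widely separated mass fragments.
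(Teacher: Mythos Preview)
Your approach is correct but takes a different route from the paper. The paper bypasses the full Lions machinery: after the same reduction to $\cE_\alpha$ and the $H^1$-boundedness of minimizing sequences, it invokes Lieb's compactness lemma \cite[Lemma~6]{Lieb} (cf.\ \cite[Theorem~8.10]{LiebLoss}) directly to produce, after translations, a weakly $H^1$-convergent subsequence with non-zero limit, and then argues continuity of the bounded interaction and weak lower semicontinuity of the kinetic term. This sidesteps the vanishing/dichotomy trichotomy and in particular any binding inequality. Your concentration-compactness argument is heavier but more self-contained, and it makes explicit the tightness of the limit (which the paper's write-up leaves somewhat implicit).

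Your closing concern is overstated, however: the strict binding inequality here does \emph{not} require a delicate separated-fragment comparison exploiting the decay of $\ch$. From the mass rescaling $\psi\mapsto\sqrt\theta\,\psi$ and the fact that the interaction is quartic in $\psi$ while the kinetic term is quadratic, one gets $e_{\alpha,\theta\lambda}<\theta\, e_{\alpha,\lambda}$ for every $\theta>1$ whenever $e_{\alpha,\lambda}<0$. The identity $e_{\alpha,\lambda}=\lambda\, e_{\alpha\lambda}$ together with your trial-state argument for $e_\beta<0$ (valid for all $\beta>0$) shows $e_{\alpha,\lambda}<0$ for every $\lambda\in(0,1]$, and strict subadditivity $e_\alpha<e_{\alpha,\lambda}+e_{\alpha,1-\lambda}$ follows immediately by taking $\theta=1/\lambda$ and $\theta=1/(1-\lambda)$. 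The decay of $\ch$ (automatic here since $\widehat\ch=\widehat\cg^{\,2}\in L^1$ gives $\ch\in C_0$) is only needed to show that the energy nearly splits in the dichotomy scenario, not to manufacture the strict inequality itself.
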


We remark that the ground state's uniqueness for the regularized model is in general not know. For technical reasons, we can not prove uniqueness up to translations and phases for large $\alpha > \alpha_0$. For that, a refined approximation of the ground state than the one in Proposition \ref{prop:gs} is needed to conclude ground state's uniqueness up to translations and phase by the local coercivity estimates in Corollary \ref{cor:loc-coerc} for large $\alpha > \alpha_0$. 

Note that the first part of Theorem \ref{thm:mass-tw} and Theorem \ref{thm:mass} immediately follow from Proposition \ref{prop:existence-and-unique} that is proven in Section \ref{sec:proos-props}. 

The ground state $\psi_\alpha$'s properties for large coupling constants $\alpha >\alpha_0$ results from the asymptotic behavior of the energy functional $\cG_\alpha$. In fact in the strong coupling limit $\alpha \rightarrow \infty$ the ground state energy $e_\alpha = \cG_\alpha (\psi_\alpha)$ is well described through the harmonic oscillator 
 \begin{align}
\label{def:osc}
\ho := - \frac{\Delta}{2m} + \frac{m \omega^2x^2}{2} , \quad \text{with frequency} \quad  \omega^2 = \frac{ \alpha \| \nabla g\|_2^2}{3 m } \; .
\end{align}
Furthermore its well known ground state 
\begin{align}
\label{def:min-osc}
\psio (x) := \left( \frac{m \omega }{ \pi }\right)^{3/4} e^{-m\omega x^2/2} 
\end{align}
approximates the true ground state of $\cG_\alpha$ as the following Lemma shows.

\begin{prop}[Approximation of the ground state]
\label{prop:gs}Let $\varepsilon,v$ satisfy Assumption \ref{ass:reg}. There exists $\alpha_0 >0 $ and constants $C_1,C_2 >0$ (independent of $\alpha$) such that
\begin{align}
\label{eq:approx-gse}
C_1 \alpha^{1/3} \leq e_\alpha + \alpha \| \cg \|_2^2 - \frac{\sqrt{3 \alpha} \| \nabla \cg \|_2^2}{2 \sqrt{m}} \leq C_2  \;  \quad \text{for all} \quad \alpha \geq \alpha_0 \; . 
\end{align}
Furthermore let $\psi_\alpha \in \mathcal{M}_{\mathcal{E}_\alpha}$. There exists $C_3>0$ (independent of $\alpha$) such that for all $\alpha \geq \alpha_0$
\begin{align}
\label{eq:approx-gs}
\dist_{L^2} \left( \Theta ( \psi_\alpha), \psi_{\rm osc} \right) \leq C_3 \alpha^{-1/20},  \; \quad \dist_{H^1} \left( \Theta ( \psi_\alpha), \psi_{\rm osc} \right) \leq C_3 \alpha^{9/40} \; .  
\end{align}
\end{prop}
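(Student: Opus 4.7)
The plan is to prove the two-sided energy expansion \eqref{eq:approx-gse} by matching a Gaussian trial-state upper bound with a lower bound that exploits the harmonic-oscillator structure emerging in the strong-coupling limit, and then to deduce the $L^2$ and $H^1$ bounds in \eqref{eq:approx-gs} from the resulting energy resolution together with a quantitative gap of the linearized functional on the subspace orthogonal to the translation and phase zero-modes.

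For the upper bound I would evaluate $\cE_\alpha$ at the Gaussian $\psio$. Writing the attractive term of \eqref{def:E} in Fourier variables and using $(2\pi)^3|\widehat{\varrho}_{\psio}(k)|^2 = e^{-|k|^2/(2m\omega)}$, a Taylor expansion of the exponential together with the kinetic energy $3\omega/4$ of $\psio$ gives $\cE_\alpha(\psio) = \tfrac{3\omega}{4} - \alpha\|g\|_2^2 + \tfrac{\alpha\|\nabla g\|_2^2}{4m\omega} + O(\alpha(m\omega)^{-2})$, where the remainder is controlled by $\int|k|^4 v^2/\varepsilon\,dk < \infty$, finite by the $H^2$-regularity of $g$ in Assumption \ref{ass:reg}. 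The frequency $\omega^2 = \alpha\|\nabla g\|_2^2/(3m)$ of \eqref{def:osc} is precisely the one balancing the kinetic term and the quadratic correction, producing the upper bound on $e_\alpha$.

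For the lower bound I would start from a minimizer $\psi_\alpha \in \minE$ (provided by Proposition \ref{prop:existence-and-unique}), translated so that $\int x|\psi_\alpha|^2\,dx = 0$, and split $\cE_\alpha(\psi_\alpha) = \expval{-\Delta/(2m)}{\psi_\alpha} - \alpha\|g\|_2^2 + \alpha\int\tfrac{v^2(k)}{\varepsilon(k)}\bigl(1 - (2\pi)^3|\widehat{\varrho}_{\psi_\alpha}(k)|^2\bigr)dk$. Using the identity $1-|\langle\psi,e^{-ik\cdot x}\psi\rangle|^2 = \|(e^{-ik\cdot x}-\langle e^{-ik\cdot x}\rangle_\psi)\psi\|^2$ together with the centering yields, for $|k|$ below a cut-off $\lambda\sim\sqrt{m\omega}$, a quadratic lower bound $\gtrsim |k|^2\int|x|^2|\psi_\alpha|^2\,dx$ on the integrand, while the high-momentum contribution is controlled via the polynomial decay $|\widehat{g}(k)|\gtrsim(1+|k|)^{-9/2}$ of Assumption \ref{ass:reg}. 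Combining this with the Heisenberg bound $\expval{-\Delta/(2m)}{\psi_\alpha}\gtrsim 1/(m\int|x|^2|\psi_\alpha|^2)$ gives a lower bound that, minimized over the variance of $\psi_\alpha$, reproduces the upper bound up to the claimed subleading error.

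Given the two-sided energy bound, the $L^2$-closeness follows from a spectral-gap argument: near the minimizer the second variation of $\cE_\alpha$ is dominated below by an effective harmonic-oscillator Hamiltonian whose only small eigenvalues correspond to the translation and phase zero-modes, with a bulk gap of order $\omega\sim\sqrt{\alpha}$. After choosing the translation $y\in\bR^3$ and phase $\theta\in[0,2\pi)$ that minimize $\|e^{i\theta}\psi_\alpha(\cdot-y)-\psio\|_2$ and decomposing against $\psio$ and its orthogonal complement, the excess energy $\cE_\alpha(\psi_\alpha)-\cE_\alpha(\psio)$ controls $\omega\|\psi^\perp\|_2^2$; a first crude bound is improved by a bootstrap through the lower bound of Step 2 to the exponent $-1/20$, and the $H^1$ estimate $\alpha^{9/40}$ follows by interpolating the $L^2$ bound against the uniform kinetic-energy estimate $\|\nabla\psi_\alpha\|_2^2\lesssim\sqrt{\alpha}$ implied by the upper bound on $e_\alpha$. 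The main obstacle is the uniformity in Step 2: making the Taylor expansion of $|\widehat{\varrho}_\psi(k)|^2$ around $k=0$ quantitative without \emph{a priori} concentration of $\psi_\alpha$ forces the low/high-momentum split at scale $\alpha^{1/4}$ and the delicate use of both the $H^2$ regularity and the polynomial lower bound on $\widehat{g}$ in Assumption \ref{ass:reg}, and it is the resulting non-sharp energy resolution that dictates the modest exponents appearing in \eqref{eq:approx-gs}.
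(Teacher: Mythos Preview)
Your upper bound via the Gaussian trial state matches the paper. The lower bound, however, follows a genuinely different route from the paper, and as written it has a gap. The paper does \emph{not} work in Fourier space or invoke Heisenberg; instead it uses IMS localization in position space: writing $|\psi_\alpha|^2 = \int \|\chi^{R,z}\psi_\alpha\|_2^2\,|\psi_\alpha^{R,z}|^2\,dz$ with $\psi_\alpha^{R,z}$ supported in a ball of radius $R$, one reduces to compactly supported competitors on which the Taylor expansion $h(x-y)=\int v^2\varepsilon^{-1}\cos(k\cdot(x-y))\,dk \geq \|g\|_2^2 - \tfrac{1}{2}\|\nabla g\|_2^2|x-y|^2 - C|x-y|^4$ is rigorous with remainder $O(\alpha R^4)$; optimizing $R=\alpha^{-1/6}$ yields the $\alpha^{1/3}$ error and, as a by-product, the inequality $\langle\ho\rangle_{\psi_\alpha^{R,z}} - e_{\rm osc} \leq C\alpha^{1/3}$. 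Your Fourier approach needs $1-|\widehat\varrho_{\psi_\alpha}(k)|^2 \geq (|k|^2/3)\int|x|^2|\psi_\alpha|^2$ up to controllable error for $|k|\lesssim\sqrt{m\omega}$ in order to recover the \emph{sharp} coefficient via Heisenberg, but that Taylor bound itself requires a priori control on $\int|x|^4|\psi_\alpha|^2$, which you do not yet have; the argument is circular at this point. Also, the polynomial lower bound $|\widehat g(k)|\geq(1+|k|)^{-9/2}$ plays no role in the energy lower bound; in the paper it enters only later, to compare densities.

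The $L^2$ closeness argument has a sign problem: $\psi_\alpha$ is a minimizer, so $\cE_\alpha(\psi_\alpha)-\cE_\alpha(\psio)\leq 0$ cannot control $\omega\|\psi^\perp\|_2^2$. The paper proceeds in three separate steps. First, the IMS by-product above plus oscillator coercivity \eqref{eq:coerc-osc} give $\|\psi_\alpha^{R,z}-\psio\|_2\lesssim\alpha^{-1/12}$. Second, one proves the concentration $\|x^2\psi_\alpha\|_2\lesssim\alpha^{-1/2}$ directly from the Euler--Lagrange equation, inverting $(-\tfrac{1}{2m}\Delta-\mu_{\psi_\alpha})$ via its explicit Green's function $\sim|z|^{-1}e^{-\sqrt{-4m\mu_{\psi_\alpha}}\,|z|}$ and using $-\mu_{\psi_\alpha}\gtrsim\alpha$. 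Third, one compares $\psi_\alpha$ with $\psi_\alpha^{R,z}$ by estimating $\|\widehat\varrho_{\psi_\alpha}-\widehat\varrho_{\psi_\alpha^{R,z}}\|_2$ via a low/high momentum split at scale $\alpha^{1/5}$: low momenta are controlled by the lower bound on $|\widehat g|$ together with the IMS remainder term $\alpha\|v\varepsilon^{-1/2}(\widehat\varrho_{\psi_\alpha}-\widehat\varrho_{\psi_\alpha^{R,z}})\|_2^2$, while high momenta use the $H^2$ bound $\|\Delta\psi_\alpha\|_2\lesssim\alpha^{1/2}$. Combining these on a ball of radius $\alpha^{-1/5}$ gives the exponent $-1/20$. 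Finally, the $H^1$ bound $\alpha^{9/40}$ does \emph{not} follow from interpolation against the kinetic energy $\|\nabla\psi_\alpha\|_2^2\lesssim\sqrt\alpha$ (that only gives $\alpha^{1/4}$); one needs either interpolation against the $H^2$ bound or, as the paper does, a direct subtraction of the Euler--Lagrange equations for $\psi_\alpha$ and $\psio$.
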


Proposition \ref{prop:gs} is proven in Section \ref{sec:proos-props}. 
 
Here we introduced the norm 
\begin{align}
\dist_{L^2} \left( \Theta ( \psi_\alpha ) ,  \; \psi_{\rm osc} \right)  := \inf_{y', \theta'} \| e^{i \theta'}\psi_\alpha^{y'} - \psi_{\rm osc} \|_2
\end{align}
(and similarly for the $H^1$-norm) quantifying the distance of an element $\psi_\alpha$ of the manifold of minimizers to the harmonic oscillator's ground state. 

We remark that the rate of convergence of \eqref{eq:approx-gs} depends for technical reasons on Ass. \ref{ass:reg} namely the regularity of the function $g$.  

\subsection{Traveling waves} 
\label{sec:tw}

The dynamical equations corresponding to the energy functional $\cG_\alpha$ in \eqref{def:G} are given for $(\psi_t, \varphi_t ) \in H^1 (\bR^3) \times L^2_{\sqrt{\varepsilon}} ( \bR^3)$ by the system of coupled partial differential equations \eqref{def:dyn}. The dynamical equations are well-posed as the following Lemma shows. 

\begin{lem}
\label{lemma:dyn}
Let $\varepsilon, v$ satisfy Ass. \ref{ass:reg}. For any $\left( \psi_0, \varphi_0 \right) \in H^1 ( \mathbb{R}^3) \times L_{\sqrt{\varepsilon}}^2 ( \mathbb{R}^3)$ there exists a unique global solution of \eqref{def:dyn}. Furthermore, 
\begin{align}
\mathcal{G}_\alpha( \psi_0,\varphi_0 ) = \mathcal{G}_\alpha( \psi_t, \varphi_t ), \quad \text{and} \quad \| \psi_t \|_2 = \| \psi_0 \|_2 
\end{align}
and there exists $C>0$ such that for $(\psi_0, \varphi_0)$ with $\mathcal{G}( \psi_0, \varphi_0) \leq C \alpha$ we have for all $t \in \mathbb{R}$
\begin{align}
\| \nabla \psi_t \|_2 \leq C \sqrt{\alpha} \quad \text{and} \quad \| \varphi_t \|_{L^2_{\sqrt{\varepsilon}}} \leq C \sqrt{\alpha} \; . 
\end{align}
\end{lem}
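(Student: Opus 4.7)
The plan is three steps: (i) local well-posedness by a Duhamel fixed-point argument; (ii) conservation of $\|\psi\|_2$ and $\cG_\alpha$ by direct time-differentiation; (iii) global continuation via an a priori bound obtained by completing the square in $\cG_\alpha$. The analytic inputs from Ass.~\ref{ass:reg}--\ref{ass:super} that I would use are upper bounds on the form-factor, most importantly that $v/\sqrt{\varepsilon}\in L^2(\bR^3)$ (integrability at $k=0$ using $\varepsilon\geq\cv|k|$ from Ass.~\ref{ass:super} together with pointwise control of $v\sqrt\varepsilon$ coming from the $H^2$-regularity of $\cg$, and integrability at infinity from $v\sqrt\varepsilon\in L^2$ in Ass.~\ref{ass:reg}).

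\textbf{Local well-posedness.} The $\varphi$-equation is linear in $\varphi$ for $\psi$ fixed and solves explicitly to
\begin{equation*}
\varphi_t(k)=e^{-i\varepsilon(k)t}\varphi_0(k)-i\sqrt{\alpha}\,\varepsilon(k)\int_0^t e^{-i\varepsilon(k)(t-s)}\sigma_{\psi_s}(k)\,ds,
\end{equation*}
and the $\psi$-equation has a Duhamel representation with free propagator $e^{it\Delta/(2m)}$. A standard contraction mapping in a ball of $C([0,T];H^1(\bR^3))\times C([0,T];L^2_{\sqrt{\varepsilon}}(\bR^3))$ for small $T$ does the job. The Lipschitz bounds are supplied by Cauchy--Schwarz in Fourier,
\begin{equation*}
\|V_\varphi\|_{L^\infty}\leq 2\|v/\sqrt\varepsilon\|_2\|\sqrt\varepsilon\varphi\|_2,\qquad\|\sqrt\varepsilon\sigma_\psi\|_2\leq C\|\psi\|_2^2,
\end{equation*}
together with the analogous bound for $\nabla V_\varphi$ (using also $\cv|k|\leq\varepsilon(k)$) that propagates $H^1$-regularity of $\psi$.

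\textbf{Conservation laws.} Since $V_\varphi$ is real by construction, $\rh_{\sqrt\alpha\varphi_t}$ is self-adjoint and $\tfrac{d}{dt}\|\psi_t\|_2^2=0$. Differentiating $\cG_\alpha(\psi_t,\varphi_t)$ term by term shows that all contributions involving $\partial_t\psi_t$ cancel by self-adjointness of $\rh_{\sqrt\alpha\varphi_t}$, leaving only $2\Re\langle\varphi_t+\sqrt\alpha\sigma_{\psi_t},\partial_t\varphi_t\rangle_\varepsilon$, where $\langle f,h\rangle_\varepsilon:=\int\varepsilon f\bar h\,dk$. Substituting $\partial_t\varphi_t=-i\varepsilon(\varphi_t+\sqrt\alpha\sigma_{\psi_t})$ from \eqref{def:dyn} makes the integrand purely imaginary, so the real part vanishes.

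\textbf{A priori bound and globalization.} Completing the square gives
\begin{equation*}
\cG_\alpha(\psi,\varphi)=\frac{1}{2m}\|\nabla\psi\|_2^2+\|\sqrt\varepsilon(\varphi+\sqrt\alpha\sigma_\psi)\|_2^2-\alpha\|\sqrt\varepsilon\sigma_\psi\|_2^2,
\end{equation*}
and the elementary bound $\|\hat\varrho_\psi\|_\infty\leq(2\pi)^{-3/2}\|\psi\|_2^2$ together with $v/\sqrt\varepsilon\in L^2$ yields $\|\sqrt\varepsilon\sigma_\psi\|_2^2\leq C$ uniformly for $\psi$ with $\|\psi\|_2=1$. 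Under the hypothesis $\cG_\alpha(\psi_0,\varphi_0)\leq C\alpha$, conservation of $\cG_\alpha$ and $\|\psi_t\|_2$ then forces $\|\nabla\psi_t\|_2\leq C\sqrt\alpha$ and, by the triangle inequality, $\|\varphi_t\|_{L^2_{\sqrt{\varepsilon}}}\leq C\sqrt\alpha$ uniformly in $t\in\bR$; iterating local existence on intervals of fixed length produces a global solution. The main technical point is the bookkeeping in conservation of $\cG_\alpha$: the factor $\varepsilon^{-1}$ on the left of the $\varphi$-equation makes $\sqrt\alpha\sigma_\psi$ the variable conjugate to $\varphi$ in the $\varepsilon$-weighted pairing, and this identification has to be tracked consistently; beyond that the argument is a routine application of semigroup and Duhamel estimates made possible by the regularity of $v/\sqrt\varepsilon$.
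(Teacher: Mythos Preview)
Your three-step scheme (local fixed point via Duhamel, conservation of $\|\psi\|_2$ and $\cG_\alpha$, then globalization from the a priori bound obtained by completing the square) is exactly the standard argument, and it is what the paper has in mind: the paper does not give its own proof but simply refers to \cite[Lemma~2.1]{FG}, where precisely this route is carried out for the non-regularized equations. Your computations for the conservation of $\cG_\alpha$ and for the identity
\[
\cG_\alpha(\psi,\varphi)=\tfrac{1}{2m}\|\nabla\psi\|_2^2+\|\sqrt\varepsilon(\varphi+\sqrt\alpha\,\sigma_\psi)\|_2^2-\alpha\|\sqrt\varepsilon\,\sigma_\psi\|_2^2
\]
are correct.

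There is, however, one point to fix: the lemma is stated under Ass.~\ref{ass:reg} \emph{only}, yet you invoke Ass.~\ref{ass:super} (the bound $\varepsilon\geq \cv|k|$) to obtain $v/\sqrt\varepsilon\in L^2$ and to control $\nabla V_\varphi$. This appears to come from reading $\widehat g$ as $v\sqrt\varepsilon$. In fact the paper's conventions give $\widehat g = v/\sqrt\varepsilon$ (consistently with $h=g*g$ and $\widehat h = v^2/\varepsilon$ in \eqref{def:E}), so the hypothesis $g\in H^2(\bR^3)$ from Ass.~\ref{ass:reg} already yields $v/\sqrt\varepsilon\in L^2$ and $|k|\,v/\sqrt\varepsilon\in L^2$ directly, without any appeal to Ass.~\ref{ass:super}. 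Once you make this correction, your argument proves the lemma under the stated hypothesis and coincides with the approach the paper defers to.
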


The proof of the Lemma follows similarly to \cite[Lemma 2.1]{FG} considering the non-regularized Landau-Pekar equations. The arguments presented in \cite{FG} apply for the regularized case, too, so that we refer for the proof of Lemma \ref{lemma:dyn} to \cite[Lemma 2.1]{FG}.

A traveling wave of velocity $\rv \in \bR^3$ is a solution of \eqref{def:dyn} with initial data $(\psi_\rv,  \varphi_\rv ) \in H^1( \bR^3) \times L^2_{\sqrt{\varepsilon}} ( \bR^3)$,  $\| \psi_\rv \|_2 =1$ satisfying \eqref{def:tw}. The existence of subsonic traveling waves is given by the following Proposition.  
 
 \begin{prop}\label{prop:tw}
Let $\varepsilon, v$ satisfy Ass. \ref{ass:reg} and Ass. \ref{ass:super}. Furthermore assume that $\vert \rv \vert  < \cv $.

\begin{enumerate}
\item[(a)] There exists a traveling wave solution of the form \eqref{def:tw}.

\item[(b)] Furthermore assume that  $\mathcal{G}( \psi_\rv, \varphi_\rv) - e_\alpha \leq \kappa $ for sufficiently small $\kappa>0$ (independent of $\alpha$) and $e_\rv \geq - e_\alpha + \rv^2/4$. Assume that the ground state $\psi_\alpha$ of $\mathcal{E}_\alpha$ is unique up to translations and rotations.  Then there exists $\alpha_0 >0$ and a constant $C>0$ (independent of $\alpha, \rv$) such that 
\begin{align}
\label{eq:tw-cont}
\dist_{L^2} \left( \Theta ( \psi_\alpha ) ,   \; \psi_\rv \right) \leq C \vert  \rv\vert  \; 
\end{align}
for all $\alpha \geq \alpha_0$ and  $\vert \rv \vert  \leq 1$. 
\end{enumerate}
\end{prop}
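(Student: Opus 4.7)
My plan is to treat the two parts of the proposition by distinct techniques. For \textit{(a)} I would use a variational characterization: $(\psi_\rv,\varphi_\rv)$ is a critical point, under the constraint $\|\psi\|_2 = 1$, of the functional
\[ \mathcal{F}_\rv(\psi,\varphi) := \cG_\alpha(\psi,\varphi) - \rv\cdot P(\psi,\varphi), \]
where $P(\psi,\varphi)$ denotes the total electron-field momentum. Under Assumption \ref{ass:super} together with $|\rv| < \cv$, one has $\varepsilon(k) - \rv\cdot k \geq (1 - |\rv|/\cv)\,\varepsilon(k) > 0$ uniformly in $k$, so the quadratic form in $\varphi$ is coercive. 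Completing the square in $\varphi$ yields the explicit minimizer
\[ \varphi_\rv[\psi](k) \;=\; -\sqrt{\alpha}\,(2\pi)^{3/2}\,\frac{v(k)\,\widehat{\varrho}_\psi(k)}{\varepsilon(k) - \rv\cdot k}, \]
and the gauge transformation $\psi(x) = e^{im\rv\cdot x}\tilde\psi(x)$ absorbs the electronic drift $-\rv\cdot\langle\psi, p\psi\rangle$ at the cost of an additive constant $-m|\rv|^2/2$. The problem reduces to minimizing the nonlocal functional
\[ \tilde{\cE}_\alpha^\rv(\tilde\psi) := \expval{-\tfrac{\Delta}{2m}}{\tilde\psi} - \alpha(2\pi)^3 \!\int\! \frac{|v(k)|^2\,|\widehat{\varrho}_{\tilde\psi}(k)|^2}{\varepsilon(k) - \rv\cdot k}\,dk \]
over $\|\tilde\psi\|_2 = 1$. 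Assumption \ref{ass:reg} ensures the nonlocal convolution kernel is $L^\infty$-bounded, so this functional is bounded below and minimizing sequences are $H^1$-bounded; a standard concentration-compactness argument (ruling out vanishing via the negativity of the interaction and dichotomy via strict subadditivity of the infimum in the mass) produces a minimizer modulo translations. Undoing the gauge and reading off $\varphi_\rv$ from the explicit formula yields the traveling wave pair.

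For \textit{(b)} the strategy is to combine the local coercivity of Corollary \ref{cor:loc-coerc} with the linearization of the traveling wave equation around the ground-state orbit. A first crude application of coercivity, using $\cE_\alpha(\psi_\rv) \leq \cG_\alpha(\psi_\rv,\varphi_\rv) \leq e_\alpha + \kappa$, yields $\dist_{L^2}(\Theta(\psi_\alpha), \psi_\rv) \leq C\sqrt{\kappa}$, placing $\psi_\rv$ (for $\kappa$ small) in an $L^2$-tubular neighborhood of the orbit. Pick the closest $\tilde\psi \in \Theta(\psi_\alpha)$ and decompose $\psi_\rv = \tilde\psi + \eta$ with $\eta$ $L^2$-orthogonal to the tangent space $\spn\{\partial_j\tilde\psi,\ i\tilde\psi\}$ of $\Theta$ at $\tilde\psi$. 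Subtracting the ground-state Euler--Lagrange equation $(\rh_{\sqrt{\alpha}\tilde\varphi} - \mu_{\tilde\psi})\tilde\psi = 0$, with $\tilde\varphi = -\sqrt{\alpha}\sigma_{\tilde\psi}$, from the traveling wave equation $(\rh_{\sqrt{\alpha}\varphi_\rv} + i\rv\cdot\nabla + e_\rv)\psi_\rv = 0$ and linearizing the nonlocal map $\psi \mapsto \varphi_\rv[\psi]$ around $\tilde\psi$ produces
\[ L_\alpha\,\eta \;=\; -\,i\rv\cdot\nabla\tilde\psi \;+\; R(\eta,\rv), \]
where $L_\alpha$ is the Hessian of $\cE_\alpha$ at $\tilde\psi$ (with a nonlocal correction from linearizing $\varphi_\rv[\psi]$) and the remainder satisfies $\|R\|_{L^2} = O(\|\eta\|_{L^2}^2 + |\rv|\|\eta\|_{L^2} + |\rv|^2)$. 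Since $\tilde\psi$ can be taken real by Proposition \ref{prop:existence-and-unique}, the Hessian splits on real/imaginary parts, and the forcing $-i\rv\cdot\nabla\tilde\psi$ is $L^2$-orthogonal to the kernel by integration by parts against $\tilde\psi$. Inverting $L_\alpha$ on the orthogonal complement of its kernel using the coercivity of Corollary \ref{cor:loc-coerc} (whose gap is of order $\sqrt{\alpha}$ through the harmonic-oscillator approximation) and using $\|\nabla\tilde\psi\|_{L^2} = O(\alpha^{1/4})$ from Proposition \ref{prop:gs} yields $\|\eta\|_{L^2} \leq C|\rv|$ uniformly in $\alpha \geq \alpha_0$ and $|\rv| \leq 1$. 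The hypothesis $e_\rv \geq -e_\alpha + \rv^2/4$ is used to control the projection of the linearized equation onto the kernel direction $i\tilde\psi$, forcing the Lagrange-multiplier shift $e_\rv + \mu_{\tilde\psi}$ to have the expected $O(|\rv|^2)$ size and excluding pathological branches.

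The main obstacle is the careful linearization of the $\psi$-dependent phonon field. The map $\psi \mapsto \varphi_\rv[\psi]$ is nonlocal with convolution kernel weighted by $1/(\varepsilon - \rv\cdot k)$, and its first-order expansion around $\tilde\psi$ contributes a nonlocal term to $L_\alpha$ which must be shown, using Assumption \ref{ass:reg} (in particular the pointwise lower bound on $|\widehat{\cg}|$), to be compatible with the coercive leading-order quadratic form. Equally delicate is tracking the precise $\alpha$-scaling of all remainder terms so that the constant $C$ in \eqref{eq:tw-cont} is $\alpha$-uniform; this is where the harmonic-oscillator approximation of Proposition \ref{prop:gs} and the quantitative form of Corollary \ref{cor:loc-coerc} must be invoked quantitatively rather than qualitatively.
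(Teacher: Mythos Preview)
Your approach to part \textit{(a)} is essentially equivalent to the paper's: both minimize the action $\cG_\alpha(\psi,\varphi) - \rv\cdot P(\psi,\varphi)$ and use Assumption~\ref{ass:super} to get coercivity in $\varphi$. You complete the square in $\varphi$ first and reduce to a $\psi$-only functional before running concentration--compactness, while the paper works directly with the two-variable functional; either route is fine.

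For part \textit{(b)} your outline has the right skeleton (decompose, linearize, invert the Hessian on the complement of its kernel), but the remainder estimate you claim, $\|R\|_{L^2} = O(\|\eta\|_{L^2}^2 + |\rv|\,\|\eta\|_{L^2} + |\rv|^2)$, is not $\alpha$-uniform and the bootstrap does not close. The nonlinearity in the $\psi$-equation is $-2\alpha(h_\rv * |\psi_\rv|^2)\psi_\rv$, so the quadratic remainder in $\eta$ is of size $O(\alpha\|\eta\|_{L^2}^2)$, not $O(\|\eta\|_{L^2}^2)$. With the a~priori bound $\|\eta\|_{L^2} \lesssim \kappa^{1/2}\alpha^{-1/4}$ from coercivity and the Hessian gap $\sim \alpha^{1/2}$, inverting $L_\alpha$ on this remainder produces a contribution of order $\alpha^{-1/2}\cdot\alpha\cdot\kappa\,\alpha^{-1/2} = \kappa$, which is $O(1)$ and cannot be absorbed.

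The paper closes this gap by first proving the localization estimate $\|x^2\psi_\rv\|_{L^2} \leq C\alpha^{-1/2}$ via an explicit Green's-function computation for $(-\Delta/(2m) + i\rv\cdot\nabla + e_\rv)^{-1}$. This is precisely where the hypothesis $e_\rv \geq -e_\alpha + \rv^2/4$ enters: it guarantees the resolvent exists and its kernel decays like $e^{-c\sqrt{\alpha}\,|x|}$, forcing $\psi_\rv$ to concentrate on scale $\alpha^{-1/4}$. Then, writing the equation for $\delta_1$ with the projection $Q_\rv = 1 - |\psi_\rv\rangle\langle\psi_\rv|$ kills the constant term in the Taylor expansion of $h$, so only $x^2$-weighted quantities survive; each such factor gains $\alpha^{-1/2}$ and the bootstrap closes with an $\alpha$-independent constant. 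Your identification of the role of the $e_\rv$ hypothesis (controlling the $i\tilde\psi$-projection of the linearized equation) is therefore off; its actual purpose is to supply the exponential decay rate in the Green's function that yields the a~priori localization of $\psi_\rv$.
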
  

Note that Theorem \ref{thm:tw1} follows immediately from Proposition \ref{prop:tw} \textit{(a)}. The proof of Proposition \ref{prop:tw} is given in Section \ref{sec:tw}.

Furthermore note that a similar approximation as in \eqref{eq:tw-cont} holds for the field $\varphi_\rv$, too. For this we remark that instead of minimizing w.r.t. to the field $\varphi$ in \eqref{def:gs} first (as explained in Section \ref{sec:results}) we can take the infimum w.r.t. to the wave function $\psi$ first, too yielding the functional 
\begin{align}
\mathcal{F}_\alpha( \varphi ) = \inf_{\psi} \mathcal{G}_\alpha ( \psi, \varphi), \quad \mathcal{M}_{\mathcal{F}_\alpha} = \lbrace \varphi \in L^2_{\sqrt{\varepsilon}} ( \mathbb{R}^3) \; \vert  \mathcal{F}_\alpha ( \varphi_\alpha ) = e_\alpha \rbrace \; . 
\end{align}
We remark that by the energy functional's symmetries for any $\varphi_\alpha \in \mathcal{M}_{\mathcal{F}_\alpha}$ and 
\begin{align}
\Omega (\varphi_\alpha ) := \lbrace  e^{i z\cdot } \varphi_\alpha \vert \; z \in \mathbb{R}^3 \rbrace 
\end{align}
it follows that $\Omega ( \varphi_\alpha ) \subseteq  \mathcal{M}_{\mathcal{F}_\alpha}$. Then under the same assumption as in Proposition \ref{prop:tw} there exists $C>0$ (independent of $\alpha$) such that 
\begin{align}
\label{eq:tw-cont-2}
\dist_{L^2_{\sqrt{\varepsilon}}} \left( \Omega ( \varphi_\alpha ), \; \varphi_\rv \right)  \leq C \sqrt{\alpha} \vert \rv \vert \; . 
\end{align}

We remark that Proposition \ref{prop:tw} {\it (a)} shows that subsonic traveling waves exist for all $\alpha >0$. However the approximations \eqref{eq:tw-cont}, \eqref{eq:tw-cont-2} of the second part of the Theorem holds for sufficiently large $\alpha \geq \alpha_0$ only. The restriction to sufficiently large $\alpha >0$ in part {\it (b)} ensures the validity of the global coercivity estimates (see Corollary \ref{cor:coerc}) that are proven for sufficiently large $\alpha > \alpha_0$ only. Furthermore we notice that for $\rv =0$ the pair of ground states $( \psi_\alpha, \varphi_\alpha)$ provide a traveling wave solution with $e_\rv = e_\alpha$. In particular the assumption on the phase from part {\it (b)}, made for technical reasons only, is satisfied for $\rv =0$.

\section{Properties of the energy functional $\cE_\alpha$} 
\label{sec:gs}

In this section, we prove Proposition \ref{prop:existence-and-unique} and Proposition \ref{prop:gs} on the properties of the energy functional $\mathcal{G}_\alpha$. 

The proof of Proposition \ref{prop:existence-and-unique} relies on a comparison of properties of $ \mathcal{E}_\alpha$ with the properties of $\ho$ in the limit $\alpha \rightarrow \infty$. Then existence and uniqueness (up to translations and changes of the phase) for pairs of minimizers $( \psi_\alpha, \varphi_\alpha)$ of $\cG_\alpha$ follow with the choice $\varphi_\alpha = - \sqrt{\alpha} \sigma_{\psi_\alpha}$. 

First, in Lemma \ref{lemma:existence}, we prove the existence of a minimizer $\psi_\alpha$ for all $\alpha$. Next we show the ground state (energy) is well approximated through the harmonic oscillator (Lemma \ref{lemma:approx-E}). This approximations allows to show that the Hessian modulo its zero modes of $\mathcal{E}_\alpha$ is asymptotically for $\alpha \rightarrow \infty$ characterized by the harmonic oscillator, and thus positive (Lemma \ref{lemma:Hessian}). This fact has several consequences: We infer first local (Cor. \ref{cor:loc-coerc}) and later global coercivity estimates (Cor. \ref{cor:coerc}) for sufficiently large $\alpha \geq \alpha_0$. For the latter we assume that the ground state $\psi_\alpha$ of $\mathcal{M}_{\mathcal{E}_\alpha}$ is unique up to translations and phase. Furthermore we obtain that the ground state energy $e_\alpha$ is separated from the first excited eigenvalue by a gap of order $\sqrt{\alpha}$ (Cor. \ref{cor:gap}). 

We remark that the strategy for the proofs in this section follow \cite[Section 3]{FS} considering the non-regularized Pekar functional on a Torus of length $L$. For sufficiently large $L$ the uniqueness of the ground state and coercivity estimates are proven based on a comparison with the non-regular Pekar functional defined on the full space for which these properties are well known.

\subsection{Existence} \label{subsec:exist} First we show the existence of minimizers of $\cE_\alpha$ for all $\alpha >0$ in the subsequent Lemma. 

\begin{lem} \label{lemma:existence}
Let $\varepsilon, v$ satisfy Ass. \ref{ass:reg}. For all $\alpha >0$, there exists a minimizer $0< \psi_\alpha \in C^\infty ( \mathbb{R}^3)$ of the functional $\mathcal{E}_\alpha$ satisfying the Euler-Lagrange equation
\begin{align}
\left( \h_{ \sqrt{\alpha}\varphi_\alpha} - \mu_{\psi_\alpha} \right) \psi_\alpha = 0, \quad  \text{with} \quad \mu_{\psi_\alpha} := \expval{h_{\sqrt{\alpha}\varphi_\alpha}}{\psi_\alpha} \; . 
\end{align}
\end{lem}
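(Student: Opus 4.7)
My plan is to apply the direct method of the calculus of variations, combined with Lions' concentration-compactness principle to overcome the non-compact embedding $H^1(\bR^3)\hookrightarrow L^2(\bR^3)$. Assumption~\ref{ass:reg} gives $\cg\in H^2\subset L^2$, hence $|\widehat{\cg}|^2\in L^1$, and therefore the convolution kernel $\ch$ is bounded, continuous, with non-negative Fourier transform. Writing the interaction in Fourier space,
\begin{align*}
0\leq \langle\psi,(\ch*|\psi|^2)\psi\rangle\leq \|\ch\|_\infty\|\psi\|_2^4,
\end{align*}
so $\cE_\alpha(\psi)\geq -\alpha\|\ch\|_\infty>-\infty$ on $\{\|\psi\|_2=1\}$, and any minimizing sequence $(\psi_n)$ is bounded in $H^1(\bR^3)$ by the kinetic part.

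To extract strong $L^2$-convergence, I would apply Lions' principle to the densities $|\psi_n|^2$, yielding one of three alternatives: vanishing, dichotomy, or tightness modulo translation. Vanishing forces $\langle\psi_n,(\ch*|\psi_n|^2)\psi_n\rangle\to 0$ (since $\ch$ is bounded and $\|\psi_n\|_{L^q}\to 0$ for $q\in(2,6)$ under vanishing), hence $\liminf \cE_\alpha(\psi_n)\geq 0$; this contradicts $e_\alpha<0$, which I verify by testing against a Gaussian $\psi_\gamma(x)=(\gamma/\pi)^{3/4}e^{-\gamma|x|^2/2}$ whose energy behaves as $\cE_\alpha(\psi_\gamma)\sim\frac{3\gamma}{4m}-\alpha\,C\,\gamma^{3/2}$, negative for $\gamma$ tuned to $\alpha$ and exploiting $|\widehat{\cg}(0)|>0$ from Ass.~\ref{ass:reg}. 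Dichotomy is ruled out by the strict binding inequality $e_\alpha(1)<e_\alpha(\beta)+e_\alpha(1-\beta)$ for $\beta\in(0,1)$, which follows from the scaling identity $e_\alpha(\beta)=\beta\,e_{\alpha\beta}(1)$ (quartic interaction) together with strict monotonicity of $\alpha\mapsto e_\alpha(1)$ in the region where it is negative.

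In the remaining compact scenario, after translation by some $y_n\in\bR^3$ the sequence $\widetilde{\psi}_n(\cdot):=\psi_n(\cdot+y_n)$ converges strongly in $L^2$ to a limit $\psi_\alpha$ of unit norm. Weak lower semicontinuity of the kinetic term together with continuity of the interaction on $L^2$-strongly convergent bounded $H^1$-sequences then yields $\cE_\alpha(\psi_\alpha)\leq\liminf\cE_\alpha(\widetilde{\psi}_n)=e_\alpha$, making $\psi_\alpha$ a minimizer. The Euler-Lagrange equation arises from sphere-tangent variations, with Lagrange multiplier $\mu_{\psi_\alpha}=\langle\psi_\alpha,\rh_{\sqrt{\alpha}\varphi_\alpha}\psi_\alpha\rangle$ obtained by pairing with $\psi_\alpha$. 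Positivity $\psi_\alpha>0$ follows from Perron-Frobenius applied to the Schr\"odinger operator $\rh_{\sqrt{\alpha}\varphi_\alpha}$, whose potential $V_{\sqrt\alpha\varphi_\alpha}$ is real, bounded, and continuous by Assumption~\ref{ass:reg}; smoothness $\psi_\alpha\in C^\infty(\bR^3)$ then follows by elliptic bootstrap from the regularity of this potential.

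The principal obstacle is showing $e_\alpha<0$ (and strict binding) for \emph{all} $\alpha>0$: unlike the classical Pekar problem with Coulomb kernel, the regularized kernel $\ch$ is not scale invariant, so a careful trial-state analysis exploiting Assumption~\ref{ass:reg}---specifically the quantitative lower bound $|\widehat{\cg}(k)|\geq(1+|k|)^{-9/2}$, which in particular guarantees $|\widehat{\cg}(0)|>0$---is required. Once $e_\alpha<0$ is established, the vanishing and dichotomy scenarios are excluded by standard Choquard-type arguments, and the remaining steps proceed routinely.
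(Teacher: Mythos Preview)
Your overall strategy via Lions' concentration--compactness is sound and more explicit than the paper's, which invokes a translation-compactness lemma of Lieb directly (bypassing a separate dichotomy analysis) and then asserts strong $L^p$-convergence of the translated subsequence. Both routes ultimately hinge on the same ingredient you flag as the principal obstacle: ruling out vanishing, which amounts to $e_\alpha<0$.

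Your Gaussian trial-state argument for $e_\alpha<0$ is where the proposal has a genuine gap. The asymptotic $\cE_\alpha(\psi_\gamma)\sim\frac{3\gamma}{4m}-\alpha C\gamma^{3/2}$ is correct, but it is an expansion as $\gamma\to 0$: the interaction equals, up to constants, $\int|\widehat{\cg}(k)|^2 e^{-|k|^2/(2\gamma)}\,dk$, which behaves like $|\widehat{\cg}(0)|^2\gamma^{3/2}$ precisely because $\widehat{\cg}$ is finite at the origin. In that regime $\gamma\gg\gamma^{3/2}$, so the right-hand side is \emph{positive} for all small $\gamma$, irrespective of $\alpha$; outside this regime the interaction saturates near $\|\cg\|_2^2$ while the kinetic term continues to grow linearly in $\gamma$. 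Hence Gaussians yield $\cE_\alpha(\psi_\gamma)<0$ only once $\alpha$ exceeds a fixed positive threshold, and more generally, whenever $\widehat{\cg}\in L^\infty$ (a case not excluded by Assumption~\ref{ass:reg}) Gagliardo--Nirenberg gives $\langle\psi,(\ch*|\psi|^2)\psi\rangle\leq C\|\nabla\psi\|_2^{3}$ for every normalized $\psi$, forcing $e_\alpha=0$ with the infimum not attained for all sufficiently small $\alpha$. The paper's proof does not address this point either. Your remaining steps---strict binding from the scaling identity, Euler--Lagrange via constrained variation, Perron--Frobenius positivity, elliptic bootstrap---are all correct once $e_\alpha<0$ is secured.
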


\begin{proof}
Since $h = g* g$, we have  $\| h \|_{\infty} \leq C \| g \|_2^2 $ and 
\begin{align}
\expval{\left( \ch * \vert \psi \vert^2\right)}{\psi} \leq C \| \cg \|_2^2 \| \psi \|_2^4 
\end{align}
so that by Ass. \ref{ass:reg} there exists $C>0$ such that 
\begin{align}
\mathcal{E}_\alpha ( \psi ) \geq  \tfrac{1}{2m}  \| \nabla \psi \|_2^2  - C \alpha  \; . \label{eq:bound-E}
\end{align}
From \eqref{eq:bound-E} we infer on one hand that $e_\alpha \geq -C \alpha $ for all $\alpha$.  On the other hand, in order to prove the existence of a minimizer, we remark that \eqref{eq:bound-E} shows that any minimizing sequence $\left( \psi_n \right)_{n \in \mathbb{N}}$  is bounded in $H^1$, uniformly in $n \in \mathbb{N}$.  For this reason the sequence by \cite[Lemma 6]{Lieb} resp. \cite[Theorem 8.10]{LiebLoss}, there exists a sequence $\left( y_n \right)_{n \in \mathbb{N}} \in \mathbb{R}^3$ such that the translated sequence $\left( \psi_{n}^{y_n} \right)_{n \in \mathbb{N}}$ has a sub-sequence  $\left( \psi_{n_j} \right)_{n_j \in \mathbb{N}} := \left( \psi_{n_j}^{y_{n_j}} \right)_{n_j \in \mathbb{N}}$ that converges weakly in $H^1( \mathbb{R}^3)$ to a non-zero function. It follows from the Sobolev inequality that this sub-sequence converges strongly in $L^p$ for $2 \leq  p \leq 6$  to a non-zero limit. The limiting function $\psi_\alpha \in H^1$ is again $L^2$-normalized and, moreover, satisfies 
\begin{align}
\expval{- \tfrac{\Delta}{2m}}{\psi_\alpha} \leq \lim_{{n_j} \rightarrow \infty} \expval{- \tfrac{\Delta}{2m}}{\psi_{n_j}} 
\end{align}
by semi-lower continuity of the $H^1$-norm.  Since 
\begin{align}
&\left\vert \expval{\left( \ch * \vert \psi_{n_j} \vert^2 \right) }{\psi_{n_j}} - \expval{\left( \ch * \vert \psi_\alpha \vert^2 \right)}{\psi_\alpha} \right\vert \notag\\
&\quad \leq \left\vert  \bra{\psi_{n_j}} \left( \ch * \vert \psi_{n_j} \vert^2 \right) \ket{\psi_\alpha - \psi_{n_j}} \right\vert +  \left\vert  \bra{\psi_\alpha} \left( \ch * \vert \psi_n \vert^2 \right) \ket{\psi_\alpha - \psi_{n_j}} \right\vert \notag\\
&\quad\quad + \left\vert  \bra{\psi_{n_j}} \left( \ch * \vert \psi_\alpha \vert^2 \right) \ket{\psi_\alpha - \psi_{n_j}} \right\vert + \left\vert  \bra{\psi_\alpha} \left( \ch * \vert \psi_\alpha \vert^2 \right) \ket{\psi_\alpha - \psi_{n_j}} \right\vert
\end{align}
and $\| \left( \ch * \vert \psi_1 \vert^2 \right) \psi_2 \|_{2} \leq C \| \psi_1 \|_{2}^2 \| \psi_2 \|_{2} $ for any $\psi_1,\psi_2 \in L^2$, we find 
\begin{align}
&\left\vert \expval{\left( \ch * \vert \psi_{n_j} \vert^2 \right) }{\psi_{n_j}} - \expval{\left( \ch * \vert \psi_\alpha \vert^2 \right)}{\psi_\alpha} \right\vert \leq C \| \psi_\alpha - \psi_{n_j} \|_2 \rightarrow 0 
\end{align}
as $n \rightarrow \infty$.  Therefore,  
\begin{align}
\mathcal{E}_\alpha ( \psi_\alpha ) \leq \liminf_{{n_j} \rightarrow \infty} \mathcal{E}_{\alpha} (\psi_{n_j} ) = e_\alpha 
\end{align}
and with $\mathcal{E}_ \alpha ( \psi_\alpha ) \geq \liminf_{{n_j} \rightarrow \infty} \mathcal{E}_\alpha ( \psi_{n_j}) = e_\alpha$ (as $( \psi_{n_j} )_{{n_j} \in \mathbb{N}}$ is a minimizing sequence), we conclude that $\mathcal{E}_{\alpha} ( \psi_\alpha ) = e_\alpha$ and, thus, $\psi_\alpha$ is a minimizer.  By invariance of $\cE_\alpha$ w.r.t. to translations and phase, any element of $\Theta (\psi_\alpha)$ defined in \eqref{def:Theta} is a minimizer, too. The positivity and regularity properties of $\psi_\alpha$ follows by standard bootstrap arguments (see for example \cite[Lemma 3.3]{FS}). 
\end{proof}

\subsection{Approximation} \label{subsec:approx} Next we prove that in the strong coupling limit the spectrum of $\mathcal{E}_\alpha$ is well approximated by the harmonic oscillator $\ho$'s spectrum. The idea is to use the Taylor expansion of the potential given by Ass. \ref{ass:reg} through 
\begin{align}
h( x) =  \int \frac{v^2(k)}{\varepsilon(k)} e^{i k \cdot x} dk  = \int \frac{v^2(k)}{\varepsilon(k)} \cos ( k \cdot x) dk \;   \label{eq:taylor-h}
\end{align}
to show its asymptotic quadratic behavior. The ground state energy of $\ho$ (as defined in \eqref{def:osc}) is well known 
\begin{align}
e_{\rm osc} = \frac{\sqrt{3\alpha}\| \nabla g \|_2^2}{ 2\sqrt{m}}
\end{align}
and separated from the rest of the spectrum by a gap of order $\sqrt{\alpha}$. For this we compare the Hessian of $\mathcal{E}_\alpha$ with 
\begin{align}
\label{def:hessian-osc}
\mathcal{H}_{\rm osc}:= \inf_{\substack{f \in H^1 ( \mathbb{R}^3), \| f \|_2 =1 \\ f \in {\rm span} \lbrace \psio \rbrace^\perp }} \expval{\Ho}{f} , \quad \text{with} \quad \Ho = \ho - e_{\rm osc}
\end{align}
that is known to be positive, and thus, yielding coercivity estimates of the form 
\begin{align}
\label{eq:coerc-osc}
\expval{\ho}{f} - e_{\rm osc} \geq C \alpha^{1/2} \inf_{\theta \in (0, 2 \pi]}\| e^{i \theta} \psio -  f \|_2^2 \notag \\
\expval{\ho}{f} - e_{\rm osc} \geq C \alpha^{1/4} \inf_{\theta \in (0, 2 \pi]}\| e^{i \theta} \psio -  f \|_{H^1 ( \mathbb{R}^3)}^2 \; . 
\end{align}
Furthermore we compare the deviation of the ground state of $\psi_\alpha$ with the one of the harmonic oscillator that is known to be $\| x^2 \psio \|_2 = C \alpha^{-1/2}$ for some $C>0$.

\begin{lem}
\label{lemma:approx-E}
Let $\varepsilon, v$ satisfy Ass. \ref{ass:reg}. 

\begin{enumerate}
\item[(a)] Then there exists $C_1,C_2,C_3>0$ (independent of $\alpha$) such that 
\begin{align}
\label{eq:spec}
C_1 \alpha^{1/3} \leq e_\alpha +  \alpha \| \cg \|_2^2 - \frac{\sqrt{3 \alpha} \| \nabla \cg \|_2^2}{2 \sqrt{m}} \leq C_2 \; 
\end{align}
and 
\begin{align}
\vert \mu_{\psi_\alpha} - \mu_{\psio} \vert \leq C_3 \alpha^{5/12}  \;  
\end{align} 
where we introduced the notation $\mu_{\psio} = \expval{- \Delta + 2 \alpha (h * \vert \psio \vert^2)}{\psio}$ \; . 
Let $\psi_\alpha \in \mathcal{M}_{\mathcal{E}_\alpha}$ such that 
\begin{align}
\label{eq:ass-gs}
\dist_{L^2} \left( \Theta ( \psi_\alpha), \psi_{\rm osc} \right) = \| \psio - \psi_\alpha \|_2  \; . 
\end{align}
Then there exists  $\alpha_0 >0$ and $C>0$ (independent of $\alpha$) such that for all $\alpha \geq \alpha_0$ we have 
\begin{align}
\label{eq:gs-scaling}
\| x^2 \psi_\alpha \|_2 \leq C \alpha^{-1/2}
\end{align}

\item[(b)] Let $\psi_\alpha \in \mathcal{M}_{\mathcal{E}_\alpha}$.  Then there exists  $\alpha_0 >0$ and $C>0$ (independent of $\alpha$) such that for all $\alpha \geq \alpha_0$ we have 
\begin{align}
\label{eq:approx-gse-E}
\dist_{L^2} \left( \Theta ( \psi_\alpha), \psi_{\rm osc} \right) \leq C \alpha^{-1/20} \; . 
\end{align}

\item[(c)] Let $\psi_\alpha \in \mathcal{M}_{\mathcal{E}_\alpha}$ such that 
\begin{align}
\dist_{L^2} ( \Theta( \psi_\alpha), \psio ) = \| \psi_\alpha - \psio \|_2 \; . 
\end{align}
Then there exists $\alpha_0 >0 $ and $C>0$ (independent of $\alpha$) such that for  sufficiently large $\alpha \geq \alpha_0$ we have 
\begin{align}
\| x^2 \left( \psi_\alpha -  \psio  \right) \|_2  \leq C \alpha^{-11/20}  \; . 
\end{align}

\item[(d)] Under the same assumptions as in part {\rm (c)} there exists $\alpha_0 >0 $ and $C_1,C_2,C_3 >0$ (independent of $\alpha$) such that for  sufficiently large $\alpha \geq \alpha_0$ we have 
\begin{align}
\label{eq:gs-H1-lu}
C_1 \alpha^{1/4} \leq \| \nabla \psi_\alpha \|_2 \leq C_2 \alpha^{1/4} \; 
\end{align}
and furthermore 
\begin{align}
\|  \nabla \left( \psi_\alpha -  \psio  \right) \|_2  \leq C_3 \alpha^{9/40}  \; . 
\end{align}
\end{enumerate}

\end{lem}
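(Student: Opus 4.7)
The overall strategy is to exploit the Taylor expansion \eqref{eq:taylor-h} of $\ch$ at the origin. By Ass.~\ref{ass:reg} together with the radial symmetry of $v^2/\varepsilon$ one has $\ch(x)=\|\cg\|_2^2-\tfrac{\|\nabla \cg\|_2^2}{6}|x|^2+\mathcal{O}(|x|^4)$, the quadratic coefficient arising from the isotropic identity $\int(k\cdot x)^2 f(|k|)\,dk=\tfrac{|x|^2}{3}\int|k|^2 f(|k|)\,dk$. Inserted into $\cE_\alpha$, this identifies $\cE_\alpha(\psi)+\alpha\|\cg\|_2^2$ with $\langle\psi,\ho\psi\rangle$ up to a remainder of size $\mathcal{O}(\alpha\|x^2\psi\|_2^2)$ coming from the quartic tail, plus a center-of-mass term $\tfrac{\alpha\|\nabla \cg\|_2^2}{3}|\langle x\rangle_\psi|^2$ that vanishes after an appropriate translation. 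All four statements of the lemma follow from this quadratic approximation together with the oscillator coercivity \eqref{eq:coerc-osc}.

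For part (a), the upper bound in \eqref{eq:spec} is obtained by using $\psio$ as a trial state; since $\|x^2\psio\|_2=\mathcal{O}(\alpha^{-1/2})$ the quartic tail contributes only $\mathcal{O}(1)$. Inserting the Taylor expansion into $\cE_\alpha(\psi_\alpha)$ for a minimizer (translated so that $\langle x\rangle_{\psi_\alpha}=0$) and absorbing the remainder against the upper bound yields both the matching lower bound and a preliminary moment estimate $\|x\psi_\alpha\|_2\lesssim\alpha^{-1/4}$. The sharper $\|x^2\psi_\alpha\|_2\leq C\alpha^{-1/2}$ under \eqref{eq:ass-gs} is obtained by testing the Euler-Lagrange equation $(\h_{\sqrt\alpha\varphi_\alpha}-\mu_{\psi_\alpha})\psi_\alpha=0$ against $|x|^{4}\psi_\alpha$ and using the smoothness of $V_{\sqrt\alpha\varphi_\alpha}=-2\alpha(\ch*|\psi_\alpha|^2)$; the bound $|\mu_{\psi_\alpha}-\mu_{\psio}|\lesssim\alpha^{5/12}$ then follows from $\mu_\psi=\cE_\alpha(\psi)-\alpha\langle|\psi|^2,\ch*|\psi|^2\rangle$ applied to both states, comparing each term term-by-term via the Taylor expansion. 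For part (b), the oscillator coercivity gives $\langle\psi_\alpha,\Ho\psi_\alpha\rangle\geq C\alpha^{1/2}\dist_{L^2}(\Theta(\psi_\alpha),\psio)^2$; combined with the Taylor identity this yields $\dist_{L^2}^2\lesssim\alpha^{-1/2}+\alpha^{1/2}\|x^2\psi_\alpha\|_2^2$, which naively would give $\dist_{L^2}\lesssim\alpha^{-1/4}$. The subtlety is a circular dependency — the sharp moment bound of (a) required \eqref{eq:ass-gs}, which itself needs distance control — and I would resolve it by a finite bootstrap: start from a crude moment bound available independently, feed it into the coercivity to obtain a weak distance estimate, use that weak estimate to sharpen the moment bound through the Euler-Lagrange equation, and iterate. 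The degradation from $\alpha^{-1/4}$ to $\alpha^{-1/20}$ records the finite number of iterations needed to close the loop.

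Parts (c) and (d) follow once (a) and (b) are in place. Writing $\Ho(\psi_\alpha-\psio)$ as a source term whose size is controlled by (a),(b), the factorization $\ho-e_{\rm osc}\sim a^*a$ and the standard oscillator algebra convert $L^2$-control into estimates for both $\|x^2(\psi_\alpha-\psio)\|_2$ and $\|\nabla(\psi_\alpha-\psio)\|_2$, yielding (c) and the second estimate in (d). The two-sided $\|\nabla\psi_\alpha\|_2\asymp\alpha^{1/4}$ of (d) combines the energy identity $\tfrac{1}{2m}\|\nabla\psi_\alpha\|_2^2=\cE_\alpha(\psi_\alpha)+\alpha\langle|\psi_\alpha|^2,\ch*|\psi_\alpha|^2\rangle$ with the uncertainty-principle inequality $\|\nabla\psi\|_2^2+\alpha\|x\psi\|_2^2\gtrsim\sqrt\alpha$ and the moment bounds from (a). The main technical obstacle is the bootstrap in part (b): the quartic Taylor remainder is controlled by $\|x^2\psi_\alpha\|_2$, but that moment bound only becomes sharp once the distance to $\psio$ is already known to be small. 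Navigating this circularity by a multi-step iteration is precisely what produces the unusual fractional exponents $\tfrac{1}{20},\tfrac{11}{20},\tfrac{9}{40}$ appearing in the statement.
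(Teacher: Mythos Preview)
Your overall conceptual picture---quadratic approximation of $\ch$ plus oscillator coercivity---is correct, but the proposal misses two concrete technical devices that the paper relies on, and without them the argument does not close.

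\textbf{The lower bound in (a) and breaking the circularity.} You propose to insert the Taylor expansion directly into $\cE_\alpha(\psi_\alpha)$ and ``absorb the remainder against the upper bound.'' This does not work as stated: the quartic remainder $-C\alpha\|x^2\psi_\alpha\|_2^2$ enters the \emph{lower} bound with the wrong sign, and there is no a~priori control on $\|x^2\psi_\alpha\|_2$. The paper avoids this by an IMS localization: one writes $|\psi_\alpha|^2=\int|\chi^{R,z}\psi_\alpha|^2\,dz$ with $\chi^R$ supported on a ball of radius $R$, picks a good center $z$, and Taylor-expands only on the support, so the quartic tail is bounded by $\alpha R^4$ independently of any moment of $\psi_\alpha$. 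Optimizing $R=\alpha^{-1/6}$ produces the $\alpha^{1/3}$ error and, as a by-product, gives a localized function $\psi_\alpha^{R,z}$ that is already close to $\psio$ in $L^2$ by the oscillator coercivity \eqref{eq:coerc-osc}. This localized proxy is what replaces your proposed bootstrap: the closeness of $\psi_\alpha^{R,z}$ to $\psio$ is obtained \emph{before} any information on $\psi_\alpha$ itself.

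\textbf{The moment bound and part (b).} For $\|x^2\psi_\alpha\|_2\leq C\alpha^{-1/2}$ the paper does not test the Euler--Lagrange equation against $|x|^4\psi_\alpha$; instead it inverts $(-\tfrac{1}{2m}\Delta-\mu_{\psi_\alpha})$ explicitly and uses the exponential decay of the Green's function $G_{\mu_{\psi_\alpha}}(z)\sim|z|^{-1}e^{-\sqrt{-4m\mu_{\psi_\alpha}}|z|}$ with $-\mu_{\psi_\alpha}\gtrsim\alpha$. For part (b), the paper does \emph{not} iterate a bootstrap. Having $\psi_\alpha^{R,z}$ close to $\psio$ from the IMS step and $\|x^2\psi_\alpha\|_2\lesssim\alpha^{-1/2}$ from the resolvent, it remains to compare $\psi_\alpha$ with $\psi_\alpha^{R,z}$. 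Here the paper uses the second half of Ass.~\ref{ass:reg}, namely the lower bound $|\widehat g(k)|\geq(1+|k|)^{-9/2}$, which you never invoke: the IMS inequality \eqref{eq:bound-R-0} controls $\|v\varepsilon^{-1/2}(\widehat\varrho_{\psi_\alpha}-\widehat\varrho_{\psi_\alpha^{R,z}})\|_2$, and the pointwise lower bound on $\widehat g$ converts this into an $L^2$ bound on $\widehat\varrho_{\psi_\alpha}-\widehat\varrho_{\psi_\alpha^{R,z}}$ on a ball $|k|\leq R_2$, while the $H^2$ regularity handles $|k|>R_2$. Optimizing $R_2=\alpha^{1/5}$ is what produces the exponent $-1/20$. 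Your bootstrap sketch gives no mechanism to obtain the ``crude moment bound available independently'' that would start the iteration, and it omits the one hypothesis in Ass.~\ref{ass:reg} that is there precisely for this step.

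Parts (c) and (d) are closer in spirit to the paper, which also works from the Euler--Lagrange equations of $\psi_\alpha$ and $\psio$; however the paper again uses the explicit resolvent kernel rather than the ladder-operator factorization you suggest.
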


\begin{proof}
First we remark that in the following proof we denote with $C>0$ a constant independent of $\alpha$. 

\textbf{Proof of  {\it (a)}:} For the upper bound of the ground state $e_\alpha$ we pick the harmonic oscillator's ground state $\psio$ defined in \eqref{def:min-osc} as trial state. Its energy serves as an upper bound for the ground state energy
\begin{align}
e_\alpha \leq \mathcal{E}_\alpha \left( \psio \right)  \; 
\end{align}
and can be explicitly computed with by the potential's \ref{eq:taylor-h} Taylor expansion, Ass. \ref{ass:reg} and $\cos( x) \geq 1 - \frac{x^2}{2}$ 
\begin{align}
\label{eq:upperbound-2}
e_\alpha \leq - \alpha \| g \|_2^2 + e_{\rm osc} + C
\end{align}
for a constant $C >0$ (independent of $\alpha$). 

For the lower bound we use the IMS localization technique to show that it suffices to consider the problem on a ball of radius $R$, where we can use the potential's Taylor expansion. To this end, let $\psi_\alpha$ denote a minimizer realizing
\begin{align}
e_\alpha = \inf_{\psi \in H^1( \mathbb{R}^3)} \mathcal{E}_\alpha ( \psi ) = \mathcal{E}_\alpha ( \psi_\alpha) \; 
\end{align}
and $\chi \in C^\infty ( \mathbb{R}^3)$ a function with support on the ball $B_1$ with radius one such that $\| \chi \|=1$ and $\chi (0) =1$. We define the rescaled function $\chi^R$ with $\| \chi^R \|_2 =1 $ supported on $B_R$ and denote with $\chi^{R,z} = \chi^R ( \cdot - z)$ its shift. The idea is to choose $R$ dependent on $\alpha$. However for simplicity we neglect the dependence of $R$ on $\alpha$ in the notation. We observe that the $L^2$-normalized function 
\begin{align}
\label{eq:def-psiR}
\psi^{R,z}_\alpha = \chi^{R,z} \psi_\alpha / \| \chi^{R,z} \psi_\alpha \|_2
\end{align}
satisfies 
\begin{align}
\label{eq:psiR-norm}
\vert \psi_\alpha (x) \vert^2 = \int_{B_R} \vert \chi^{R,z} \psi_\alpha  \vert^2 (x) dz = \int_{B_R} \vert\psi^{R,z}_\alpha  (x) \vert^2 \|\chi^{R,z} \psi_\alpha \|_2^2 dz 
\end{align}
and thus, by completing the square and standard techniques of IMS localization 
\begin{align}
\mathcal{E}_\alpha ( \psi_\alpha ) + \| \nabla \chi^{R}\|_2^2 = \int_{B_R} \left( \mathcal{E}_\alpha ( \psi^{R,z}_\alpha ) + \alpha \| v \varepsilon^{-1/2} (  \widehat{\varrho}_{\psi_\alpha} - \widehat{\varrho}_{\psi^{R,z}_\alpha} ) \|_2^2 \right) \|\chi^{R,z} \psi_\alpha \|_2^2 dz \; . 
\end{align}
Since $\| \psi^{R,z} \|_2^2 dz $ denotes a probability measure, there exists $z \in B_R$ such that 
\begin{align}
\label{eq:bound-R-0}
\mathcal{E}_\alpha ( \psi_\alpha) + \| \nabla \chi^{R}\|_2^2 \geq  \mathcal{E}_\alpha ( \psi^{R,z}_\alpha ) + \alpha \| v \varepsilon^{-1/2} (  \widehat{\varrho}_{\psi_\alpha} - \widehat{\varrho}_{\psi^{R,z}_\alpha} ) \|_2^2 \; . 
\end{align}
By scaling, we furthermore find $\| \nabla \chi^{R} \|_2^2 \leq C R^{-2}$ yielding 
\begin{align}
\label{eq:bound-R-1}
\mathcal{E}_\alpha ( \psi_\alpha)   \geq \mathcal{E}_\alpha ( \psi^{R,z}_\alpha ) + \alpha \| v \varepsilon^{-1/2} (  \widehat{\varrho}_{\psi_\alpha} - \widehat{\varrho}_{\psi^{R,z}_\alpha} ) \|_2^2 - CR^{-2} \; . 
\end{align}
We use \eqref{eq:bound-R-1} to prove both, the energy's lower bound and the approximation of the ground state. We start with the lower bound on the ground state energy first. For this,  we observe that \eqref{eq:bound-R-1} implies 
\begin{align}
\mathcal{E}_\alpha ( \psi_\alpha)   \geq \mathcal{E}_\alpha ( \psi^{R,z}_\alpha ) - CR^{-2} \;  , 
\end{align}
i.e. it suffices to compute the energy $\mathcal{E}_\alpha$ for function $\psi^{R,z}_\alpha$ supported on $B_R$, where we can use the Taylor expansion of $\ch$
\begin{align}
\mathcal{E}_\alpha ( \psi_\alpha^{R,z} ) &= \expval{- \frac{\Delta}{2m} + \alpha ( h* \vert \psi_{\alpha}^{R,z} \vert^2 )}{ \psi_\alpha^{R,z}}\notag \\
&\geq  - \alpha \| g \|_2^2 + \expval{- \frac{\Delta}{2m} + \alpha \| \nabla g \|_2^2 (  x^2 * \vert \psi_\alpha^{R,z} \vert^2 )}{ \psi_\alpha^{R,z}} - C \alpha R^4  \; . 
\end{align}
We observe that \eqref{eq:bound-R-0} is invariant w.r.t. translations and changes of phase of $\psi_\alpha$ and $\psi^{R,z}$ and thus, we can furthermore restrict to $\psi_\alpha^{R,z}$ such that 
\begin{align}
\label{eq:psiR-mean}
\expval{x}{\psi_\alpha^{R,z}} =0
\end{align}
for which we find 
\begin{align}
\label{eq:upperbound-1}
\mathcal{E}_\alpha ( \psi_\alpha)   \geq \expval{{\rm h}_{\rm osc}}{ \psi^{R,z}_\alpha } - \alpha \| g \|_2^2- C \alpha R^4 - CR^{-2} \; . 
\end{align}
where $\ho$ denotes the harmonic oscillator ${\rm h}_{\rm osc} = - \frac{\Delta}{2m} + \frac{m\omega^2}{2} x^2$. By definition, $\psi^{R,z}_\alpha$ is $L^2$-normalized and thus a competitor for the ground state of $\ho$, i.e. 
\begin{align}
\mathcal{E}_\alpha ( \psi_\alpha)   &\geq \inf_{\psi \in H^1}\expval{{\rm h}_{\rm osc}}{ \psi } - \alpha \| g \|_2^2 - C \alpha R^4 - CR^{-2} \notag \\
&\geq e_{\rm osc} - \alpha \| g \|_2^2 - C \alpha R^4 - CR^{-2} \; .
\end{align}
Optimizing w.r.t. to the parameter $R$ (yielding $R= \alpha^{-1/6}$), we arrive at
\begin{align}
\label{eq:upper}
e_\alpha  \geq e_{\rm osc} - \alpha \| g \|_2^2 - C \alpha^{1/3} \;   
\end{align}
proving part {\em (a)}.

\textit{Properties of $\psi_\alpha, \psi_\alpha^{R,z}$:} As a preliminary step to prove the remaining parts of this Proposition we prove useful properties of the ground state $\psi_\alpha$ and $\psi_\alpha^{R,z}$ (constructed in \eqref{eq:def-psiR}, satisfying \eqref{eq:bound-R-0} and by translational invariance of the problem \eqref{eq:psiR-mean}). We observe that $\cos ( x ) \leq 1$ (and thus $h \leq \| g \|_2^2$) implies 
\begin{align}
\expval{- \tfrac{\Delta}{2m}}{ \psi_\alpha^{R,z}} = \mathcal{E}_\alpha (\psi_\alpha^{R,z} ) +   \alpha \expval{(h* \vert \psi_{\alpha}^{R,z}\vert^2)}{\psi_\alpha^{R,z}}  \leq \mathcal{E}_\alpha (\psi_\alpha^{R,z} )  + \alpha \| g \|_2^2 \; .  \label{eq:Delta}
\end{align}
With\eqref{eq:bound-R-1} and the ground state energy's approximation (part {\it (a)}) we obtain 
\begin{align}
\| \psi_\alpha^{R,z} \|_{H^1}^2 \leq C \sqrt{\alpha} , \quad \text{and similarly} \quad \| \psi_\alpha \|_{H^1}^2 \leq C \sqrt{\alpha} \; , 
\end{align}
and in the same way
\begin{align}
\label{eq:bound-varrho}
\alpha \| v/\varepsilon^{1/2} \widehat{\varrho}_{\psi_\alpha} \|_2^2, \; \alpha \| v/\varepsilon^{1/2} \widehat{\varrho}_{\psi_\alpha^{R,z}} \|_2^2  \leq C \alpha \; . 
\end{align}
With the upper bound \eqref{eq:upper} we furthermore deduce from \eqref{eq:Delta} resp. the Euler-Lagrange equation 
\begin{align}
\| \Delta \psi_\alpha \|_2^2 \leq C \sqrt{\alpha} \label{eq:gs-H2}
\end{align}
for all $\alpha \geq \alpha_0$.  However for $\psi_{\alpha}^{R,z}$ we observe first that \eqref{eq:bound-R-1} resp. \eqref{eq:upperbound-1} together with the harmonic oscillator's coercivity property \eqref{eq:coerc-osc} show for $R= \alpha^{-1/6}$  
\begin{align}
\sqrt{\alpha} \inf_{\theta}\| \psi_\alpha^{R,z} - e^{i \theta}\psio \|_2^2 &\leq  \expval{{\rm h}_{\rm osc}}{ \psi_\alpha^{R,z} } - e_{\rm osc} \notag \\ 
&\leq \mathcal{E}_\alpha ( \psi_\alpha^{R,z} ) - e_{\rm osc}+ \alpha \| g \|_2^2 + C \alpha^{1/3} \; . 
\end{align}
With the upper bound on the energy \eqref{eq:upperbound-2} we find 
\begin{align}
\inf_{\theta}\| \psi^{R,z}_\alpha - e^{i \theta} \psio \|_2^2  \leq C \alpha^{-1/6}, \quad \inf_{\theta}\| \psi^{R,z}_\alpha - e^{i \theta} \psio \|_{H^1}^2  \leq C \alpha^{1/12} \;  \,  . \label{eq:approx-1}
\end{align}
In particular for sufficiently large $\alpha \geq \alpha_0$ we have 
\begin{align}
 \|  \nabla \psi_\alpha^{R,z} \|_2 \leq C \alpha^{-1/4} \; . 
\end{align}

\textit{Approximation of Lagrange multipliers:} We prove the approximation of the Lagrange multiplier $\mu_{\psi_\alpha}$ with $\mu_{\psio}$ using the previous results. In particular by translational invariance of the problem we choose $\psi_\alpha$ such that $\psi_\alpha^{R,z}$ (as constructed in \eqref{eq:def-psiR}) satisfies \eqref{eq:bound-R-0} and \eqref{eq:psiR-mean}). We write 
\begin{align}
\mu_{\psi_\alpha} - \mu_{\psio} = \mathcal{E}_\alpha ( \psi_\alpha) - \mathcal{E}_\alpha ( \psio ) - \alpha \| v / \varepsilon^{1/2} \varrho_{\psi_\alpha} \|_2^2 + \alpha \| v / \varepsilon^{1/2} \varrho_{\psio} \|_2^2 
\end{align}
yielding with \eqref{eq:bound-R-0} and \eqref{eq:bound-varrho} to 
\begin{align}
\vert \mu_{\psi_\alpha} - \mu_{\psio} \vert \leq& \vert  \mathcal{E}_\alpha ( \psi_\alpha) - \mathcal{E}_\alpha ( \psio )  \vert  \notag \\
&+\alpha \left( \| v / \varepsilon^{1/2} \varrho_{\psio} \|_2 +  \| v / \varepsilon^{1/2} \varrho_{\psi_\alpha} \|_2 \right) \| v / \varepsilon^{1/2} \left(  \varrho_{\psi_\alpha}  -\varrho_{\psio} \right) \|_2  \notag \\
\leq& C \alpha^{1/3} + \sqrt{\alpha}\| v / \varepsilon^{1/2} \left(  \varrho_{\psi_\alpha}  -\varrho_{\psio} \right) \|_2 \; . \label{eq:mu-10}
\end{align}
Since 
\begin{align}
\| v / \varepsilon^{1/2} (  \varrho_{\psi_\alpha}  -\varrho_{\psio} ) \|_2 \leq \| v / \varepsilon^{1/2} (\varrho_{\psi_\alpha^{R,z}}  -\varrho_{\psio} ) \|_2 + \| v / \varepsilon^{1/2} (  \varrho_{\psi_\alpha^{R,z}}  -\varrho_{\psio} ) \|_2
\end{align}
we find with \eqref{eq:bound-R-0}, $\| v / \varepsilon^{1/2} ( \varrho_{\psi_\alpha^{R,z}}  -\varrho_{\psio}) \|_2 \leq C \| \psio - \psi_\alpha^{R,z}\|_2$ and \eqref{eq:approx-1} 
\begin{align}
\| v / \varepsilon^{1/2} \left(  \varrho_{\psi_\alpha}  -\varrho_{\psio} \right) \|_2 \leq C \alpha^{-1/3} + C \alpha^{-1/12} \; . 
\end{align}
Thus we obtain from \eqref{eq:mu-10} for sufficiently large $\alpha >\alpha_0$ 
 \begin{align}
 \label{eq:mu-bound}
 \vert \mu_{\psi_\alpha} - \mu_{\psio} \vert \leq& C  \alpha^{5/12} \; . 
 \end{align}
\textit{Scaling of the ground state:} 
To show the ground state's scaling \eqref{eq:gs-scaling} for $\psi_\alpha$ satisfying \eqref{eq:ass-gs} we observe that by the Euler-Lagrange equation we have 
\begin{align}
\label{eq:weighted-dist-0}
\left( - \Delta - \mu_{\psi_\alpha} \right)\psi_\alpha = 2 \alpha \left( h * \vert \psi_\alpha \vert^2 \right) \psi_\alpha \; . 
\end{align}
The idea is now to use the properties of the resolvent $\left( - \Delta  - \mu_{\psi_\alpha} \right)^{-1}$ (that is well defined sind from \eqref{eq:mu-bound} we have $\mu_{\psi_\alpha} \geq - C \alpha $ for sufficiently large $\alpha \geq \alpha_0$) to prove the desired bound. The resolvent's Green's function is given in terms of the inverse Fourier transform $\cF^{-1}$ of 
\begin{align}
\label{eq:Green1}
G_{\mu_{\psi_\alpha}} (z) = \cF^{-1} \left[ \big ( p^2 /(2m) -  \mu_{\psi_\alpha} \big)^{-1} \right] (z)
\end{align}
and can by functional calculus explicitly computed. In fact we have 
\begin{align}
\mathcal{F}\left[ \left(- \tfrac{1}{2m}\Delta -   \mu_{\psi_\alpha} \right)^{-1} \right] (p) = \int_0^\infty e^{- t (\tfrac{1 }{2m}  p^2 - \mu_{\psi_\alpha} )  } dt 
\end{align}
which leads with \eqref{eq:Green1} to 
\begin{align}
G_{ \mu_{\psi_\alpha}} (z) &= \frac{1}{\left( 2 \pi \right)^{3/2}} \int_{\bR^3} \int_0^\infty e^{i z \cdot p} e^{t  \mu_{\psi_\alpha}} e^{-\tfrac{t}{2m}  p^2} dt dp 
\end{align}
and we arrive with Fubini's theorem at 
\begin{align}
G_{\mu_{\psi_\alpha}} (z) =  (2m)^{3/2} \int_0^\infty e^{t  \mu_{\psi_\alpha} } \frac{e^{-\frac{mz^2}{2t}}}{ t^{3/2} } dt  = (2m)^{3/2}  \frac{\sqrt{\pi}}{\vert z \vert } e^{-\sqrt{-4m \mu_{\psi_\alpha}} \vert z \vert}
\end{align}
for $\vert z \vert \not=0$.  We plug this identity into \eqref{eq:weighted-dist-0} and find using assumption \eqref{eq:ass-gs}, and the notation $F: = 2 \alpha \left( h * \vert \psi_\alpha \vert^2 \right) \psi_\alpha $ (i.e. $F$ denotes the right hand side of \eqref{eq:weighted-dist-0}) 
\begin{align}
\psi_\alpha (x)  =   \int G_{\mu_{\psi_\alpha}} (x-y) \; F(y) dy \; .
\end{align}
Thus the weighted $L^2$-norm,  we aim to find an upper bound for, becomes 
\begin{align}
\| x^{3} \psi_\alpha \|_2^2 
=&   \int x^{6}  G_{\mu_{\psi_\alpha}} (x-y)G_{\mu_{\psi_\alpha}} (x-z)   \; F(z) F(y) \; dxdydz 
\end{align}
that we can estimate by Cauchy Schwarz with 
\begin{align}
\| x^{3}\psi_\alpha \|_2^2 
\leq&  \int x^{6} \; \vert G_{\mu_{\psi_\alpha}} (x-y) \vert^2  \vert F(y) \vert^2 dxdy \notag\\
=&  (2m)^3 \pi\int \frac{ x^{6}}{ \vert x -y \vert^2 } e^{-2 \sqrt{-4m \mu_{\psi_\alpha} } \vert x-y \vert } \vert  F(y) \vert^2 dydx \; . 
\end{align}
We split the integral into several regions to find the desired bound: First we consider the case $\vert x \vert > 2 \vert y \vert $  for which we have  $\vert x - y\vert \geq \vert x \vert - \vert y \vert \geq \frac{\vert x \vert }{2}$. By substitution we can therefore estimate the integral in this region by 
\begin{align}
&  \int_{\vert x \vert > 2 \vert y \vert }  \frac{  x^{6}}{\vert x- y\vert^2} \; e^{-2 \sqrt{-4m \mu_{\psi_\alpha} } \vert x-y \vert } \vert F(y) \vert^2 dydx \notag\\
 &\quad \leq   \int \; x^{4} e^{-2 \sqrt{-4m \mu_{\psi_\alpha} } \vert x \vert }  \vert F(y) \vert^2 dydx \leq \frac{C }{ \alpha^{5/2}} \| F \|_2^2 
\end{align}
where we used $- \mu_{\psi_\alpha} \geq C \alpha$ for sufficiently large $\alpha \geq \alpha_0$. Now let $\vert x \vert <2  \vert y \vert $ and $\vert x-y\vert > \frac{\vert y \vert}{2}$, then we have $\frac{\vert x \vert }{\vert x-y\vert } \leq 4$ and it follows 
\begin{align}
 \int_{\substack{\vert x \vert <  2 \vert y \vert \\ \vert x-y\vert \geq \vert y \vert /2}} &\frac{  x^{6} }{\vert x- y\vert^2}\;e^{-2 \sqrt{-4m \mu_{\psi_\alpha} } \vert x-y \vert } \vert F(y) \vert^2 dydx \notag\\
 &\leq 4  \int x^{4} e^{- \sqrt{-4m \mu_{\psi_\alpha} } \vert y \vert } \vert F(y) \vert^2  dydx \leq \frac{C }{ \alpha^{5/2}}\| F \|_2^2 \; . 
\end{align}
with similar arguments as before. Finally for $\vert x \vert < 2 \vert y \vert $ and $\vert x-y\vert \leq \frac{\vert y \vert}{2}$ it follows $\vert x \vert \geq \vert y \vert - \vert x - y \vert \geq \vert y \vert/2$. Hence 
\begin{align}
 \int_{\substack{\vert x \vert <  2 \vert y \vert \\ \vert x-y\vert < \vert y \vert /2}} & \frac{  x^{6}}{\vert x- y\vert^2} \; e^{-2 \sqrt{-4m \mu_{\psi_\alpha} } \vert x-y \vert } \vert F(y) \vert^2 dydx \notag\\
 &\leq 2  \int_{ \vert y \vert /2<  \vert  x \vert <  2 \vert y \vert } \frac{ y^{6}}{\vert x-y\vert^2}  e^{- \sqrt{-4m \mu_{\psi_\alpha} } \vert x- y \vert } \vert F(y) \vert^2 dydx \notag\\
 &\leq \frac{C}{( -\mu_{\psi_\alpha} )^{1/2}}  \int y^{6}   e^{-  C\sqrt{-4m \mu_{\psi_\alpha} } \vert y \vert } \vert F(y)  \vert^2 dy \leq \frac{C }{ \alpha^{5/2}}  \| F\|_2^2 
\end{align}
where we used that $ \vert y \vert^{6} \;   e^{- C \sqrt{-4m  \mu_{\psi_\alpha}} \vert y\vert}   \leq \frac{\widetilde{C}}{ \alpha^{3}}$. Summarizing the estimates we conclude by 
\begin{align}
\label{eq:estimate-F}
\| x^3 \psi_\alpha \|_2 \leq C \alpha^{-5/4 }\| F \|_2 \;
\end{align}
where $F$ denotes the r.h.s. of \eqref{eq:weighted-dist-0}. Since $\| h \|_\infty \leq C$ ,$\psi_\alpha$ is a $L^2$-normalized function we find that
\begin{align}
\| x^{3} \psi_\alpha \|_2 \leq C \alpha^{-3/4} \; . 
\end{align}  
for sufficiently large $\alpha \geq \alpha_0$. In particular for $n=3$ we find $\|x^3 \psi_\alpha \|_2 \leq C \alpha^{-3/4}$ and thus, in particular, 
\begin{align}
\| x^2 \psi_\alpha \|_2^2 \leq C \alpha^{-1/2} \; . 
\end{align}

\textbf{Proof of {\it (b)}:} In order to prove the ground state $\psi_\alpha$'s approximation we observe that by the previous discussion it is enough to consider the problem on the ball $B_{R_1} (0) $ with $R_1 = \alpha^{-1/5}$. In fact \eqref{eq:gs-scaling} shows 
\begin{align}
\label{eq:apprxo-step1}
\inf_{\theta} \| \psio - e^{i \theta} \psi_\alpha \|_2 \leq \| \psio - e^{i \theta} \psi_\alpha \|_{ L^2( B_{R_1} (0)) }  + C \alpha^{-1/20} \; . 
\end{align}
We consider $\psi_\alpha$ and $\psi_\alpha^{R,z}$ (constructed in \eqref{eq:def-psiR}), satisfying \eqref{eq:bound-R-0} and by translational invariance of the problem \eqref{eq:psiR-mean}). With these notations we in particular have from \eqref{eq:approx-1}  
\begin{align}
\inf_{\theta} \| \psio - e^{i \theta} \psi_\alpha \|_{ L^2( B_{R_1} (0)) } &\leq   \| \psi_\alpha - \psi_\alpha^{R,z} \|_{ L^2( B_{R_1} (0)) } + \inf_{\theta} \| \psio - e^{i \theta} \psi_\alpha^{R,z} \|_{ L^2( B_{R_1} (0)) } \notag \\
 &\leq  \| \psi_\alpha - \psi_\alpha^{R,z} \|_{ L^2( B_{R_1} (0)) } + C \alpha^{-1/12} 
\end{align}
and thus it remains to show $\psi_\alpha^{R,z}$ is close to $\psi_\alpha$. To this end we control the $L^2$-norm of the difference $\varrho_{\psi_\alpha} -  \varrho_{\psi^{R,z}_\alpha}$ first and deduce as a second step from this estimates for $L^2$-norm of $\psi_\alpha^{R,z} -\psi_\alpha$. For this we write the $L^2$-norm of $\varrho_{\psi_\alpha} -  \varrho_{\psi^{R,z}_\alpha}$ in momentum space. We shall show that for low momenta, we control the norm with Ass. \ref{ass:reg} and \eqref{eq:bound-R-0} while for high momenta the $L^2$-norm is small by regularity properties of $\varrho_{\psi_\alpha},\varrho_{\psi^{R,z}_\alpha}$. For the latter one we observe that from \eqref{eq:gs-H2} we have 
\begin{align}
\| k^2 \widehat{\varrho}_{\psi_\alpha} \|_\infty \leq C \| \Delta \psi_\alpha \|_2 \| \psi_\alpha \|_2 + C \| \nabla \psi_\alpha \|_2^2 \leq C \alpha^{1/2} \; 
\end{align}
and thus there exists $C>0$ such that 
\begin{align}
\label{eq:bound-widehatrho1}
\vert \widehat{\varrho}_{\psi_\alpha}  (k) \vert \leq \frac{C \alpha^{1/2}}{\vert k \vert^2} \, , 
\end{align}
To obtain a similar bound for $\widehat{\varrho}_{\psi_\alpha^{R,z}}$ we first have to derive a bound for the $H^2$-norm of $\psi_\alpha^{R,z}$. For this we remark that by definition \eqref{eq:def-psiR} we have
\begin{align}
\| \chi^{R,z} \psi_\alpha \|_2 \Delta \psi_\alpha^{R,z} = \chi^{R,z} ( \Delta \psi_\alpha ) + 2 ( \nabla \chi^{R,z}) ( \nabla \psi_\alpha) + ( \Delta \chi^{R,z} ) \psi_\alpha \; . 
\end{align}
With the Taylor expansion of $\chi^{R,z}$ around $z$ we find that $\| \chi^{R,z} \psi_\alpha \|_2  \geq \| \psi_\alpha \|_2 + R^{-1} \| x ( \nabla \chi^{R,z} ( w) ) \psi_\alpha \|_2 $ for $w \in B_R (z)$ and thus with the ground state's scaling properties \eqref{eq:gs-scaling} we have $\| \chi^{R,z} \psi_\alpha \|_2  \geq C $ for sufficiently large $\alpha \geq \alpha_0$. Hence we find with \eqref{eq:gs-H1-lu}, \eqref{eq:gs-H2} and the scaling of $\chi^{R,z}$ at $\| \Delta \psi^{R,z}_\alpha \|_2 \leq C \sqrt{\alpha}$ and thus 
\begin{align}
\label{eq:bound-widehatrho2}
\vert \widehat{\varrho}_{\psi_\alpha^{R,z}}  (k) \vert \leq \frac{C \alpha^{1/2}}{\vert k \vert^2}   \; . 
\end{align} 
From \eqref{eq:bound-widehatrho1} and \eqref{eq:bound-widehatrho2} we find that for high momenta, i.e. all $k \in B_{R_2}^c = \lbrace k \in \mathbb{R}^3 : \vert k \vert > R_2 \rbrace$ we have 
\begin{align}
\| \widehat{\varrho}_{\psi_\alpha} -  \widehat{\varrho}_{\psi^{R,z}_\alpha} \|_{L^2( B_{R_2}^c )} \leq \frac{C\alpha^{1/2}}{R_2^{1/2}} \; .
\end{align}
We remark that we choose $R_2 := R_2 ( \alpha) = \alpha^{1/5}$, however, for simplicity (as for $R$), we neglect the dependence of $\alpha$ in its notation. 
For low momenta, i.e. $k \in B_{R_2}$ we use that by Ass. \eqref{ass:reg} we have $\vert \widehat{g} (k) \vert \geq ( 1 + \vert k \vert )^{-9/2}$ and thus  
\begin{align}
\|  \widehat{\varrho}_{\psi_\alpha} - \widehat{\varrho}_{\psi^{R,z}_\alpha} \|_{L^2( B_{R_2})}  \leq&  C ( 1 + \vert R_2 \vert )^{9/2}   \| v \ \varepsilon^{-1/2}  ( \widehat{\varrho}_{\psi_\alpha} - \widehat{\varrho}_{\psi^{R,z}_\alpha} ) \|_{L^2( B_{R_2})} \notag \\
\leq& C ( 1 + \vert R_2 \vert )^{9/2}   \| v \ \varepsilon^{-1/2}  ( \widehat{\varrho}_{\psi_\alpha} - \widehat{\varrho}_{\psi^{R,z}_\alpha} ) \|_{2}  \; 
\end{align}
and we find with \eqref{eq:bound-R-0} (assuming $R_2 \leq \alpha^{1/5}$) 
\begin{align}
\|  \widehat{\varrho}_{\psi_\alpha} - \widehat{\varrho}_{\psi^{R,z}_\alpha} \|_2 \leq&  C  \alpha^{-1/3}  ( 1 + \vert R_2 \vert )^{9/2}  \; . \label{eq:estimate-varrho-H12}
\end{align}
Optimizing with respect to $R_2$ leads to $R_2 = \alpha^{1/5}$ and thus for sufficiently large $\alpha \geq \alpha_0$ to 
\begin{align}
\|  \widehat{\varrho}_{\psi_\alpha} - \widehat{\varrho}_{\psi^{R,z}_\alpha} \|_2 \leq&  C  \alpha^{2/5}  \; . 
\end{align}
In particular it follows as $\psi_\alpha^{R,z}$ and $\psi_\alpha$ have the same phase 
\begin{align}
\| \psi_\alpha - \psi_\alpha^{R,z} \|_4^4 =\| \vert \psi_\alpha \vert -\vert  \psi_\alpha^{R,z} \vert\|_4^4 \leq \|  \widehat{\varrho}_{\psi_\alpha} - \widehat{\varrho}_{\psi^{R,z}_\alpha} \|_2^2 \leq C \alpha^{4/5} . 
\end{align}
We recall that we need to estimate the $L^2$-norm difference of $\psi_\alpha, \psi_\alpha^{R,z}$ on the ball $B_{R_1}$,  i.e. we have by Cauchy Schwarz's inequality with $R_1= \alpha^{-1/5}$ 
\begin{align} 
\|  \psi_\alpha - \psi_\alpha^{R,z} \|_{L^2 ( B_{R_1})}^2 \leq C R_1^{3/2} \alpha^{2/5}  \leq C \alpha^{-1/5} \;  \label{eq:approx-3}
\end{align}
and we arrive with \eqref{eq:apprxo-step1} at 
\begin{align}
\inf_\theta \| e^{i \theta} \psi_\alpha - \psio \|_{2}^2 \leq C \alpha^{-1/10}  \; . 
\end{align}

\textbf{Proof of part {\it (c)}:} In order to prove $H^1$-norm convergence of $\psi_\alpha$ to $\psio$ we observe that from the Euler-Lagrange equations of $\psi_\alpha$ resp. $\psio$, Ass.\ref{ass:reg} and the scaling properties of $\psio$
\begin{align}
- \tfrac{1}{2m} \Delta    \left( \psi_\alpha - \psio \right) =&  \left( \alpha   \left( h *   \vert \psi_\alpha \vert^2 \right)  + \mu_{\psi_\alpha} \right) \psi_\alpha  - \left( \alpha   \left( h *  \vert \psio  \vert^2 \right)  +  \mu_{\psio}  \right) \psio  \notag \\
&+ C  \left( \alpha ( \vert \cdot \vert^4 * \vert \psio \vert^2 )  + 1 \right)\psio  
\end{align}
for a constant $C>0$ independent of $\alpha$. We recall that both $\psi_\alpha$ and $\psio$ are $L^2$-normalized functions. Hence introducing the notation $\widetilde{h} = h - \| g \|_2^2$ and 
\begin{align}
\widetilde{\mu}_\psi = \expval{-\tfrac{\Delta}{2m} - \alpha ( \widetilde{h} * \vert \psi \vert^2 )}{\psi}
\end{align}
for any $\psi \in H^1 ( \mathbb{R}^3)$ we arrive at 
\begin{align}
- \tfrac{1}{2m} \Delta    \left( \psi_\alpha - \psio \right) =& \alpha   \left( \widetilde{h} *  \psi_\alpha \overline{\left( \psi_\alpha - \psio\right)}  \right) \psi_\alpha + \alpha  \left( \widetilde{h} * \left( \psi_\alpha - \psio \right)  \overline{ \psio}  \right) \psi_\alpha \notag \\
&+ \alpha  \left( \widetilde{h} *\vert \psio \vert^2  \right) \left( \psio - \psi_\alpha \right) - \left( \widetilde{\mu}_{\psi_\alpha} - \widetilde{\mu}_{\psio} \right) \psi_\alpha  + \widetilde{\mu}_{\psi_\alpha} \left( \psio - \psi_\alpha \right) \notag \\
&+ \widetilde{\mu}_{\psio} \left( \psi_\alpha - \psio \right) + C \alpha ( \vert \cdot \vert^4 * \vert \psio \vert^2 ) \psio   
\end{align}
From Ass. \ref{ass:reg} we have $\vert \widetilde{h} (x) \vert \leq C x^2$ and thus together with the ground state's scaling properties (part {\it (c)}) we find $\vert \widetilde{\mu}_{\psi_\alpha} \vert \leq C \alpha^{1/2}$, $\vert \widetilde{\mu}_{\psio} \vert \leq C \alpha^{1/2}$ and 
\begin{align}
\vert  \widetilde{\mu}_{\psi_\alpha} - \widetilde{\mu}_{\psio} \vert = \vert  \mu_{\psi_\alpha} - \mu_{\psio} \vert \leq C \alpha^{5/12} 
\end{align}
from part (part {\it (a)}). Therefore we find with the approximation of the ground state (part {\it (b)}) and the ground state's scaling properties (part {\it (c)}) that 
\begin{align}
\vert \bra{\psio}\ket{- \Delta \left( \psi_\alpha - \psio \right)} \vert, \vert \bra{\psi_\alpha}\ket{- \Delta \left( \psi_\alpha - \psio \right)} \vert \leq C \alpha^{9/20} 
\end{align} 
finally yielding 
\begin{align}
\| \nabla \psi_\alpha - \nabla \psio \|_2^2 \leq C \alpha^{9/40} 
\end{align}
and the desired lower and upper bound in \eqref{eq:gs-H1-lu}. 

\textbf{Proof of {\it (d)}:} To prove convergence of $\psi_\alpha$ to $\psio$ for $\psi_\alpha$ satisfying \eqref{eq:ass-gs} in the weighted $L^2$-norm, too, we proceed similarly as in part {\it (d)}. From the Euler-Lagrange equation of $\psi_\alpha$ we get 
\begin{align}
& \left(\tfrac{1}{2m} \Delta - \mu_{\psi_\alpha} \right)  \left( \psi_\alpha - \psio \right)\notag  \\ 
&\quad =2 \alpha  \left( h * \vert \psi_\alpha \vert^2 \right)  \psi_\alpha - 2\alpha  \left( \left( h * \vert \psio \vert^2 \right) + ( \mu_{\psio} - \mu_{\psi_\alpha}) \right) \psio \notag \\
&\quad = 2 \alpha   \left( h *  \psi_\alpha \overline{\left( \psi_\alpha - \psio\right)}  \right) \psi_\alpha +2 \alpha  \left( h * \left( \psi_\alpha - \psio \right)  \overline{ \psio}  \right) \psi_\alpha \notag \\
&\quad \quad + 2 \alpha  \left( h *\vert \psio \vert^2  \right) \left( \psio - \psi_\alpha \right) - \left( \mu_{\psi_\alpha} - \mu_{\psio} \right) \psio + C \alpha ( \vert \cdot \vert^4 * \vert \psio \vert^2 ) \psio   \; .  \label{eq:weighted-dist-1}
\end{align}
By assumption resp. part {\it (a)} we have
\begin{align}
\dist_{L^2} \left( \psio, \Theta (\psi_\alpha ) \right) = \| \psi_\alpha - \psio \|_2 \leq C \alpha^{-1/20}  
\end{align}
and furthermore we have from \eqref{eq:estimate-F} (using assumption \eqref{eq:ass-gs}) the estimate 
\begin{align}
\| x^2 (\psi_\alpha - \psio ) \|_2 \leq C \alpha^{-1/2} \| F \|_2  \; . 
\end{align}
where $F$ denotes the r.h.s. of \eqref{eq:weighted-dist-1}.  It follows from the approximation of the Lagrange multiplier (part {\it (a)}) and the ground state  (part {\it (b)}) that $\| F \|_2 \leq C \alpha^{-1/20}$ and we finally arrive at the desired bound of part {\it (c)}.

\end{proof}

\subsection{Properties of the Hessian} The Hessian $\mathcal{H}_{\alpha}$ of $\mathcal{E}_\alpha$ is defined for any $\psi_\alpha \in \mathcal{M}_{\cE_\alpha}$ by 
\begin{align}
\mathcal{H}_{\alpha}:= \lim_{\varepsilon \rightarrow 0} \frac{1}{\delta^2} \left( \cE_\alpha \left( \frac{\psi_\alpha + \delta f}{ \| \psi_\alpha + \delta f \|_2} \right) - e_\alpha \right) \quad \text{for all} \quad f \in H^1 ( \mathbb{R}^3 ) \; . 
\end{align}
We can explicitly compute the Hessian and find 
\begin{align}
\mathcal{H}_\alpha := \expval{{\rm H}_{\alpha}}{\Im f} + \expval{Q_\alpha \left(\rH - 4 X_{\psi_\alpha}\right)Q_\alpha}{\Re f}
\end{align}
where $Q_\alpha = 1 - P_\alpha = 1 - \ket{\psi_\alpha} \bra{\psi_\alpha}$ and 
\begin{align}
\label{def:X}
\rH = \rh_{\sqrt{\alpha} \varphi_\alpha} - \mu_{\psi_\alpha}, \quad \text{and} \quad  X_{\psi_\alpha} (x;y)  =   \psi_\alpha (x) \; \ch (x-y) \; \psi_\alpha (y) \; . 
\end{align}


\subsubsection*{Positivity of the Hessian}  We compare the Hessian's components 
\begin{align}
\mathcal{H}_\alpha^{(1)} &:= \inf_{\psi_\alpha \in \mathcal{M}_{\cE_\alpha}} \inf_{\substack{f \in H^1( \bR^3), \|f \|_2 =1\\f \in \left( \rm{span} \lbrace \psi_\alpha \rbrace \right)^\perp} } \expval{\rH}{f}, \notag\\
 \mathcal{H}_\alpha^{(2)} &:= \inf_{\psi_\alpha \in \mathcal{M}_{\mathcal{E}_\alpha}} \inf_{\substack{f \in H^1( \bR^3), \|f \|_2 =1\\f \in \left( \rm{span} \lbrace \psi_\alpha, \partial_1 \psi_\alpha, \partial_2 \psi_\alpha , \partial_3 \psi_\alpha  \rbrace \right)^\perp} } \expval{\rH - X_{\psi_\alpha}}{f}
\end{align}
with $\mathcal{H}_{\rm osc}$ (defined in \eqref{def:hessian-osc}) known to satisfy $\mathcal{H}_{\rm osc} \geq C \sqrt{\alpha}$ for a constant $C>0$ independent of $\alpha$. We remark that by definition the Hessian $\mathcal{H}_\alpha$ is defined modulo its zero modes namely the ground state $\psi_\alpha$ for the first component resp. $\psi_\alpha$ and its partial derivatives $\partial_j \psi_\alpha$ for $i=1,2,3$ for the second component.

\begin{lem}
\label{lemma:Hessian}
Let $\varepsilon, v$ satisfy Assumption \ref{ass:reg}. Then, there exists $\alpha_0>0$ and $C>0$ (independent of $\alpha$) such that 
\begin{align}
\mathcal{H}_\alpha^{(i)} \geq  C \alpha^{1/2}, \quad \text{for all} \quad \alpha \geq \alpha_0   \label{eq:Hessian-vgl} \; . 
\end{align}
\end{lem}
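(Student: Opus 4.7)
The plan is to compare both Hessian components with the oscillator analogue $\mathcal{H}_{\rm osc}$ defined in \eqref{def:hessian-osc}, which satisfies $\mathcal{H}_{\rm osc}\geq C\sqrt{\alpha}$. The comparison is made possible by the approximation results of Lemma~\ref{lemma:approx-E}: $\psi_\alpha$ is close to $\psio$ and the nonlinear potential $V_{\sqrt{\alpha}\varphi_\alpha}=-2\alpha(\ch*|\psi_\alpha|^2)$ is close to an oscillator potential. Specifically, Ass.~\ref{ass:reg} together with $\ch(x)=\int v^2(k)/\varepsilon(k)\cos(k\cdot x)\,dk$ yields the Taylor expansion
\begin{equation*}
\ch(x)=\|\cg\|_2^2-\tfrac{\|\nabla \cg\|_2^2}{6}|x|^2+R(x),\qquad |R(x)|\leq C|x|^4,
\end{equation*}
and, combined with $\alpha\|\nabla\cg\|_2^2/3=m\omega^2$ and translational centering of $\psi_\alpha$, produces the decomposition
\begin{equation*}
\rH=\Ho+\mathcal{R}_\alpha,\qquad \Ho=\ho-e_{\rm osc},
\end{equation*}
where $\mathcal{R}_\alpha$ is a scalar potential gathering the quartic convolution $-2\alpha(R*|\psi_\alpha|^2)$, the quadratic correction proportional to $\langle y^2\rangle_{\psi_\alpha}$, and the shift $\mu_{\psio}-\mu_{\psi_\alpha}$.

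\textbf{Bound on $\mathcal{H}_\alpha^{(1)}$.} For a normalized $f\perp\psi_\alpha$, the first step is to split $f=f_0\psio+f_\perp$ with $f_\perp\perp\psio$. Since $\|\psio-\psi_\alpha\|_2\leq C\alpha^{-1/20}$ by Lemma~\ref{lemma:approx-E}\,(b), one has $|f_0|=|\langle\psio-\psi_\alpha,f\rangle|\leq C\alpha^{-1/20}$, hence $\|f_\perp\|_2^2\geq 1-C\alpha^{-1/10}$. Using $\Ho\psio=0$ the cross terms disappear and
\begin{equation*}
\expval{\rH}{f}=\expval{\Ho}{f_\perp}+\expval{\mathcal{R}_\alpha}{f}\geq\mathcal{H}_{\rm osc}\|f_\perp\|_2^2-|\expval{\mathcal{R}_\alpha}{f}|.
\end{equation*}
The remainder is handled by the bounds $|\mu_{\psi_\alpha}-\mu_{\psio}|\leq C\alpha^{5/12}$ and $\|x^2\psi_\alpha\|_2\leq C\alpha^{-1/2}$ from Lemma~\ref{lemma:approx-E}\,(a), together with $|\expval{-2\alpha(R*|\psi_\alpha|^2)}{f}|\leq C\alpha(\|x^2f\|_2^2+\|x^2\psi_\alpha\|_2^2)$ (using $|x-y|^4\leq C(|x|^4+|y|^4)$). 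The second contribution is $O(1)$; for the first, a dichotomy is used: either $\expval{\Ho}{f}\geq\eta\sqrt{\alpha}$, in which case the bound follows directly, or $f$ is localized on the oscillator scale so that $\|x^2 f\|_2^2=O(\alpha^{-1})$. In either case $\expval{\rH}{f}\geq C\sqrt{\alpha}$ for $\alpha\geq\alpha_0$.

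\textbf{Bound on $\mathcal{H}_\alpha^{(2)}$.} Differentiating the Euler--Lagrange equation in $x_j$ gives $(\rH-4X_{\psi_\alpha})\partial_j\psi_\alpha=0$, reflecting translational invariance of $\cE_\alpha$. The operator $X_{\psi_\alpha}$ has Hilbert--Schmidt norm bounded by $\|\ch\|_\infty\leq C\|\cg\|_2^2$, hence operator norm $O(1)$, subdominant to $\sqrt{\alpha}$. On the oscillator side $\partial_j\psio$ are eigenfunctions of $\ho$ at level $e_{\rm osc}+\omega$, so $\Ho$ restricted to the orthogonal complement of $\spn\{\psio,\partial_1\psio,\partial_2\psio,\partial_3\psio\}$ has gap at least $2\omega\geq C\sqrt{\alpha}$. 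Using $\|\partial_j\psi_\alpha-\partial_j\psio\|_2\leq\|\nabla(\psi_\alpha-\psio)\|_2\leq C\alpha^{9/40}$ from Lemma~\ref{lemma:approx-E}\,(d) and decomposing $f$ relative to this 4-dimensional reference space (suitably renormalized), the argument of the previous paragraph yields $\mathcal{H}_\alpha^{(2)}\geq C\sqrt{\alpha}$.

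\textbf{Main obstacle.} The principal difficulty is controlling the quartic convolution $-2\alpha(R*|\psi_\alpha|^2)(x)$, which grows pointwise like $\alpha|x|^4$ and becomes comparable to the oscillator gap on regions with $|x|\gtrsim\alpha^{-3/8}$. The resolution is the dichotomy described above: either $f$ carries enough oscillator energy that the bound survives the subdominant perturbation, or $f$ is concentrated on the oscillator scale $\alpha^{-1/4}$ so that the quartic term contributes only $O(1)$. A further subtlety in the $\mathcal{H}_\alpha^{(2)}$ argument is that $X_{\psi_\alpha}$ is not diagonal in the splitting between $\spn\{\psio,\partial_j\psio\}$ and its orthogonal complement, so the cross terms of order $\alpha^{9/40}$ produced when renormalizing $\partial_j\psi_\alpha$ to $\partial_j\psio$ have to be tracked carefully to preserve the gap.
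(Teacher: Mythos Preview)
Your overall strategy---comparing $\rH$ with $\Ho$ via the Taylor expansion of $h$---is the right one, but the dichotomy you invoke to control the quartic remainder does not close, and this is a genuine gap. The global decomposition $\rH=\Ho+\mathcal{R}_\alpha$ has the defect that $\mathcal{R}_\alpha$ is \emph{unbounded below}: the true potential $-2\alpha(h*|\psi_\alpha|^2)-\mu_{\psi_\alpha}$ tends to a finite value ($\approx 2\alpha\|\cg\|_2^2$) at infinity, whereas $\Ho$ carries the confining potential $\tfrac{m\omega^2}{2}|x|^2\to\infty$, so $\mathcal{R}_\alpha(x)\sim-\tfrac{m\omega^2}{2}|x|^2$ for large $|x|$. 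In your first branch, a large value of $\langle f,\Ho f\rangle$ (of order $\alpha L^2$ for $f$ living near $|x|\sim L$) gives no control whatsoever on $\langle f,\mathcal{R}_\alpha f\rangle$, which can be of the same or larger order with the opposite sign. In your second branch, $\langle f,\Ho f\rangle<\eta\sqrt{\alpha}$ only yields $\|xf\|_2^2=O(\alpha^{-1/2})$; this does \emph{not} imply $\|x^2 f\|_2^2=O(\alpha^{-1})$: a test function carrying mass $\alpha^{-1}$ near $|x|\sim\alpha^{1/4}$ satisfies the first bound but has $\|x^2 f\|_2^2\sim 1$, hence $\alpha\|x^2 f\|_2^2\sim\alpha$. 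So neither branch of the dichotomy delivers the claimed estimate.

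The paper avoids this by an IMS localization rather than a global comparison. After replacing $f$ by a function $g_{\rm osc}$ approximately orthogonal to $\psio$ and $\partial_j\psio$ (using Lemma~\ref{lemma:approx-E}\,(b),(d) exactly as you do), one introduces a smooth partition $\eta_R^1,\eta_R^2$ at scale $R=\alpha^{-1/6}$ and writes $\langle g_{\rm osc},\rH g_{\rm osc}\rangle$ as a convex combination of the two localized pieces, up to a localization error $\sim R^{-2}=\alpha^{1/3}=o(\sqrt{\alpha})$. On the inner piece (supported in $B_{2R}$) the Taylor expansion of $h$ is legitimate and the quartic error costs at most $C\alpha R^4=C\alpha^{1/3}$, so one recovers $\langle g_R^1,\rH g_R^1\rangle\geq(1-o(1))\mathcal{H}_{\rm osc}-o(\sqrt{\alpha})$. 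On the outer piece one abandons the oscillator comparison entirely and argues directly: dropping $-\Delta/(2m)\geq 0$ gives $\langle g_R^2,\rH g_R^2\rangle\geq -\mu_{\psi_\alpha}-\sup_{|x|\geq R}|V_{\sqrt{\alpha}\varphi_\alpha}(x)|$; since $-\mu_{\psi_\alpha}\sim 2\alpha\|\cg\|_2^2$ while the convolution of $h$ with the sharply concentrated density $|\psi_\alpha|^2$ is shown to be $o(\alpha)$ on $B_R^c$, the outer contribution is $\geq C\alpha$. This two-region argument, not a dichotomy on $\langle f,\Ho f\rangle$, is the missing ingredient. The same localization also handles $X_{\psi_\alpha}$: rather than relying on an operator-norm bound, the paper uses orthogonality of $g_{\rm osc}$ to $\psio$ and $x_j\psio$ to kill the zeroth and second order Taylor terms of $h$ inside $X$, leaving only a quartic contribution that is again $O(1)$.
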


\begin{proof}
We present the proof of the Hessian's component $\mathcal{H}_\alpha^{(2)}$. The statement for $\mathcal{H}^{(1)}$ then follows with similar arguments. 

For any $\psi_\alpha \in \mathcal{M}_{\mathcal{E}_\alpha}$ and $f \in H^1 ( \mathbb{R}^3)$ we define the projection  $Q_\alpha' = 1- P_\alpha', P_\alpha' = \ket{\psi_\alpha} \bra{\psi_\alpha} + \sum_{j=1}^3 \ket{\partial_j \psi_\alpha} \bra{\partial_j \psi_\alpha} / \| \partial_j \psi_\alpha \|_2^2 $ and furthermore the $L^2$-normalized function 
\begin{align}
g_\alpha := \frac{Q_\alpha' f}{\| Q_\alpha' f \|_2} \; . 
\end{align}
Thus in the following we consider the expectation value
\begin{align}
 \expval{\rH - X_{\psi_\alpha}}{g_\alpha}  = \expval{\rH - \widetilde{X}_{\psi_\alpha} }{ g_\alpha } 
\end{align}
where we introduced the notation 
\begin{align}
\widetilde{X}_{\psi_\alpha} (x;y)=   \psi_\alpha (x) \; \widetilde{h} (x-y) \; \psi_\alpha (y) \; 
\end{align}
with $\widetilde{h} = h - \| g \|_2^2$ and used that the zero-th order term of the expansion of $\ch$ in the above expectation value vanishes as $Q'_\alpha f$ is orthogonal to $\psi_\alpha$. 
By translational invariance of the problem we restrict to $\psi_\alpha$ such that $\dist_{L^2} ( \Theta (\psi_\alpha ), \psio) = \| \psio - \psi_\alpha \|_2$ (so that Lemma \ref{lemma:approx-E} {\it (c), (d)} apply). We recall that we want to compare $\mathcal{H}_\alpha^{(i)}$ with $\mathcal{H}_{\rm osc}$ that is of order $\sqrt{\alpha}$. Thus any term we can show to be $o( \sqrt{\alpha})$ will be dominated in the end by $\mathcal{H}_{\rm osc}$ and thus will be considered to be sub-leading for sufficiently large $\alpha \geq \alpha_0$. 

We shall first show that $g_\alpha$ is approximately orthogonal to the harmonic oscillator's ground and first excited state $P_{\rm osc}' =  \ket{\psio} \bra{\psio} + \sum_{j=1}^3 \ket{\partial_j \psio} \bra{\partial_j \psio} / \| \psio \|_2^2 $, i.e. that it is enough to consider 
\begin{align}
\label{eq:Qosc}
\expval{ {{\rm H}}_\alpha - \widetilde{X}_{\psi_\alpha} }{ Q_{\rm osc}' g_\alpha } \; . 
\end{align}
This follows from the observation that the difference is given by 
\begin{align}
 & \expval{{{\rm H}}_\alpha - \widetilde{X}_{\psi_\alpha} }{  g_\alpha} - \expval{ {{\rm H}}_\alpha - \widetilde{X}_{\psi_\alpha} }{ Q_{\rm osc}' g_\alpha } \notag \\
  & \quad =  2 \Re \bra{P_{\rm osc}'g_\alpha }{{\rm H}}_\alpha -\widetilde{X}_{\psi_\alpha} \ket{  g_\alpha  } + \bra{P_{\rm osc}'g_\alpha }{{\rm H}}_\alpha -\widetilde{X}_{\psi_\alpha} \ket{ P_{\rm osc}'   g_\alpha } \; .
\end{align}
On the one hand, since 
\begin{align}
\| P_{\rm osc} Q_\alpha g_\alpha \|_2 \leq C \| \psi_\alpha - \psio \|_2 \| g_\alpha \|_2 \leq C \alpha^{-1/20} \| g_\alpha \|_2
\end{align}
and similarly denoting $P_{\rm osc}^{(j)} =: \ket{\partial_j \psio} \bra{\partial_j \psio} / \| \psio \|_2^2 $ and $P_{\alpha}^{(j)} =: \ket{\partial_j \psi_\alpha} \bra{\partial_j \psi_\alpha} / \| \psi_\alpha \|_2^2$, 
\begin{align}
\|  P_{\rm osc}^{(j)} Q_\alpha^{(j)} g_\alpha \|_2 \leq C \| \psio \|_2^{-1} \| \partial_j \psio - \psi_\alpha \|_2 \| g_\alpha \|_2 \leq C \alpha^{-1/40}
\end{align}
from Lemma \ref{lemma:approx-E} {\it (d)} for sufficiently large $\alpha \geq \alpha_0$ and $\| \nabla \psio \| \geq c \alpha^{1/4}$ for some $c>0$. Thus we arrive at 
\begin{align}
\| P'_{\rm osc} g_\alpha \|_2 \leq C \alpha^{-1/40}. \label{eq:PoscPalpha}
\end{align}
On the other hand we have 
\begin{align}
\label{eq:hessian1}
{{\rm H}} = {\rm h}_{\sqrt{\alpha} \varphi_{\rm osc}} - \mu_{\psi_{\rm osc}} + ( \mu_{\psi_\alpha} - \mu_{\psi_{\rm osc}})
\end{align}
so that with Lemma \ref{lemma:approx-E} {\it{(d)}} we have $ \| {\rm H}_{\alpha}  P'_{\rm osc} f \|_2 \leq C \alpha^{1/2}$ and we arrive at 
\begin{align}
\label{eq:galpha-Q}
\expval{\rH - \widetilde{X}_{\psi_\alpha} }{  g_\alpha} - \expval{ \rH - \widetilde{X}_{\psi_\alpha} }{ Q_{\rm osc}' g_\alpha } \geq - C \alpha^{-19/40} \; .
 \end{align}
As a next step, we shall replace $Q_{\rm osc} g_\alpha$ with the $L^2$-normalized function 
\begin{align}
\label{def:gosc}
g_{\rm osc} := \frac{Q'_{\rm osc}  Q_\alpha' f }{  \| Q'_{\rm osc} Q_\alpha' f\|_2} \; 
\end{align}
For this we first observe that 
\begin{align}
 Q'_\alpha g_\alpha   = \frac{ \| Q'_{\rm osc} Q_\alpha' f\|_2 }{ \|  Q_\alpha' f\|_2} g_{\rm osc}  
\end{align}
and thus, we need to control the normalization constants' ratio. For this we use \eqref{eq:PoscPalpha} and find that 
\begin{align}
\| Q_\alpha' f \|_2^2 =& \| Q_{\rm osc}' Q_\alpha' f\|_2^2 + \| P'_{\rm osc} Q_\alpha' f \|_2^2 \leq  \| Q_{\rm osc}' Q_\alpha' f\|_2^2  +  C \alpha^{-1/12} \| Q_\alpha' f \|_2  
\end{align}
for a constant $C>0$ independent in $\alpha$. In particular we obtain 
\begin{align}
\label{eq:Qo-Qalpha}
\| Q_\alpha' f \|_2 \geq C (1- \alpha^{-1/40} )^{-1}  \| Q_{\rm osc}' Q_\alpha' f\|_2   \; 
\end{align}
for sufficiently large $\alpha \geq \alpha_0$. Thus from \eqref{eq:galpha-Q} and \eqref{def:gosc} we get 
\begin{align}
\label{eq:Hessian-step1}
 \expval{\rH - \widetilde{X}_{\psi_\alpha} }{  g_\alpha}  \geq  C (1- \alpha^{-1/40} )^{-2} \expval{ \rH - \widetilde{X}_{\psi_\alpha} }{  g_{\rm osc} }  - C \alpha^{19/40} \; .
 \end{align}

Next we show that the operator $\widetilde{X}_{\psi_\alpha}$ contributes sub-leading (i.e. $o ( \sqrt{\alpha})$) only. For this we write 
\begin{align}
\widetilde{X}_{\psi_\alpha} - \widetilde{X}_{\psio} = \left( \psi_\alpha (x) - \psio (x) \right) \widetilde{h}( x-y) \psi_\alpha (y) + \psio (x) \widetilde{h} (x-y) \left( \psi_\alpha (y) - \psio(y) \right) \; . 
\end{align} 
With $\vert \widetilde{h} (x) \vert \leq C x^2$ we find from Lemma \ref{lemma:approx-E} {\it (c)} and $\| g_{\rm osc} \|_2 =1$ that 
 \begin{align}
 \expval{\widetilde{X}_{\psi_\alpha }}{g_{\rm osc}} \geq  \expval{\widetilde{X}_{\psio }}{g_{\rm osc}}  - C \alpha^{9/20} \; . 
\end{align}  
We recall that $g_{\rm osc}$ is orthogonal to $\psio$ and its partial derivatives. In particular, as $\nabla \psio = x \psio$, the function $g_{\rm osc}$ is orthogonal to $x \psio$, too. Therefore not only the zero-th but also the first-order term of the Taylor expansion of $h$ in $\widetilde{X}_{\psio}$ evaluated in $g_{\rm osc}$ vanishes, i.e.  
 \begin{align}
 \expval{\widetilde{X}_{\psio }}{g_{\rm osc}}  =  \expval{\widetilde{\widetilde{X}}_{\psio }}{g_{\rm osc}}
 \end{align}
 where $\widetilde{\widetilde{X}}_{\psio}(x,y) =  \alpha \psio (x) ( h (x-y) - \| g \|_2^2 -\| \nabla g \|_2^2  (x-y)^2 ) \psio (y)$. Since $\vert \widetilde{h}(x) -\| \nabla g \|_2^2  x^2\vert \leq C  x^4$ by Ass. \ref{ass:reg} we find with the harmonic oscillators scaling properties that $\expval{\widetilde{\widetilde{X}}_{\psio }}{g_{\rm osc}} \geq - C $, and thus from \eqref{eq:Hessian-step1}
 \begin{align}
 \label{eq:Hessian-step2}
  \expval{ \rH - \widetilde{X}_{\psi_\alpha }}{g_\alpha} \geq  C (1- \alpha^{-1/40} )^{-2}  \expval{\rH }{g_{\rm osc}}  - C \alpha^{19/40}
 \end{align}
 for sufficiently large $\alpha \geq \alpha_0$. Now it remains to compare the r.h.s. with the harmonic oscillator. For this we split the operator $\rH$ into one part that is localized on a ball $B_R$ with $R= \alpha^{-1/6}$ (that we shall show is bounded from below by $\mathcal{H}_{\rm osc}$ that is $O( \sqrt{\alpha})$) and a part outside $B_R^c$ (that we will show is bounded from below by a positive constant of $O( \alpha)$, i.e. trivially satisfying the claim for sufficiently large $\alpha \geq \alpha_0$). 
 
For the localization we consider a partition of unity $0 \leq \eta^1, \eta^2 \leq 1$ with $\eta^i \in C_0^\infty ( \mathbb{R}^3)$ and 
\begin{align}
\eta^1 (x) = \begin{cases}
1 & x \in B_1 \\
0 & x \in B_2^c \\
\end{cases}, \quad \eta^2 = \sqrt{1- \vert \eta^1 \vert^2} \; . 
\end{align}
and define the rescaled version $\eta_R^i (x) = \eta^i (x/R )$ and the $L^2$-normalized function
\begin{align}
\label{eq:def-part1}
g^i_R = \eta_R^i g_{\rm osc} / \| \eta_R^i g_{\rm osc} \|_2  \; . 
\end{align}
With standard arguments of IMS localization we find 
\begin{align}
\expval{{ \rm H}_\alpha }{g_{\rm osc}} =& \sum_{i=1}^2 \| \eta_R^i g_{\rm osc} \|_2^2 \expval{\rH }{g_R^i} - \sum_{i=1}^2 \expval{\vert \nabla \eta_R^i\vert^2}{ g_{\rm osc}}\; . 
\end{align}
By scaling the last summand is of order $R^{-2} = \alpha^{1/3}$ yielding 
\begin{align}
\expval{{ \rm H}_\alpha }{g_{\rm osc}} \geq& \sum_{i=1}^2 \| \eta_R^i g \|_2^2 \expval{\rH}{g_R^i} - C\alpha^{1/3} \label{eq:hessian3} 
\end{align}
and it remains to estimate the expectation value $\expval{\rH}{f_R^i} $ for $i=1,2$. 

We start with the expectation value w.r.t $g_{R}^2$ supported on $B_R^c$. Since $\cos(k \cdot x ) \leq 1 $, we find 
\begin{align}
 \expval{{ \rm H}_\alpha }{g_R^2}  \geq& - \mu_{\psi_\alpha}-  \expval{\sqrt{\alpha} V_{\sqrt{\alpha}\varphi_\alpha}}{g_R^2} \notag \\
\geq& 2 \alpha \| g \|_2^2  + \expval{\sqrt{\alpha} V_{\sqrt{\alpha}\varphi_\alpha}}{g_R^2}  \; .\label{eq:hessian2}
\end{align} 
To show that the remaining term contributes sub-leading (i.e. $o(\alpha)$) only, we need to control the $L^\infty$- norm of $V_{\sqrt{\alpha}\varphi_\alpha}$ on $B_R^{c}$, i.e. 
\begin{align}
\sqrt{\alpha}\vert  V_{\sqrt{\alpha}\varphi_\alpha} (x) \vert   \leq  C \alpha \int  \vert h( x-y )\vert  \vert \psi_{\alpha}  (y)\vert^2 dy + C \alpha^{5/12} \; 
\end{align}
for $\vert x \vert \geq \alpha^{-1/6}$. We split the integral in $B_{\widetilde{R}}$ and $B_{\widetilde{R}}^c$ where now we choose $\widetilde{R} = \alpha^{-1/5}$. For $y \in B_{\widetilde{R}}^c$ we find by the scaling properties of $\psi_\alpha$ that 
\begin{align}
\| \psi_{\alpha} \|_{L^2( B_{\widetilde{R}}^c)}^2 \leq C \alpha^{2/5} \| x \psi_{\alpha} \|_{L^2( B_{\widetilde{R}}^c)}^2 \leq \alpha^{-1/10}
\end{align}
and we arrive with $\| h \|_{L^\infty ( \mathbb{R}^3)}\leq C$ for $\vert x \vert \geq \alpha^{-1/6} $ at 
\begin{align}
\sqrt{\alpha}\vert  V_{\sqrt{\alpha}\varphi_\alpha} (x) \vert   \leq  C \alpha \int_{\vert y \vert \leq \alpha^{-1/5}}   \vert h( x-y )\vert  \vert \psi_{\alpha}  (y)\vert^2 dy + C \alpha^{9/10} \; . 
\end{align}
Now let $ \vert y\vert \leq \alpha^{-1/5}$ and $\vert x \vert \geq \alpha^{-1/6}$. Then we have $\vert x - y \vert \geq \vert x \vert - \vert y \vert \geq C \alpha^{-1/6}$ for sufficiently large $\alpha \geq \alpha_0$. Thus,  with $ \| h \|_{L^\infty ( \mathbb{R}^3)} \leq C $ 
\begin{align}
\sqrt{\alpha}\vert  V_{\sqrt{\alpha}\varphi_\alpha} (x) \vert   \leq  C \alpha^{1+1/6}  \int_{\vert y \vert \leq \alpha^{-1/5}}  \vert x- y \vert   \vert \psi_{\alpha}  (y)\vert^2 dy + C \alpha^{5/12}
\end{align}
for $\vert x \vert \geq \alpha^{-1/6}$. With Cauchy Schwarz inequality and $\| \psi_{\alpha} \|_4 \leq C \| \psi_{\alpha} \|_{H^1} \leq C \alpha^{1/4}$ (from Lemma \ref{lemma:approx-E} {\it (d)}) we find 
\begin{align}
\label{eq:hessian2-inside}
\sqrt{\alpha}\vert  V_{\sqrt{\alpha}\varphi_\alpha} (x) \vert   \leq  C \alpha^{1+1/6 + 1/4-1/2}  + C \alpha^{5/12} \leq C \alpha^{11/12}  \; . 
\end{align}
Hence we deduce from \eqref{eq:hessian2}  
\begin{align}
\expval{{ \rm H}_\alpha}{g_R^2}  \geq 2 \alpha \| g \|_2^2 - C_2 \alpha^{11/12} \geq C_1 \alpha \label{eq:hessian-f2}
\end{align}
for constant $C_1,C_2$ independent of $\alpha$ and sufficiently large $\alpha \geq \alpha_0$. 

We recall that the goal for the expectation value 
\begin{align}
\expval{{ \rm H}_\alpha}{g_R^1} 
\end{align}
with $f_R^1$ supported on $B_R$ is a comparison with the Hessian of the harmonic oscillator $\mathcal{H}_{\rm osc}$ that is $O( \sqrt{\alpha})$. 
For this we observe that $f_R^1$ is almost orthogonal to $\psio$ and its partial derivatives as 
\begin{align}
\| P_{\rm osc} g_R^1 \|_2 \leq C \| \psi_{\rm osc} \|_{L^2( B_R^c)} \leq C \alpha^{-1/4} 
\end{align}
by Lemma \ref{lemma:approx-E} {\it (c)}) and similarly for the partial derivatives. Thus (with similar arguments as in the beginning of this proof (see Eq. \eqref{eq:Qosc} and subsequent)) instead of $g_R^1$ we consider in the following  the $L^2$-normalized function 
\begin{align}
\widetilde{g}_R^1 := \frac{Q_{\rm osc}' g_R^1}{\| Q_{\rm osc}' g_R^1 \|_2 } \;  
\end{align}
paying a price sub-leading in $\alpha$ (i.e. $o( \sqrt{\alpha})$ and given by 
\begin{align}
\label{eq:hessian4}
\bra{g_R^1} {{\rm H}}_\alpha  \ket{ f_R^1} \geq (1- \alpha^{-1/12})^2 \bra{\widetilde{g}_R^1}{{ \rm H}}_\alpha  \ket{ \widetilde{g}_R^1}  - C \alpha^{1/3} \; . 
\end{align}
We use the Taylor expansion of $\ch$, $\expval{x}{\psi_\alpha}=0$ and Lemma \ref{lemma:approx-E} {\it (d)} and find 
\begin{align}
\expval{{ \rm H}_\alpha }{\widetilde{g}_R^1} = \expval{ {\rm h}_{\sqrt{\alpha}\varphi_\alpha} - \mu_{\psi_\alpha}}{\widetilde{g}_R^1} \geq \expval{ {\rm h}_{\rm osc} - \bra{\psi_\alpha} {\rm h}_{\rm osc}\ket{\psi_\alpha}  }{\widetilde{f}_R^1} - C \alpha^{1/3} \; 
\end{align}
and thus (since $\bra{\psi_\alpha} {\rm h}_{\rm osc}\ket{\psi_\alpha} \geq \bra{\psio} {\rm h}_{\rm osc}\ket{\psio}$ )
\begin{align}
 \expval{{ \rm H}_\alpha }{\widetilde{g}_R^1} \geq  \expval{{ \rm H}_{\rm osc} }{\widetilde{g}_R^1} - C \alpha^{1/3} \; . 
\end{align}
By construction $\widetilde{g}_R^1$ is a $L^2$-normalized function and orthogonal to the harmonic oscillator's ground state and its partial derivatives. Thus $\widetilde{g}_R^1$ is a competitor for a minimizer of the harmonic oscillator's Hessian and we conclude that 
\begin{align}
\label{eq:hessian-f1}
 \expval{{ \rm H}_\alpha }{g_R^1} \geq (1- \alpha^{-1/40})^2 \mathcal{H}_{\rm osc} \; - C \alpha^{19/40} \; . 
\end{align}
Since $\mathcal{H}^{(2)}$ is a convex combination of \eqref{eq:hessian-f2} and \eqref{eq:hessian-f1}, we find 
\begin{align}
\mathcal{H}^{(2)} \geq C  (1- \alpha^{-1/40})^4 \mathcal{H}_{\rm osc}  - C \alpha^{19/40}
\end{align}
and conclude that there exists $\alpha \geq \alpha_0$ such that $\mathcal{H}^{(2)} \geq  C \alpha^{1/2}$ for all $\alpha \geq \alpha_0$. 
\end{proof}

The Hessian's positivity in the strong coupling limit $\alpha \rightarrow \infty$ leads to local coercivity estimates summarized in the following Corollary. 

\begin{cor}
\label{cor:loc-coerc}
There exists $\alpha_0 >0$ and $\kappa, C >0$ (independent of $\alpha$) such that for all $\alpha > \alpha_0$ and $\psi_\alpha \in \mathcal{M}_{\cE_\alpha}$, any $L^2$-normalized $\psi \in  H^1 ( \mathbb{R}^3)$ and $\varphi \in L^2 ( \mathbb{R}^3)$ with 
\begin{align}
\dist_{L^2} \left( \Theta ( \psi_\alpha ), \; \psi \right) \leq \kappa \alpha^{-1/2} 
\end{align}
we have 
\begin{align}
\cG_\alpha ( \psi, \varphi) - e_\alpha \geq C\sqrt{\alpha} \dist_{L^2} \left( \Theta (\psi_\alpha) , \; \psi \right)^2 \; ,\label{eq:coerc-psi-loc}\\
\cG_\alpha (\psi,  \varphi ) - e_\alpha \geq \frac{C}{\sqrt{\alpha}} \dist_{L^2_{\sqrt{\varepsilon}}} \left( \Omega ( \varphi_\alpha ), \; \varphi \right)^2  \; . \label{eq:coerc-phi-loc} 
\end{align}
\end{cor}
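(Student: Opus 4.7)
My plan is to prove both estimates by reducing them to a Hessian bound on $\cE_\alpha$ near $\psi_\alpha$ via completion of the square in $\varphi$.

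\textbf{Step 1: Setup and decomposition.} By invariance of $\cG_\alpha$ under translations and phases, replace $\psi_\alpha$ by the optimal representative $\psi_\alpha^\star\in\Theta(\psi_\alpha)$ realizing $\dist_{L^2}(\Theta(\psi_\alpha),\psi)=\|\psi-\psi_\alpha^\star\|_2$. Choosing $\psi_\alpha^\star>0$ real (Proposition \ref{prop:existence-and-unique}), the first-order optimality conditions in $y$ and $\theta$ yield $\langle\Re\psi,\partial_j\psi_\alpha^\star\rangle=0$ for $j=1,2,3$ and $\langle\Im\psi,\psi_\alpha^\star\rangle=0$. Decomposing $\psi=c\psi_\alpha^\star+\delta f$ with $c=\sqrt{1-\delta^2}\in\mathbb{R}$, $\|f\|_2=1$ and $\langle f,\psi_\alpha^\star\rangle_{L^2}=0$, these conditions force $\Re f\perp\spn\{\psi_\alpha^\star,\partial_1\psi_\alpha^\star,\partial_2\psi_\alpha^\star,\partial_3\psi_\alpha^\star\}$ and $\Im f\perp\psi_\alpha^\star$ in the real $L^2$ inner product, that is, exactly in the subspaces underlying $\mathcal{H}_\alpha^{(1)}$ and $\mathcal{H}_\alpha^{(2)}$. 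Note $\|\psi-\psi_\alpha^\star\|_2^2=2(1-\sqrt{1-\delta^2})\asymp\delta^2$ for $\delta$ small, so lower bounds in $\delta^2$ and in $\|\psi-\psi_\alpha^\star\|_2^2$ are equivalent.

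\textbf{Step 2: Hessian expansion for \eqref{eq:coerc-psi-loc}.} Completing the square as in \eqref{def:E} gives the identity $\cG_\alpha(\psi,\varphi)=\cE_\alpha(\psi)+\|\varepsilon^{1/2}(\varphi+\sqrt{\alpha}\sigma_\psi)\|_2^2$, reducing the first bound to $\cE_\alpha(\psi)-e_\alpha\geq C\sqrt{\alpha}\delta^2$. A second-order Taylor expansion of $\cE_\alpha$ around $\psi_\alpha^\star$ yields
\begin{align*}
\cE_\alpha(\psi)-e_\alpha=\delta^2\bigl(\langle\Im f,\rH\Im f\rangle+\langle\Re f,(\rH-4X_{\psi_\alpha^\star})\Re f\rangle\bigr)+R(\psi),
\end{align*}
the linear term vanishing via the Euler--Lagrange equation of Proposition \ref{prop:existence-and-unique} combined with $\langle f,\psi_\alpha^\star\rangle=0$, and $R(\psi)$ collecting cubic/quartic terms in $\delta f$ from $\alpha\langle\psi,(h*|\psi|^2)\psi\rangle$. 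By the orthogonality of Step 1 and Lemma \ref{lemma:Hessian}, combined with the decomposition $\rH-4X_{\psi_\alpha^\star}=(\rH-X_{\psi_\alpha^\star})-3X_{\psi_\alpha^\star}$ and the uniform bound $|\langle\cdot,X_{\psi_\alpha^\star}\cdot\rangle|\leq\|h\|_\infty\leq C\|\cg\|_2^2$ (Ass. \ref{ass:reg}), the quadratic form is bounded below by $(C\sqrt{\alpha}-C')\geq C_1\sqrt{\alpha}$ for $\alpha\geq\alpha_0$ large. Meanwhile $\|h\|_\infty<\infty$ and $\|f\|_2=1$ force $|R(\psi)|\leq C\alpha\delta^3$. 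Choosing $\kappa$ so small that the hypothesis $\delta\leq\kappa\alpha^{-1/2}$ implies $C\alpha\delta^3\leq\tfrac12 C_1\sqrt{\alpha}\delta^2$ then establishes \eqref{eq:coerc-psi-loc}.

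\textbf{Step 3: Field coercivity \eqref{eq:coerc-phi-loc} and main obstacle.} Pick $\varphi_\alpha^\star:=-\sqrt{\alpha}\sigma_{\psi_\alpha^\star}\in\Omega(\varphi_\alpha)$ and apply the triangle inequality,
\begin{align*}
\dist_{L^2_{\sqrt{\varepsilon}}}(\Omega(\varphi_\alpha),\varphi)^2\leq 2\|\varepsilon^{1/2}(\varphi+\sqrt{\alpha}\sigma_\psi)\|_2^2+2\alpha\|\varepsilon^{1/2}(\sigma_\psi-\sigma_{\psi_\alpha^\star})\|_2^2.
\end{align*}
The first summand is $\leq 2(\cG_\alpha(\psi,\varphi)-e_\alpha)$ by the identity of Step 2, while the second obeys $\|\varepsilon^{1/2}(\sigma_\psi-\sigma_{\psi_\alpha^\star})\|_2=(2\pi)^{3/2}\|v\varepsilon^{-1/2}(\widehat\varrho_\psi-\widehat\varrho_{\psi_\alpha^\star})\|_2\leq C\|v\varepsilon^{-1/2}\|_2\|\varrho_\psi-\varrho_{\psi_\alpha^\star}\|_1\leq C\|\psi-\psi_\alpha^\star\|_2$ via $\|\widehat u\|_\infty\leq\|u\|_1$ and Cauchy--Schwarz. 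Inserting \eqref{eq:coerc-psi-loc} gives $\alpha\|\psi-\psi_\alpha^\star\|_2^2\leq C\sqrt{\alpha}(\cG_\alpha(\psi,\varphi)-e_\alpha)$, which combined with the bound on the first summand proves \eqref{eq:coerc-phi-loc}. The main obstacle is the remainder bound in Step 2: at the sharp threshold $\delta\sim\alpha^{-1/2}$ one must verify that the worst-case cubic interaction $\alpha\delta^3$ is genuinely dominated by the Hessian contribution $C_1\sqrt{\alpha}\delta^2$, so that the smallness of $\kappa$ is just enough to close the estimate.
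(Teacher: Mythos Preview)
Your proof is correct and follows essentially the same route as the paper: both reduce \eqref{eq:coerc-psi-loc} to $\cE_\alpha(\psi)-e_\alpha$ by completing the square in $\varphi$, expand to second order, invoke Lemma~\ref{lemma:Hessian} on the orthogonal complement, and absorb the cubic remainder $O(\alpha\delta^3)$ using $\delta\le\kappa\alpha^{-1/2}$. Your explicit handling of the $4X_{\psi_\alpha}$ versus $X_{\psi_\alpha}$ mismatch via the $O(1)$ bound on $X_{\psi_\alpha}$ is a clean way to bridge the Hessian lemma and the expansion.

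The only noticeable difference is in Step~3: for \eqref{eq:coerc-phi-loc} you use the triangle inequality to split $\|\varphi-\varphi_\alpha^\star\|_{L^2_{\sqrt\varepsilon}}^2$ into the two pieces $\|\varphi+\sqrt\alpha\sigma_\psi\|^2$ and $\alpha\|\sigma_\psi-\sigma_{\psi_\alpha^\star}\|^2$ and bound each by $C\sqrt\alpha(\cG_\alpha-e_\alpha)$ separately, whereas the paper instead converts $\sqrt\alpha\|\psi-\psi_\alpha\|_2^2$ into $\sqrt\alpha\|\sigma_\psi-\sigma_{\psi_\alpha}\|_{L^2_{\sqrt\varepsilon}}^2$ and then completes the square a second time in the combination $\|\Re\varphi+\sqrt\alpha\sigma_\psi\|^2+\kappa_3\sqrt\alpha\|\sigma_\psi-\sigma_{\psi_\alpha}\|^2$. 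Your route is slightly more direct; the paper's version has the minor advantage of producing a sharper constant and making the $\Re\varphi$/$\Im\varphi$ splitting explicit, but the two arguments are equivalent in substance.
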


The proof is based on an expansion of $\cG_\alpha$ around the ground state energy $e_\alpha$. In the following, we provide an expansion of $\cG( \psi, \varphi)$ which will be useful for later proofs.  For this,let $\delta_1 = \psi - \psi_\alpha$ and $\delta_2 = \varphi - \varphi_\alpha$
\begin{align}
\cG_\alpha( \psi, \varphi ) -  e_\alpha  
=&\cG_\alpha ( \psi, \varphi) -  \expval{{\rm h}_{\sqrt{\alpha}\varphi_\alpha}}{\psi_\alpha} -   \| \varepsilon^{1/2 } \varphi_\alpha \|_2^2  \notag\\
=& 2 \Re \bra{\delta_1} \h_{\sqrt{\alpha}\varphi_\alpha} \ket{\psi_\alpha} +\sqrt{\alpha} \expval{ V_{ \delta_2}}{\psi_\alpha} \notag\\
& + 2 \sqrt{\alpha} \Re \bra{\varepsilon^{1/2}\varphi_\alpha}\ket{\varepsilon^{1/2}\delta_2} + \expval{\h_{\sqrt{\alpha}\varphi_\alpha}}{\delta_1} \notag\\
&+ 2 \sqrt{\alpha} \Re \bra{\psi_\alpha}  V_{\delta_2} \ket{\delta_1} +  \| \varepsilon^{1/2}\delta_2 \|_2^2 + O( \sqrt{\alpha} \| \delta_1 \|_{H^1}^2 \| \delta_2 \|_2 ) \; . 
\end{align}
We observe that the sum of the second and third term vanish by the definition of the potential \eqref{def:pot} and $\varphi_\alpha = - \sqrt{\alpha}\sigma_{\psi_\alpha}$. For the last two terms of the r.h.s., we complete the square 
\begin{align}
2 \sqrt{\alpha} &\Re \bra{\psi_\alpha }V_{\delta_2} \ket{ \delta_1} + \| \varepsilon^{1/2}\Re \delta_2 \|_2^2 \notag\\
=&  \|\varepsilon^{1/2} \Re \delta_2 +2  (2\pi)^{3/2}  \sqrt{\alpha} \; v \varepsilon^{-1/2} \; \widehat{  ( \Re\;\delta_1) \psi_\alpha} \|_2^2 \notag\\
& - 4\alpha \expval{ X_{\psi_\alpha}}{\Re\delta_1}   \label{eq:quad-ergaenzen}
\end{align}
where $X_{\psi_\alpha}$ is defined in \eqref{def:X} so that we arrive at 
\begin{align}
\cG_\alpha ( \psi, \varphi) -  e_\alpha   
=&  \expval{\h_{\sqrt{\alpha}\varphi_\alpha} }{\Im\delta_1} + \| \varepsilon^{1/2} \Im \delta_2 \|_2^2 \notag\\
&+  2 \Re \bra{\delta_1} \h_{\sqrt{\alpha}\varphi_\alpha} \ket{\psi_\alpha}+   \expval{\left( \h_{\sqrt{\alpha}\varphi_\alpha} - X_{\psi_\alpha} \right)}{ \Re\delta_1} \notag\\
&+\|\varepsilon^{1/2} \Re \delta_2 +2  (2\pi)^{3/2}  \sqrt{\alpha} \; v \varepsilon^{-1/2} \; \widehat{  ( \Re\;\delta_1) \psi_{0}} \|_2^2 \notag\\
&+ O(\sqrt{ \alpha}  \| \delta_1 \|_2^2 \| \delta_2 \|_2 )  \; . \label{eq:exp-G-tw}
\end{align}
The Euler-Lagrange equation of $\psi_\alpha$ together with the notation \eqref{def:X}, \eqref{def:X} and the observation that by $L^2$-normalization of $\psi_\alpha$ and $\psi$ 
\begin{align}
1 = \| \psi \|_2^2 = \| \psi_\alpha + \delta_1 \|_2^2 = 1 + \| \delta_1 \|_2^2 + 2 \Re \bra{\delta_1}\ket{\psi_\alpha} 
\end{align}
and therefore 
\begin{align}
\label{eq:normalization}
2 \Re \bra{\delta_1}\ket{\psi} = - \| \delta_1 \|_2^2  \; 
\end{align}
we find with \eqref{def:X} 
\begin{align}
\cG_\alpha ( \psi, \varphi) -  e_\alpha  
=&  \expval{{\rm H}_{\alpha} }{Q_\alpha\Im\delta_1} + \| \varepsilon^{1/2} \Im \delta_2 \|_2^2 \notag\\
&+ \expval{Q_\alpha\left( {\rm H}_{\alpha} - 4X_{\psi_\alpha} \right)Q_\alpha}{ \Re\delta_1} \notag\\
&+\|\varepsilon^{1/2} \Re \delta_2 +2  (2\pi)^{3/2}  \sqrt{\alpha} \; v \varepsilon^{-1/2} \; \widehat{  ( \Re\;\delta_1) \psi_\alpha } \|_2^2 \notag\\
&+ O(\sqrt{ \alpha}  \| \delta_1 \|_2^2 \| \delta_2 \|_2 ) +  O (\sqrt{\alpha} \| \delta_1 \|_2^3) + O ( \alpha \| \delta_1  \|_2^4) \; .  \label{eq:exp-G}
\end{align}

\begin{proof}[Proof of Cor. \ref{cor:loc-coerc}] In order to prove \eqref{eq:coerc-psi-loc} first, we remark that it suffices to consider $\psi \in H^1 ( \mathbb{R}^3)$  such that  
\begin{align}
\Im \bra{ e^{i \theta}\psi_\alpha^{y}} \ket{\psi} = 0, \quad \text{and} \quad \Re \bra {e^{i\theta}\psi_\alpha^{y}} \ket{\nabla \psi} = 0 \label{eq:cond}
\end{align}
hold. In particular we assume w.l.o.g. that $\theta =0$ and $y = 0$. 

Furthermore, completing the square we get 
\begin{align} 
\cG_\alpha ( \psi, \varphi) -e_\alpha   \geq& \cE_\alpha ( \psi ) - e_\alpha
\end{align}
and thus it suffices to consider the case $\delta_2 = \sqrt{\alpha} ( \sigma_\psi - \sigma_{\psi_\alpha})$. It follows from \eqref{eq:exp-G} and \eqref{eq:cond} 
\begin{align}
\cE_\alpha ( \psi)  -e_\alpha \geq  \expval{{\rm H}_{\alpha} }{\Im\delta_1} +  \expval{Q_\alpha ' \left(  {\rm H}_{\alpha} - 4 X_{\psi_\alpha} \right) Q'_\alpha }{ \Re\delta_1} + O ( \alpha \| \delta_1 \|_2^3) \; . 
\end{align}
 We recover back the Hessian of $\cE_\alpha$ which by Lemma \ref{lemma:Hessian} is positive for sufficiently large $\alpha \geq \alpha_0$. Moreover,  it follows from Lemma \ref{lemma:Hessian} that there exists $C_1>0$ (independent of $\alpha$) such that with $\| \delta_1 \|_2 \leq \delta \alpha^{-1/2}$ for sufficiently small $\delta>0$ by assumption, we have 
\begin{align}
\label{eq:E1}
\cE_\alpha ( \psi)- e_\alpha \geq C_1 \sqrt{\alpha} \| \delta_1 \|_2^2  \; . 
\end{align}
Moreover, since $\cos( k \cdot x) \leq 1 $ by Lemma \ref{lemma:approx-E} there exists a constants $\kappa_1,\kappa_2 >0$ such that 
\begin{align}\label{eq:E2}
{\rm h}_{\sqrt{\alpha}\varphi_\alpha} - e_\alpha \geq  -  \kappa_1 \Delta - \kappa_1 \sqrt{\alpha} \; . 
\end{align}
Interpolating between \eqref{eq:E1} and \eqref{eq:E2}, there exists a constant $C_2 >0$ such that for $\alpha \geq \alpha_0$ 
\begin{align}
\cE_\alpha ( \psi)- e_\alpha \geq C_2 \alpha^{1/4} \| \delta_1 \|_{H^1}^2  \; . 
\end{align}
By translational and rotational invariance of the energy, we conclude 
\begin{align}
\cE_\alpha ( \psi)- e_\alpha \geq C_2 \alpha^{1/4} \dist_{H^1} \left( \Theta (\psi_\alpha), \; \psi \right) \; . 
\end{align}
\newline 
Second we prove \eqref{eq:coerc-phi-loc}: Completing the square leads to 
\begin{align} 
\cG_\alpha ( \psi, \varphi) -e_\alpha  =& \cE_\alpha ( \psi ) - e_\alpha + \|  \Re \varphi + \sqrt{\alpha} \sigma_{\psi} \|_{L^2_{\sqrt{\varepsilon}}}^2 + \| \Im \varphi \|_{L_{\sqrt{\varepsilon}}^2}
\end{align}
so that we find from \eqref{eq:E1} that there exists $y \in \mathbb{R}^3$ and $\kappa_1 >0$ such that
\begin{align}
\cG_\alpha ( \psi, \varphi) -e_\alpha  \geq& \sqrt{\alpha} \kappa_1 \| \psi - \psi_\alpha^y \|_{2}^2 +  \| \Re \varphi + \sqrt{\alpha} \sigma_{\psi}   \|_{L^2_{\sqrt{\varepsilon}}}^2 +  \| \Im \varphi \|_{L^2_{\sqrt{\varepsilon}}}^2 \; . 
\end{align}
By regularity of $\varepsilon, v$, there exists $\kappa_3 >0$ such that 
\begin{align} 
\cG( \psi, \varphi) -e_\alpha  \geq& \kappa_3 \sqrt{\alpha} \; \|  \sigma_{\psi} - \sigma_{\psi_\alpha^y}  \|_{L^2_{\sqrt{\varepsilon}}}^2+  \| \varphi + \sqrt{\alpha} \sigma_{\psi}  \|_{L^2_{\sqrt{\varepsilon}}}^2
\end{align}
and we find by completing the square 
\begin{align} 
\cG( \psi, \varphi) -e _\alpha  \geq& \| ( 1 + \kappa_3/\sqrt{\alpha} )^{1/2} \left( \sigma_{\psi} - \sigma_{\psi_\alpha^y} \right)  - ( 1 +\kappa_3/ \sqrt{\alpha} )^{-1/2} \sqrt{\alpha} \left(\varphi +\sqrt{\alpha} \sigma_{\psi_\alpha^y} \right) \|_{L^2_{\sqrt{\varepsilon}}}^2  \notag\\
&+ \frac{ \kappa_3}{\sqrt{\alpha} +  \kappa_3} \| \varphi + \sqrt{\alpha } \sigma_{\psi_\alpha^y} \|_{L^2_{\sqrt{\varepsilon}}}^2  +  \|  \Im \varphi \|_{L^2_{\sqrt{\varepsilon}}}^2  \notag \\
\geq&   \frac{\kappa_4}{\sqrt{\alpha}} \;  \| \varphi + \sqrt{\alpha} \sigma_{\psi_\alpha^y}  \|_{L^2_{\sqrt{\varepsilon}}}^2   +  \| \Im \varphi \|_{L^2_{\sqrt{\varepsilon}}}^2  \label{eq:exp-G-phi}
\end{align}
for a constant $\kappa_4 >0$. We conclude that 
\begin{align} 
\cG( \psi, \varphi) -e _\alpha  \geq& \frac{\kappa_4}{\sqrt{\alpha}}  \dist_{L^2_{\sqrt{\varepsilon}} }   \left( \Omega ( \varphi_\alpha ), \; \Re \varphi \right)  \; . 
\end{align}
\end{proof}


\subsubsection*{Global Coercivity estimates} The Hessian's positivity shows the validity of global coercivity estimates of the energy. For this, we additionally have to assume that the ground state $\psi_\alpha$ is unique up to translations and phase, i.e. that $\mathcal{M}_{\mathcal{E}_\alpha} = \Theta ( \psi_\alpha)$. We remark that to prove the ground states uniqueness up to translations and phase by the local coercivity estimates Corollary \ref{cor:loc-coerc}, one needs an improved approximation of the ground state than in to Prop. \ref{prop:gs}.


\begin{cor}
\label{cor:coerc}
Assume that the ground state $\psi_\alpha$ of $\mathcal{E}_\alpha$ is unique up to translations and rotations. There exists universal constants $\alpha_0 \geq 0$ and  $C >0$ (independent of $\alpha$) such that for any $L^2$-normalized $\psi \in  H^1 ( \mathbb{R}^3)$ and $\varphi \in L^2 ( \mathbb{R}^3)$ with 
we have for all $\alpha \geq \alpha_0$
\begin{align}
\cG_\alpha ( \psi, \varphi) - e_\alpha \geq C\sqrt{\alpha} \dist_{L^2} \left( \Theta (\psi_\alpha) , \; \psi \right)^2 \; , \label{eq:coerc-psi}\\
\cG_\alpha (\psi,  \varphi ) - e_\alpha \geq \frac{C}{\sqrt{\alpha}} \dist_{L^2_{\sqrt{\varepsilon}}} \left( \Omega ( \varphi_\alpha ), \; \varphi \right)^2 \; .  \label{eq:coerc-phi}
\end{align}
\end{cor}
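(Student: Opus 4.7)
The goal is to upgrade the local coercivity of Corollary \ref{cor:loc-coerc}---valid only on the $\kappa\alpha^{-1/2}$-neighborhood of $\Theta(\psi_\alpha)$---to a global estimate, using the hypothesis that the ground state is unique up to symmetries.

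First, I observe that the square-completion in $\varphi$ performed in the proof of Corollary \ref{cor:loc-coerc} (equations \eqref{eq:quad-ergaenzen}--\eqref{eq:exp-G-phi}) is purely algebraic and does not rely on $\dist_{L^2}(\Theta(\psi_\alpha),\psi)$ being small. Thus once the wave-function estimate \eqref{eq:coerc-psi} is established globally, the field estimate \eqref{eq:coerc-phi} follows by repeating the manipulation leading to \eqref{eq:exp-G-phi} verbatim. I therefore focus on \eqref{eq:coerc-psi}.

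Set $d := \dist_{L^2}(\Theta(\psi_\alpha), \psi)$ and split on its size. For $d \leq \kappa\alpha^{-1/2}$, Corollary \ref{cor:loc-coerc} directly gives \eqref{eq:coerc-psi}. For $d > \kappa\alpha^{-1/2}$, using $d \leq \|\psi\|_2 + \|\psi_\alpha\|_2 = 2$, it suffices to establish the uniform lower bound
\begin{equation*}
\cE_\alpha(\psi) - e_\alpha \;\geq\; c\sqrt{\alpha}
\end{equation*}
for some $c>0$ independent of $\alpha\geq\alpha_0$.

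I prove this uniform gap by contradiction. Suppose there are sequences $\alpha_n\geq\alpha_0$ and $L^2$-normalized $\psi_n\in H^1(\mathbb{R}^3)$ with $\dist_{L^2}(\Theta(\psi_{\alpha_n}),\psi_n)>\kappa\alpha_n^{-1/2}$ and $(\cE_{\alpha_n}(\psi_n)-e_{\alpha_n})/\sqrt{\alpha_n}\to 0$. If $(\alpha_n)$ stays bounded, concentration compactness as in Lemma \ref{lemma:existence} combined with the uniqueness assumption forces $\dist_{L^2}(\Theta(\psi_{\alpha_n}),\psi_n)\to 0$ along a subsequence, contradicting the uniform positive lower bound $\kappa\alpha_n^{-1/2}$. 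If $\alpha_n\to\infty$, pass to the $L^2$-isometric rescaling $\tilde\psi_n(x):=\alpha_n^{-3/8}\psi_n(\alpha_n^{-1/4}x)$, which preserves $\dist_{L^2}$ and brings the oscillator ground state onto a universal Gaussian. Taylor-expanding $h$ via Assumption \ref{ass:reg} and using Lemma \ref{lemma:approx-E}(a), the rescaled deficit equals $\langle\tilde\psi_n|\tilde{\mathrm{h}}_{\rm osc}|\tilde\psi_n\rangle-\tilde{e}_{\rm osc}+o(1)\to 0$ for the universal harmonic oscillator $\tilde{\mathrm{h}}_{\rm osc}$. Its $\alpha$-independent quadratic coercivity then provides $\dist_{L^2}(\{e^{i\theta}\psi_{\rm osc}\},\tilde\psi_n)\to 0$ at a quantitative rate; combining this with Lemma \ref{lemma:approx-E}(b)--(d) for the rescaled ground state via the triangle inequality and the isometry yields $\dist_{L^2}(\Theta(\psi_{\alpha_n}),\psi_n)\to 0$ at a rate beating $\kappa\alpha_n^{-1/2}$, producing the contradiction.

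The main technical obstacle lies precisely in the regime $\alpha_n\to\infty$: since $\kappa\alpha_n^{-1/2}\to 0$ as well, the compactness argument must be made \emph{quantitative} at the correct scale rather than merely yielding $\dist_{L^2}\to 0$. This is the step where the Hessian positivity of Lemma \ref{lemma:Hessian} at the rescaled oscillator scale must be invoked to convert the $o(1)$ rescaled-energy deficit into an $L^2$-distance estimate on the correct order, and where the $\alpha^{-1/20}$ rate of Lemma \ref{lemma:approx-E}(b) has to be combined with the universal coercivity of $\tilde{\mathrm h}_{\rm osc}$ in a way compatible with the $\alpha^{-1/2}$ tolerance.
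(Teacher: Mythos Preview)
Your dichotomy on $d$ contains a genuine gap. The uniform lower bound $\mathcal{E}_\alpha(\psi) - e_\alpha \geq c\sqrt{\alpha}$ for all $\psi$ with $d > \kappa\alpha^{-1/2}$ is simply \emph{false}. Perturb $\psi_\alpha$ by $t\delta$ with $t = 2\kappa\alpha^{-1/2}$ and $\delta$ a unit vector in the lowest nontrivial eigendirection of the Hessian $\mathcal H_\alpha$; then $d \approx 2\kappa\alpha^{-1/2} > \kappa\alpha^{-1/2}$, but by Lemma~\ref{lemma:Hessian} the energy deficit is only of order $\sqrt{\alpha}\,t^2 \sim \alpha^{-1/2}$, not $c\sqrt{\alpha}$. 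You essentially spot this in your last paragraph---the $\alpha^{-1/20}$ rate of Lemma~\ref{lemma:approx-E}(b) cannot beat the $\alpha^{-1/2}$ threshold---but this is not a technical obstacle to be finessed; it reflects that the intermediate claim you are trying to prove is wrong, so no rescaling argument can rescue it.

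The paper avoids the dichotomy altogether by running the contradiction directly on the coercivity \emph{ratio}. Assume \eqref{eq:coerc-psi} fails: then there is a sequence $\psi_n$ (satisfying the orthogonality conditions \eqref{eq:cond}) with $\big(\mathcal{E}_\alpha(\psi_n)-e_\alpha\big)/\|\psi_n-\psi_\alpha\|^2 \to 0$. Since the distance is bounded, $(\psi_n)$ is a minimizing sequence for $\mathcal{E}_\alpha$, hence $H^1$-bounded; the concentration--compactness argument of Lemma~\ref{lemma:existence} together with the uniqueness hypothesis forces $\psi_n \to \psi_\alpha$ (after recentering). Thus eventually $\|\psi_n-\psi_\alpha\|_2 < \kappa\alpha^{-1/2}$, and Corollary~\ref{cor:loc-coerc} yields $\mathcal{E}_\alpha(\psi_n)-e_\alpha \geq C\sqrt{\alpha}\,\|\psi_n-\psi_\alpha\|^2$, contradicting the vanishing ratio. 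The key point is that the contradiction sequence is \emph{driven into} the local regime, so one never needs a separate bound far from $\Theta(\psi_\alpha)$. Your observation that \eqref{eq:coerc-phi} follows from \eqref{eq:coerc-psi} by the square-completion leading to \eqref{eq:exp-G-phi} is correct and matches the paper.
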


The proof follows the arguments presented in \cite[Lemma 2.6]{FRS_gap}. 

\begin{proof}
We first prove the global bound \eqref{eq:coerc-psi}. Then the second bound \eqref{eq:coerc-phi} follows similarly to the proof of Cor. \ref{cor:loc-coerc}.

In order to prove \eqref{eq:coerc-psi} we remark (similarly to the proof of Cor. \ref{cor:loc-coerc}) that is suffices to consider $\psi \in H^1 ( \mathbb{R}^3)$  such that  
\begin{align}
\Im \bra{ e^{i \theta}\psi_\alpha^{y}} \ket{\psi} = 0, \quad \text{and} \quad \Re \bra {e^{i\theta}\psi_\alpha^{y}} \ket{\nabla \psi} = 0 \label{eq:cond-2}\;. 
\end{align}
W.l.o.g. we assume $y=0 $ and $\theta=0$. By contradiction we assume that there does not exist a universal constant $C>0$ such that \eqref{eq:coerc-psi} holds. Then there exists a sequence of functions $\psi_n \in L^2( \mathbb{R}^3)$ with $\| \psi_n \|_{L^2( \mathbb{R}^3)} =1 $ such that
\begin{align}
\mathcal{E}_\alpha ( \psi_n ) \leq  e_\alpha + \frac{1}{n} \| \psi_n - \psi_\alpha \|_{H^1}^2 \leq \frac{2}{n} \| \psi_n \|_{H^1} - C \alpha \; . 
\end{align} 
It follows that $\mathcal{E}_\alpha ( \psi_n ) \geq \frac{1}{2} \| \nabla \psi_n \|_2^2 - C \alpha $. Therefore $\psi_n$ is uniformly bounded in $H^1$ and moreover a minimizing sequence. With similar arguments as in the proof of Lemma \ref{lemma:existence} $\psi_n$ converges to an element of the set of minimizes $\Theta (\psi_\alpha )$ given by \eqref{eq:cond-2} through $\psi_\alpha$. This is a contradiction since Cor. \ref{cor:loc-coerc} shows that locally coercivity estimates hold true. 
\end{proof}

Another consequence of the Hessian's approximate behavior is the following property.

\begin{cor} 
\label{cor:gap}
There exists $\alpha_0$ and a constant $C>0$ (independent of $\alpha$) such that for all $\alpha \geq \alpha_0$, we have $\rH - e_\alpha > C \sqrt{\alpha}$. 
\end{cor}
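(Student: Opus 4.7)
The plan is to deduce the stated spectral gap directly from the positivity of the Hessian established in Lemma~\ref{lemma:Hessian}, specifically from the bound $\mathcal{H}_\alpha^{(1)} \geq C\alpha^{1/2}$. The key observation is that the quantity $\mathcal{H}_\alpha^{(1)}$ is, by definition, nothing but the infimum of $\langle f, \rH f\rangle$ over $L^2$-normalized $f \in H^1(\mathbb{R}^3)$ orthogonal to $\psi_\alpha$, so the corollary is essentially a reformulation of (the first part of) Lemma~\ref{lemma:Hessian} once one identifies the zero mode of $\rH$ correctly.

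The first step is to record that $\psi_\alpha$ lies in the kernel of $\rH$. Indeed, by Proposition~\ref{prop:existence-and-unique} the ground state $\psi_\alpha$ satisfies the Euler--Lagrange equation $(\rh_{\sqrt{\alpha}\varphi_\alpha}-\mu_{\psi_\alpha})\psi_\alpha=0$, i.e.\ $\rH\psi_\alpha = 0$. Thus $0 \in \mathrm{spec}(\rH)$ with eigenfunction $\psi_\alpha$, and the remaining spectrum consists of the eigenvalues of $\rH$ restricted to $(\mathrm{span}\{\psi_\alpha\})^\perp$.

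The second step is to invoke Lemma~\ref{lemma:Hessian} for $\mathcal{H}_\alpha^{(1)}$. By the variational characterization of the bottom of the spectrum and the definition
\[
\mathcal{H}_\alpha^{(1)} = \inf_{\psi_\alpha \in \mathcal{M}_{\cE_\alpha}} \inf_{\substack{f \in H^1(\bR^3),\, \|f\|_2=1\\ f \perp \psi_\alpha}} \langle f, \rH f\rangle,
\]
Lemma~\ref{lemma:Hessian} yields $\langle f, \rH f\rangle \geq C\sqrt{\alpha}$ for every unit vector $f$ orthogonal to $\psi_\alpha$, for all $\alpha \geq \alpha_0$. In operator form this is exactly $\rH \geq C\sqrt{\alpha}\, Q_\alpha$ with $Q_\alpha = 1-\ket{\psi_\alpha}\bra{\psi_\alpha}$, which is the claimed spectral gap: the first excited eigenvalue of $\rh_{\sqrt{\alpha}\varphi_\alpha}$ lies at least $C\sqrt{\alpha}$ above the ground state energy $\mu_{\psi_\alpha}$.

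Since all the work is already contained in Lemma~\ref{lemma:Hessian}, there is no real obstacle; the only thing to verify is the matching of notation and the fact that the Euler--Lagrange equation supplies the unique (up to translations/phase, which act trivially on the spectrum) zero mode of $\rH$. The mild subtlety is that if $\mathcal{M}_{\cE_\alpha}$ is not known to be reduced to the orbit $\Theta(\psi_\alpha)$, the bound in Lemma~\ref{lemma:Hessian} is formulated uniformly over all minimizers, which is exactly what is needed here and does not require additional uniqueness input.
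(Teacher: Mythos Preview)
Your proof is correct and in fact more direct than the paper's. The paper deduces the gap from the global coercivity estimates of Corollary~\ref{cor:coerc}, whereas you go straight to Lemma~\ref{lemma:Hessian} and observe that $\mathcal{H}_\alpha^{(1)}$ is, by definition, precisely the infimum of $\langle f,\rH f\rangle$ over unit vectors $f\perp\psi_\alpha$, i.e.\ the first excited eigenvalue of $\rH$. Your route has the additional advantage that it does not invoke the uniqueness assumption on the minimizer which Corollary~\ref{cor:coerc} carries but which is not stated as a hypothesis of Corollary~\ref{cor:gap}; since Lemma~\ref{lemma:Hessian} is proved uniformly over $\psi_\alpha\in\mathcal{M}_{\cE_\alpha}$, your argument is self-contained. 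The paper's detour through coercivity also leaves implicit the step from $\cG_\alpha(\psi,\varphi_\alpha)-e_\alpha=\langle\psi,\rH\psi\rangle$ together with a lower bound on $\dist_{L^2}(\Theta(\psi_\alpha),\psi)$ for $\psi\perp\psi_\alpha$, which you bypass entirely.
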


\begin{proof} The existence of a spectral gap of $\rH$ of order $\sqrt{\alpha}$ follows immediately from the global coervitiy estimates in Cor. \ref{cor:coerc}. 

\end{proof}

\subsection{Proof of Propositions \ref{prop:existence-and-unique},\ref{prop:gs}} 
\label{sec:proos-props}

In this section we prove Proposition \ref{prop:existence-and-unique}, \ref{prop:gs} based on the results proven before. 

\begin{proof}[Proof of Proposition \ref{prop:existence-and-unique}] The proposition follows immediately from Lemma \ref{lemma:existence}. 
\end{proof}

\begin{proof}[Proof of Proposition \ref{prop:gs}] The proposition follows from Lemma \ref{lemma:approx-E} and Corollary \ref{cor:coerc}.
\end{proof}

\section{Proof for traveling waves}
\label{sec:proofs-tw}

In this section, we prove Proposition \ref{prop:tw} on existence of subsonic traveling waves of the regularized Landau-Pekar equations. 

For this, we remark that it follows from the regularized polaron's dynamics that the traveling wave \eqref{def:tw} satisfies 
\begin{align}
\label{eq:tw}
- i \rv \cdot  \nabla \psi_\rv = \left( \h_{\varphi_\rv} + e_\rv \right) \psi_\rv, \quad  -\varepsilon^{-1}\rv \cdot k \varphi_\rv =  \varphi_\rv + \sqrt{\alpha} \sigma_{\psi_\rv} \; . 
\end{align} 

 \begin{proof} [Proof of Proposition \ref{prop:tw}] 
\noindent
\newline 
\textbf{Proof of {\it (a)}:} First, we prove the existence of traveling waves for sufficiently small velocities. Traveling wave solutions of \eqref{def:tw} are stationary points of the action functional $\cI_\rv $ given by
 \begin{align}
 \mathcal{J}_\rv (\psi, \varphi) := \expval{\h_{\varphi}}{\psi} + \| \varepsilon^{1/2} \varphi \|_2^2+e_\rv \| \psi \|_2^2 - \rv \cdot  \left(\expval{i\nabla} {\psi}+ \expval{p }{\varphi}\right) \; . 
 \end{align}
In the following we show that there exists a minimizer $(\psi_\rv, \varphi_\rv)$ of $\mathcal{J}_\rv$, and thus a traveling wave solution.  Since 
\begin{align}
\vert \rv \vert  \big\vert \expval{i \nabla}{\psi} \big\vert \leq \frac{1}{2} \| \nabla \psi \|_2^2  + 2 \rv^2 \| \psi \|_2^2, \quad 
\vert \rv \vert  \big\vert \expval{i k}{\varphi} \big\vert \leq \vert \rv \vert  \| \vert  p \vert^{1/2} \varphi \|_2^2   
 \end{align}
and for arbitrary $\delta>0$
 \begin{align}
\expval{V_\varphi}{\psi} \leq C  \| \varepsilon^{1/2} \varphi \|_2 \| \psi \|_2^2 \leq \delta \| \varepsilon^{1/2} \Re \varphi \|_2^2 + C_\delta \alpha \| \psi \|_2^4 \; , 
\end{align}
the action functional is bounded from below by 
 \begin{align}
 \mathcal{J}_\rv (\psi, \varphi ) &\geq \frac{1}{2} \| \nabla \psi \|_2^2 + \left( 1 - \delta \right) \| \varepsilon^{1/2} \varphi \|_2^2 - \vert \rv  \vert \| \vert p \vert^{1/2} \varphi \|_2^2 -C_\delta \alpha - C \rv^2 
 \end{align} 
We remark that in the last step we used the $L^2$-normalization of $\psi$.  By Ass. \ref{ass:super}, we have  $\varepsilon (k)\geq \cv \vert k \vert$ and thus, 
 \begin{align}
 \mathcal{J}_\rv (\psi, \varphi ) &\geq  \frac{1}{2} \| \nabla \psi \|_2^2 + \left(  \cv ( 1- \delta) - \rv \right) \| \vert p \vert^{1/2} \varphi \|_2^2 -C_\delta \alpha - C \rv^2 \; . 
 \end{align}
For $\vert \rv \vert \leq \cv - \delta$, it follows that any minimizing sequence $(\psi_n,\varphi_n)_{n \in \mathbb{N}}$ of $\cI_\rv$ is uniformly bounded in $H^1( \bR^3) \times L^2_{\sqrt{\vert \cdot \vert}}(\bR^3)$. Any minimizing sequence, thus, uniformly bounded and weakly converging in $H^1 ( \mathbb{R}^3) \times  L_{\sqrt{\varepsilon}} ( \mathbb{R}^3)$ to a limiting functional that is possibly zero (by translational invariance of the action). With similar arguments as after Eq. \eqref{eq:bound-E}, there exists a sequence $( \psi_n^{y_n}, e^{i p_n }\varphi_n)_{n \in \mathbb{N}}$ that converges strongly in $L^2(\bR^3) \times L^2_{\sqrt{\varepsilon}}( \bR^3)$ to a pair of non-zero limiting functions.  By semi-lower continuity of the $H^1$- and the $L_{\sqrt{\varepsilon}}$-norm, and 
\begin{align}
\vert \expval{V_{\varphi_n}}{\psi_n} -  \expval{V_{\varphi_\rv}}{\psi_\rv}\vert \leq C \| \psi_n - \psi_\rv \|_2 + C \| \varphi_n - \varphi_\rv \|_{L^2_{\sqrt{\varepsilon}}}
\end{align}
we conclude that the action functional $ \mathcal{J}_\rv$ attains its infimum for $( \psi_\rv, \varphi_\rv)$ which is a non-zero traveling wave solution \eqref{eq:tw}. 
\newline 
\textbf{Proof of scaling properties:} As a preliminary step towards proving Proposition \ref{prop:tw} {\em (b)} we shall first prove that
\begin{align}
\label{eq:tw-scaling}
\| x^2 \psi_\rv \|_2 \leq C \alpha^{-1/2} \; 
\end{align}
i.e. that the traveling wave satisfies similar scaling properties as the harmonic oscillator. We proceed similarly as in the proof of Lemma \ref{lemma:approx-E} {\it (b)}.  For this let ${\rm H}_0 := - \Delta /(2m) + i \rv \cdot \nabla$. Then the traveling wave equation \eqref{eq:tw} implies 
\begin{align}
\left( {\rm H}_0 + e_\rv \right)\psi_\rv = V_{\sqrt{\alpha}\varphi_\rv} \psi_\rv  \; . 
\end{align}
Since ${\rm H}_0 \geq - \rv^2/4$ and $e_\rv \geq - e_\alpha + \rv^2/4 $, the resolvent $\left( {\rm H}_0 + e_\rv \right)^{-1}$ is well defined and we can write 
\begin{align}
\label{eq:id-tw}
\psi_\rv = \left( {\rm H}_0 + e_\rv \right)^{-1}V_{\sqrt{\alpha}\varphi_\rv} \psi_\rv  \; . 
\end{align}
The resolvent's Green's function is given in terms of the inverse Fourier transform $\cF^{-1}$ of 
\begin{align}
\label{eq:Green}
G_{e_\rv} (z) = \cF^{-1} \left[  \big( p^2 /(2m)- \rv \cdot  p + e_\rv  \big)^{-1} \right] (z)
\end{align}
and can (by self-adjointness of ${\rm H}_0$ and functional calculus) explicitly computed. In fact by functional calculus we have 
\begin{align}
\left( H_0 + e_\rv \right)^{-1} = \int_0^\infty e^{- t ( {\rm H}_0 + e_\rv )  } \; dt = \int_0^t e^{- t e_\rv} e^{-t \left( p^2/(2m) - \rv \cdot p \right) } dt 
\end{align}
which leads with \eqref{eq:Green} to 
\begin{align}
G_{e_\rv} (z) &= \frac{1}{\left( 2 \pi \right)^{3/2}} \int_{\bR^3} \int_0^t e^{i z \cdot p} e^{-t e_\rv} e^{-t \left( p^2/(2m) - \rv \cdot p \right)} dt dp 
\end{align}
and we arrive with Fubini's theorem at 
\begin{align}
G_{e_\rv} (z) = (2m)^{3/2}e^{i z \cdot \rv }\int_0^t e^{- t \left(e_\rv- \frac{\rv^2}{4}\right)} \frac{e^{-\frac{mz^2}{2t}}}{ t^{3/2} } dt  = (2m)^{3/2} \frac{\sqrt{\pi}e^{i z \cdot \rv }}{\vert z \vert } e^{-\sqrt{4 m \left( e_\rv- \frac{\rv^2}{4}\right)} \vert z \vert}
\end{align}
for $\vert z \vert \not=0$.  We plug this identity into \eqref{eq:id-tw} and find that 
\begin{align}
\psi_\rv (x) = \int G_{e_\rv} (x-y) V_{\sqrt{\alpha} \varphi_\rv} (y) \; \psi_\rv (y) \; .
\end{align}
Thus the weighted $L^2$-norm,  we aim to find an upper bound for, becomes 
\begin{align}
\int x^{6} \vert  \psi_\rv (x) \vert^2 dx =& \int x^{6}  G_{e_\rv} (x-y) V_{\sqrt{\alpha} \varphi_\rv} (y) \; \psi_\rv (y) G_{e_\rv}( x- z) V_{\sqrt{\alpha} \varphi_\rv} (z) \psi_\rv (z) dxdydz 
\end{align}
that we can estimate by Cauchy Schwarz with 
\begin{align}
\int x^{6} \vert  \psi_\rv (x) \vert^2 dx 
\leq&  \int x^{2} \; \vert G_{e_\rv} (x-y)\vert^2  \vert V_{\sqrt{\alpha} \varphi_\rv} (y) \; \psi_\rv (y)  \vert^2 dxdy \notag\\
=& \pi\int \frac{ x^{6}}{ \vert x -y \vert^2 } e^{-2 \sqrt{4m \left( e_\rv- \frac{v^2}{4}\right)} \vert x-y \vert} \vert V_{\sqrt{\alpha}\varphi_\rv} (y) \psi_\rv (y) \vert^2 dydx \; . 
\end{align}
With 
\begin{align}
  \| V_{\sqrt{\alpha }\varphi_{\rv}} \psi_{\rv} \|_2^2 \leq C  \alpha \| \varphi_\rv \|_{L_{\sqrt{\varepsilon}}}^2 \| \psi_\rv \|_{2}^2 
\end{align}
and $\| \varphi_\rv \|_{L_{\sqrt{\varepsilon}}} \leq C \sqrt{\alpha}$ from Lemma \ref{lemma:dyn}, we can use a similar splitting of the above integral as in the proof of Lemma \ref{lemma:approx-E} {\it (b)} to then conclude by $e_\rv - \rv^2/4 \geq - e_\alpha \geq C \alpha $ with \eqref{eq:tw-scaling}. 

\textbf{Proof of {\rm (b)}:}  We observe that for $\rv = 0$ a traveling wave solution is given by $\psi_{\rv=0} = \psi_\alpha^y, \varphi_{\rv = 0} = \varphi_\alpha^y$ with $e_{\rv} = \mu_{\psi_\alpha^y}$ for any $y \in \mathbb{R}$. To prove \eqref{eq:tw-cont} it follows (similarly to the proof of Cor. \ref{cor:loc-coerc}) that it suffices to consider the decomposition 
\begin{align}
\label{eq:decomp-tw}
 \psi_{\rv} = \psi_\alpha^y + \delta_1, \quad \varphi_\rv = e^{i y \cdot }\varphi_\alpha + \delta_2, \quad \text{and} \quad e_{\rv} = \mu_{\psi_\alpha^y} + \mu_\rv \; 
\end{align}
with $\bra{\Re \delta_1}\ket{ \nabla \psi_\alpha^y} = 0$,  $\bra{\Im \delta_1}\ket{\psi_\alpha^y} =0$. In particular it follows from Cor. \ref{cor:coerc} and condition (i) that 
\begin{align}
\label{eq:decomp-tw-prop}
\| \delta_1 \|_2 \leq \kappa_1 \alpha^{-1/4}, \quad \| \delta_2 \|_{L_{\sqrt{\varepsilon}}} \leq \kappa_2 \alpha^{1/4} \quad \text{and} \quad \mu_\rv \leq \kappa_3 \sqrt{\alpha}
\end{align}
for sufficiently small $\kappa_1, \kappa_2, \kappa_3 >0$ (independent of $\alpha$).  In the following we assume w.l.o.g. that $y=0$. 

By definition of $\rH$ (see \eqref{def:X}) and the decomposition \eqref{eq:decomp-tw}, we can write the traveling wave equations \eqref{eq:tw} as
\begin{align}
-i \rv \cdot \nabla \left( \psi_\alpha + \delta_1 \right)   =& \left( V_{\sqrt{\alpha}\Re \delta_2} + \mu_{\rv} \right) \psi_\alpha + \left( { \rm H}_\alpha + V_{\sqrt{\alpha} \Re \delta_2} + \mu_\rv \right) \delta_1 \label{eq:tw1}\\
 \varepsilon^{-1}\rv \cdot  k \left( \varphi_\alpha + \delta_2 \right) =& \delta_2 + 2( 2 \pi )^{3/2} \sqrt{\alpha} v\varepsilon^{-1} \left( \widehat{\psi_\alpha \Re \delta_1} \right)  + \sqrt{\alpha} \sigma_{\delta_1} \;  \label{eq:tw2}
\end{align}
and it follows that the phase $\mu_\rv$ is given through the identity 
\begin{align}
\label{eq:tw-phase}
\mu_\rv ( 1 - \frac{1}{2}\| \delta_1 \|_2^2) =  \rv \cdot \bra{\psi_\alpha}\ket{\nabla \Im \delta_1} - \bra{\psi_\alpha} V_{\sqrt{\alpha} \Re \delta_2} \ket{\psi_\alpha} - \bra{\psi_\alpha} V_{\sqrt{\alpha} \Re \delta_2} \ket{ \Re \delta_1} \; .
\end{align} 
Plugging this identity back into \eqref{eq:tw1}, we get 
\begin{align}
\label{eq:delta-1}
\left(  \rH - A \right)  \delta_1 =& - Q_{\rv}  \left( V_{\sqrt{\alpha} \Re \delta_2} \psi_\rv  - \rv \cdot \nabla  \psi_\rv \right) 
\end{align}
where we introduced the notation $Q_\rv = 1- \ket{\psi_\rv}\bra{\psi_\rv} $ and the operator 
\begin{align}
A =&  \frac{\| \delta_1 \|_2^2}{2 - \| \delta_1 \|_2^2}   \left( \rv \cdot  \bra{\psi_\alpha}\ket{\nabla \Im \delta_1} - \bra{\psi_\alpha} V_{\sqrt{\alpha} \Re \delta_2} \ket{ \Re \delta_1}  -  \bra{\psi_\alpha} V_{\sqrt{\alpha} \Re \delta_2} \ket{\psi_\alpha} \right) \; . 
\end{align}
With the decomposition's properties \ref{eq:decomp-tw-prop} we find that 
\begin{align}
\| A \| \leq C \| \delta_1 \|_2^2 \left( \sqrt{\alpha} \| \delta_2 \|_2 +  \sqrt{\alpha}\rv \right) \leq C ( \alpha^{1/4} + \rv ) \; .
\end{align}
Thus by Cor. \ref{cor:gap} there exists $\kappa_4 >0$ such that by assumption $Q_{\rv}(\rH + A )Q_{\rv} \geq \kappa_4 \sqrt{\alpha} $, i.e. we can write  
\begin{align}
\delta_1 = \frac{Q_{\rv}}{\rH - A}  \left( V_{\sqrt{\alpha} \Re \delta_2} \psi_\rv  - \rv \cdot \nabla  \psi_\rv \right) \;. 
\end{align}
The second term of the r.h.s. leads with Proposition \ref{lemma:approx-E} to the desired bound. For the first term we observe that by definition of the potential and radialilty of $v$
 \begin{align}
 Q_{\rv} V_{\sqrt{\alpha} \delta_2} \psi_\rv =&   Q_{\rv} V_{\sqrt{\alpha} \delta_2^s} \psi_\rv
 \end{align}
 where $\delta_2^s$ denotes the symmetric part of $\delta_2$, i.e. $\delta_2^s(k) = \delta_2^s( -k )$. 
We observe that splitting $\delta_2$ into its symmetric $\delta_2^{s} (k) = \delta_2^{s} (-k)$ and anti-symmetric $\delta_2^{a} (k) = - \delta_2^{a} (-k)$ we have from the traveling wave equations \eqref{eq:tw2} 
\begin{align}
\delta_2^{s} = \frac{\varepsilon^{-2} ( k \cdot \rv)^2}{1-\varepsilon^{-2} ( k \cdot \rv)^2} \varphi_\alpha - \frac{\sqrt{\alpha}}{1-\varepsilon^{-2} ( k \cdot \rv)^2} \left(  2 (2 \pi)^{3/2} v\varepsilon^{-1} \widehat{\Re \delta_1 \psi_\alpha} + \; \sigma_{\delta_1} \right) \; . 
\end{align}
Here we used that $\varepsilon^{-2} ( k \cdot \rv)^2 \leq \rv^2/ \cv^2 <1$ by Ass. \ref{ass:super}. Hence 
\begin{align}
Q_{\rv} V_{\sqrt{\alpha} \Re \delta_2} \psi_\rv =& 2 \alpha Q_\rv \int  e^{ik \cdot }   \frac{\varepsilon^{-2} ( k \cdot \rv)^2}{1-\varepsilon^{-2} ( k \cdot \rv)^2} \varphi_\alpha (k) \;  dk \; \;   \psi_\rv \; \notag \\
&- \alpha Q_\rv \int e^{ik \cdot } \frac{ \widehat{h} (k) }{1-\varepsilon^{-2} ( k \cdot \rv)^2} \left( 2 \left( \widehat{\Re \delta_1 \psi_\alpha}\right) (k) + \; \widehat{\varrho}_{\delta_1} (k) \right) dk \; \; \psi_\rv \; . 
\end{align}
We observe that due to the projection $Q_\rv$ the first term of the Taylor expansion of $\cos ( k \cdot )$ vanishes. Thus with $\varepsilon^{-2} ( k \cdot \rv)^2 \leq \rv^2/ \cv^2 <1$ and $\bra{\Re \psi_\alpha} \ket{\psi_\alpha} = - \| \delta_1 \|_2^2$ we arrive at 
\begin{align}
\| V_{\sqrt{\alpha} \delta_2^s} \delta_1 \|_2 \leq& C\rv^2 \alpha   \left( \| x^2 \psi_\rv \|_2 + \| x \psi_\alpha \|_2 \| x \psi_\rv \|_2 \right) \notag \\
&+  C \alpha \left(  \| x^2 \psi_\rv \|_2 \|\delta_1 \|_2^2 + \| \delta_1 \|_2 \| x \psi_\alpha \|_2 \| x \psi_\rv \|_2 +  \| \delta_1 \|_2 \| x \delta_1 \|_2 \| x \psi_\rv \|_2\right)  \; . 
\end{align}
From the scaling properties of the traveling wave \eqref{eq:tw-scaling} and the ground state (see Cor. \ref{cor:gap}) we conclude 
\begin{align}
\| V_{\sqrt{\alpha} \delta_2^s} \delta_1 \|_2 \leq C \sqrt{\alpha} \left( \rv^2 + \kappa_1\alpha^{-1/4} \| \delta_1 \|_2 \right) 
\end{align}
yielding for $\vert \rv \vert \leq 1$
\begin{align}
\left( 1- \kappa_1 \alpha^{-1/4}  \right) \| \delta_1 \|_2 \leq C \vert \rv \vert 
\end{align}
and \eqref{eq:tw-cont-2} follows from \eqref{eq:tw2} resp. \eqref{eq:tw-phase}
\begin{align}
\| \delta_2 \| \leq  C  \sqrt{\alpha} \vert \rv \vert \quad \text{resp.} \quad \mu_{\rv} \leq \alpha^{3/4} \rv^2 \; . 
\end{align}

\end{proof}

\section{Proofs for definitions effective mass}
\label{sec:mass}
  In this section, we prove Theorem \ref{thm:mass-tw} and Theorem \ref{thm:mass} on the definition of the effective mass. 
  
 We remark that the proofs presented in this Section follow ideas from \cite{FRS_mass} where the non-regularized Landau-Pekar equations have been considered. For the non-regularized Landau-Pekar equations traveling waves are conjectured to not exists. However assuming their existence an energy expansion in the vein of the proof of Theorem \ref{thm:mass-tw} was sketched. Furthermore a different approach for a definition of the mass through an energy-velocity expansion was presented. The proof of Theorem \ref{thm:mass} given below uses ideas presented there. 
 
\subsection{Effective mass through traveling waves}
 
We consider the definition of the effective mass through subsonic traveling waves first (whose existence follow from Proposition \ref{prop:tw}). 

\begin{proof}[Proof of Theorem \ref{thm:mass-tw}] Let $\alpha \geq \alpha_0$ large enough and $\vert  \rv \vert  < \cv $. Then, by Proposition \ref{prop:tw} and Prop. \ref{prop:existence-and-unique}, there exists $y,z \in \mathbb{R}^3$ and $\theta \in (0, 2 \pi ]$ such that  
\begin{align}
\| e^{i \theta} \psi_\alpha^y - \psi_\rv \|_2 \leq C \rv, \quad \|e^{i z \cdot } \varphi_\alpha - \varphi_\rv \|_{L_{\sqrt{\varepsilon}}^2} \leq C \sqrt{\alpha} \vert \rv \vert \; 
\end{align}
with $\bra{\Im \psi_\rv}\ket{e^{i \theta} \psi_\alpha} = 0$, $\bra{\Re \nabla  \psi_\rv}\ket{e^{i \theta}\psi_\alpha^y} = 0$ and $\bra{e^{i z \cdot }\varphi_\alpha}\ket{\nabla \varphi_\rv} =0$ and $\psi_\alpha$ is uniquely given (up to translations and changes of phase). W.l.o.g. we assume in the following $y=0, \theta=0$. Then, \eqref{eq:exp-G-phi} shows (with similar arguments as used in \cite{FRS_mass}) that we it suffices to consider $z =0$, too. Thus, we decompose the traveling wave as 
\begin{align}
\left( \psi_\rv, \varphi_\rv \right) =  \left( \psi_\alpha +  \rv  \xi_\rv, \; \varphi_\alpha + \rv  \eta_\rv \right) 
\end{align}
with  $ \|\xi_\rv \|_2 \leq C$ and $\| \eta_\rv\|_{L^2_{\sqrt{\varepsilon}}} \leq C \sqrt{\alpha}$. Note that Proposition \ref{prop:tw} moreover shows that $e_\rv = \mu_{\psi_\alpha}  + O( \alpha^{3/4}\rv^2) $ and the linearisation of  the traveling wave equations \eqref{eq:tw} read 
\begin{align}
\begin{pmatrix}
\nabla \psi_\alpha \\  k \varphi_\alpha
\end{pmatrix}
 = 
 \begin{pmatrix}
 \rH & (2\pi)^{3/2} \sqrt{\alpha} \psi_\alpha \int dk \;  v(k) e^{ik \cdot }   \\ 
  (2\pi)^{3/2}   \sqrt{\alpha} \int dk \; v(k) e^{ik\cdot } \psi_\alpha  & \varepsilon 
 \end{pmatrix}
 \begin{pmatrix}
\xi_\rv \\ \eta_\rv  
\end{pmatrix} \; . 
\end{align}
In particular, it follows 
\begin{align}
& {\rm H}_{\alpha}  \Im \xi_\rv = \nabla \psi_\alpha \label{eq:tw-Imxi} \\
&\varepsilon \Im \eta_\rv = k \varphi_\alpha \label{eq:tw-Imeta}\\
&{\rm H}_{\varphi_\alpha}  \Re \xi_\rv +  \sqrt{\alpha} \psi_\alpha  V_{\Re \eta_\rv} = 0 \label{eq:tw-Rexi} \\
&  \sqrt{\alpha} V_{\psi_\alpha \Re \xi_\rv}  + \varepsilon \Re \eta_\rv = 0 \; .  \label{eq:tw-Reeta}
\end{align}
Combining \eqref{eq:tw-Reeta} and \eqref{eq:tw-Rexi}, we find 
\begin{align}
\left( {\rm H}_{\alpha} - 4 X_{\psi_\alpha} \right) \Re \xi_v = 0 \;. 
\end{align}
As ${\rm H}_{\alpha}$ is invertible on the span of $\partial_1 \psi_\alpha$,  we find from \eqref{eq:exp-G-tw} with \eqref{eq:tw-Imxi} and \eqref{eq:tw-Imeta} for all $ \vert \rv \vert \leq C \alpha^{-1} $ 
\begin{align}
\cG( \psi_\rv, \varphi_{\rv} ) = e_\alpha  + \left(  m +\frac{ 2 (2\pi)^3 \alpha}{3} \| k v \varepsilon^{-3/2} \; \widehat{\varrho}_{\psi_\alpha} \|_2^2  \right) \frac{\rv^2}{2} + O(\alpha \vert \rv\vert^3)  \;. 
\end{align}
\end{proof}

\subsection{Effective mass through energy-momentum expansion}

Here, we consider the definition of the effective mass through the energy-momentum expansion explained in Section \ref{sec:mass}.

\begin{proof}[Proof of Theorem \ref{thm:mass} ]  We first pick a trial state to show an upper bound on $\inf_{\cI_\rp}\cG (\psi, \varphi)$ which we use later for the lower bound. For this, we choose $\alpha_0 >0$ sufficiently large, such that by Proposition \ref{prop:existence-and-unique} {\rm (b)}, there exists a unique (up to translations and phases) pair of minimizers $(\psi_\alpha, \varphi_\alpha )$ of $\cG_\alpha$. 
\newline
\textbf{Proof of the upper bound of {\it (a)}:} It is easy to check that the trial states
\begin{align}
\psi_0  =&  ( 1- \mu_\rp ) \lambda_p^{-1} \psi_{\alpha} +i  \lambda_p \cdot  \; {\rm H}_{\alpha}^{-1} \nabla \psi_\alpha, \quad 
 \varphi_0  (k)   =   \varphi_\alpha (k)  + i (1- \mu_\rp) \lambda_p \cdot   k \;  (\varepsilon (k))^{-1} \varphi_\alpha  (k) 
\end{align} 
with the choice 
\begin{align}
\label{eq:lambda}
( 1- \mu_\rp)  \lambda_\rp  = \frac{\rp}{  m + \frac{ 2 (2\pi)^3 \alpha}{3} \| k v \varepsilon^{-3/2} \; \widehat{\varrho}_{\psi_\alpha}  \|_2^2} 
\end{align}
and $(1-\mu_\rp )^2 + \lambda_\rp^2 =1 $ (i.e. $\mu_p = O(  \alpha^{-2}\rp^2)$)  satisfy constraint \eqref{eq:mean} (using that ${\rm H}_{\alpha}^{-1} \nabla \psi_\alpha = m x \psi_\alpha$) and that for large $\alpha \geq \alpha_0$ 
\begin{align}
\| \psi_\alpha - \psi_0 \|_2 \leq C \alpha^{-1} \rp  \;  \quad \text{and} \quad \| \varphi_\alpha - \varphi_0 \|_2 \leq C \alpha^{-1/2} \rp  \; . 
\end{align}
With these observations, we plug the trail states into the expansion of $\cG_\alpha$ in \eqref{eq:exp-G} and find 
\begin{align}
\cG_\alpha ( \psi_0, \varphi_0) - e_\alpha  \leq&  ( 1- \mu_p)^2 \lambda_\rp^2  \left(  \expval{{\rm H}_{\varphi_\alpha}^{-1}}{ \nabla \psi_\alpha} +\frac{1}{3} \| \varepsilon^{-1/2}  k \varphi_\alpha  \|_2^2\right)  + O(\alpha^{-2} \rp^2)  \notag \\
=&   \frac{ ( 1- \mu_p)^2\lambda_\rp^2 }{2} \left(  m +  \frac{ 2 (2\pi)^3 \alpha}{3} \| k v \varepsilon^{-3/2} \; \widehat{\varrho}_{\psi_\alpha}  \|_2^2\right)  + O(\alpha^{-2} \rp^2)  \notag\\
=&    \left(  m +\frac{ 2 (2\pi)^3 \alpha}{3} \| k v \varepsilon^{-3/2} \; \widehat{\varrho}_{\psi_\alpha}  \|_2^2\right)^{-1} \frac{\rp^2}{2} + O( \alpha^{-2} \rp^2) \;  \label{eq:lb} . 
\end{align}
\noindent 
\newline 
\textbf{Proof of the lower bound of {\it (a)}:} For $ \rp \leq  \alpha^{1/4}$,  it follows from the upper bound and Cor. \ref{cor:coerc} that for any element $( \psi, \varphi) \in \cI_\rp$, there exists $y \in \mathbb{R}^3$ and $\theta, \omega \in (0, 2 \pi ]$ such that  
\begin{align}
\label{eq:lowerbound1}
\alpha^{-1/4} \| e^{i \theta} \psi_\alpha^y - \psi \|_2 = O( \alpha^{-1/2} \rp ) = \alpha^{1/4} \| e^{i \omega}\varphi_\alpha - \varphi \|_{L^2_{\sqrt{\varepsilon}}} \; . 
\end{align}
W.l.o.g. we assume $y=0, \theta =0$ and, with \eqref{eq:exp-G-phi} we consider furthermore $\omega =0$. It follows from the expansion \eqref{eq:exp-G} of the energy $\cG_\alpha$
\begin{align}
\cG_\alpha ( \psi, \varphi) - e_\alpha \geq& \expval{\Ho}{\Im \delta_1} + \| \varepsilon^{1/2} \Im \varphi \|_2^2 + O( \alpha^{-5/4} \rp^3 )   \; .  
\end{align}
Completing the square, we find with \eqref{eq:lambda} 
\begin{align}
\cG_\alpha ( \psi, \varphi) - e_\alpha \geq& \expval{{\rm H}_{\varphi_\alpha}}{\Im \delta_1 - \lambda_\rp{\rm H}_{\varphi_\alpha}^{-1} \nabla \psi_\alpha } \notag\\
&+ \expval{\varepsilon}{ \Im \varphi - \lambda_\rp \varepsilon^{-1}k \varphi_\alpha}  \notag\\
& + 2 \lambda_\rp  \left(  \braket{\nabla \psi_\alpha}{\Im \delta_1} +  \braket{ k \varphi_\alpha}{\varphi} \right)\notag \\
&- \lambda_\rp^2  \expval{\Ho^{-1}}{\nabla \psi_\alpha}- \lambda_\rp^2  \expval{\varepsilon^{-1} }{ k \varphi_\alpha} \notag\\
& + O( \alpha^{-5/4} \rp^3 ) \; . 
\end{align}
For the lower bound, we can neglect the first two lines and obtain 
\begin{align}
\cG( \psi, \varphi) - e_\alpha \geq& 2 \lambda_\rp ( 1- \mu_p)  \left(  \braket{\nabla \psi_\alpha}{\Im \delta_1} +  \braket{ k \varphi_\alpha}{ \Im \varphi} \right)\notag \\
&- \lambda_\rp^2( 1- \mu_p)^2  \expval{\Ho^{-1}}{\nabla \psi_\alpha}- \lambda_\rp^2  \expval{\varepsilon^{-1} }{ k \varphi_\alpha} \notag\\
&+ O( \alpha^{-5/4} \rp^3 )  \; . 
\end{align}
For the first line, we use the constraint (ii)$_{\rp}$, \eqref{eq:lowerbound1} together with Ass. \ref{ass:super} and the trivial bound $\| \nabla \psi_0 \| \leq C \sqrt{\alpha}$.The second line, we compute explicitly and obtain 
\begin{align}
\cG( \psi, \varphi) - e_\alpha \geq&   \left(  m +\frac{ 2 (2\pi)^3 \alpha}{3} \| k v \varepsilon^{-3/2} \; \widehat{\varrho}_{\psi_\alpha}  \|_2^2\right)^{-1} \frac{\rp^2}{2}  + O( \alpha^{-5/4} \rp^3 )  \; .\label{eq:ub}
\end{align}
Combining now the upper \eqref{eq:ub} and the lower bound \eqref{eq:lb}, we arrive at Theorem \ref{thm:mass}.

\textbf{Proof of {\rm (b)}:} The corresponding Lagrange functional to the minimization problem is given by 
\begin{align}
\label{eq:LM}
\mathcal{L}_\rp ( \psi, \varphi, \lambda, \mu ) := \mathcal{G}_\alpha( \psi, \varphi) - \lambda \left( \expval{i \nabla}{\psi} + \expval{ik}{ \varphi} - \rp \right) - \mu \left( \bra{\psi}\ket{\psi}  \right)  \; .
\end{align}
Thus any minimizer satisfies the traveling wave equations with velocity $\lambda$, i.e. 
\begin{align}
- i \lambda \partial_1 \psi_\rp = \left( h_{\varphi_\rp} - \mu \right) \psi_\rp , \quad \lambda k_1 \varphi_\rp = \varepsilon \varphi_\rp + ( 2 \pi ) ^{3/2} \sqrt{\alpha} v \widehat{\varrho}_{\psi_\rp} \; . 
\end{align}
By definition of the set $\mathcal{I}_\rp$ in \eqref{def:Ip}, Proposition \ref{prop:tw} shows that we can decompose 
\begin{align}
\psi_\rp = \psi_\alpha + \lambda \delta_1, \quad \varphi_\rp = \varphi_\alpha + \lambda \delta_2
\end{align}
with $\| \delta_1 \|_2 \leq C \alpha^{-1/4}$ and $ \| \varepsilon^{1/2} \delta_2 \|_2 \leq C \alpha^{1/4}$. The coercivity estimates from Corollary \ref{cor:coerc} together with the upper bound of part {\it (a)} then show  
\begin{align}
\lambda \| \delta_1 \|_2 \leq C \alpha^{-3/4}\rp, \quad \text{and} \quad \lambda \| \varepsilon^{1/2} \delta_2 \|_2 \leq C \alpha^{-1/4} \rp \;. 
\end{align}
Together with the traveling waves equation it follows from constraint (ii)$_\rp$ for all $\rp \leq 1$
\begin{align}
\rp = \lambda m_{\rm eff} + O( \alpha^{-1} \rp^2) = \lambda m_{\rm eff} + O (\alpha^{-1}\rp^2) \; . 
\end{align}
 Furthermore from the first part of the theorem and Theorem \ref{thm:mass-tw}, we conclude 
 \begin{align}
\mathcal{G}_\alpha( \psi_{\rv'},  \varphi_{\rv'} ) = E_\rp + O( \alpha^{-2} \rp^3) \; . 
 \end{align}
 where $( \psi_{\rv'},  \varphi_{\rv'} )$ denotes a traveling wave with velocity $\rv' =   m_{\rm eff}^{-1} \rp + O( \alpha^{-1} \rp^2)$. 
\end{proof} 

\subsection*{Acknowledgments} S.R. would like to thank Robert Seiringer for fruitful discussions, Krzysztof My\'sliwy for helpful remarks and the reviewer for careful reading and useful comments. Funding from the European Union's Horizon 2020 research and innovation program under the ERC grant agreement No. 694227 is gratefully acknowledged.

\subsection*{Data availability} Data sharing not applicable to this article as no datasets were generated or analyzed during the current study.

\section*{Declarations}

\subsection*{Conflict of interest} The authors declare that they have no conflict of interest.

\bibliographystyle{siam}

\end{document}